\newcommand{\ket}[1]{|#1\rangle}
\newcommand{\bra}[1]{\langle #1 |}
\newcommand{\ketbra}[1]{| #1\rangle \langle #1|}
\newcommand{\id}{\ensuremath{\mathds{1}}}
\newcommand{\eins}{\ensuremath{\mathds{1}}}
\newcommand{\tr}{\ensuremath{{\rm{Tr}}}}
\newcommand{\mean}[1]{\ensuremath{ \left\langle #1 \right\rangle }}
\newcommand{\muell}[1]{}
\newcommand{\RR}{\mathcal{R}}
\newtheorem{theorem}{Theorem}
\newtheorem{corollary}[theorem]{Corollary}
\newtheorem{observation}[theorem]{Observation}
\newtheorem{lemma}[theorem]{Lemma}
\begin{document}

\title{Symmetries in quantum networks lead to no-go theorems for entanglement distribution and to verification techniques} 

\author{Kiara Hansenne}
\affiliation{Naturwissenschaftlich-Technische Fakultät, 
Universität Siegen, Walter-Flex-Straße 3, 57068 Siegen, Germany}

\author{Zhen-Peng Xu}
\thanks{\tt{zhen-peng.xu@uni-siegen.de}}
\affiliation{Naturwissenschaftlich-Technische Fakultät, 
Universität Siegen, Walter-Flex-Straße 3, 57068 Siegen, Germany}

\author{Tristan Kraft}
\affiliation{Institute for Theoretical Physics, University of Innsbruck, Technikerstraße 21A, 6020 Innsbruck, Austria}
\affiliation{Naturwissenschaftlich-Technische Fakultät, 
Universität Siegen, Walter-Flex-Straße 3, 57068 Siegen, Germany}

\author{Otfried Gühne}
\affiliation{Naturwissenschaftlich-Technische Fakultät, 
Universität Siegen, Walter-Flex-Straße 3, 57068 Siegen, Germany}

\date{\today}

\begin{abstract}
    Quantum networks are promising tools for the implementation of long-range quantum 
    communication. The characterization of quantum correlations in networks and their 
    usefulness for information processing is therefore central for the progress of the 
    field, but so far only results for small basic network structures or pure quantum 
    states are known. Here we show that symmetries provide a versatile tool for the 
    analysis of correlations in quantum networks. We provide an analytical approach 
    to characterize correlations in large network structures with arbitrary topologies. 
    As examples, we show that entangled quantum states with a bosonic or fermionic symmetry can not be generated in networks; moreover, cluster and graph states are not accessible. Our methods can be used to design certification methods for the functionality of specific links in a network and have implications for the design of future network structures.
\end{abstract}

\maketitle

\section{Introduction}
A central paradigm for quantum information processing is the notion 
of quantum networks~\cite{Kimble2008, Simon2017, Wehner2018, Biamonte2019}. 
In an abstract sense, a quantum network consists of quantum systems as nodes on specific locations, where some of the nodes are connected via links. These links correspond to quantum channels, which may be used to send quantum information (e.g., a polarized photon) or where entanglement may be distributed. Crucial building blocks for the links, such as photonic quantum channels between a satellite and ground stations~\cite{Yin2017, Liao2017, Liao2018, Yin2020} or the high-rate distribution of entanglement between nodes \cite{Humphreys2018, Stephenson2020} have recently been experimentally demonstrated. Clearly, such real physical implementations are always noisy and 
may only work probabilistically, but there are various theoretical
approaches to deal with this \cite{khatri,Omar21.1,Omar21.2, vanLoock}.

{For the further progress of the field,} it is essential to design methods 
for the certification and benchmarking of a given network structure or
a single specific link within it. In view of current experimental
limitations, the question arises which states can be prepared in the network 
with moderate effort, e.g., with simple local operations.
This question has attracted some attention, with several lines of research 
emerging. First, the problem has been considered in the classical setting, 
such as the analysis of causal structures {\cite{Chaves2014, Chaves2015, Wolfe2019}} 
or in the study of hidden variable models, where the hidden variables 
are not 
equally distributed between every party~{\cite{Branciard2010,Branciard2012, Fritz2012, Rosset2016, Renou2019, Gisin2020, deVicente21}}. Concerning quantum correlations, several initial 
works appeared in the last year, suggesting slightly different definitions of network entanglement~{\cite{Navascues2020,Kraft2020triangle,Aberg2020,Luo2020}}. These have been
further investigated~{\cite{Kraft2020cm,Luo2021, deVicente21a}} and methods from the classical realm  have been extended to the quantum scenario~{\cite{WolfePRX}}. Still, the present results are limited to simple networks like the triangle network, noise-free networks or networks build from specific quantum states, or 
bounded to small dimensions due to numerical limitations. 

{In this work,} we show an analytical approach to characterize quantum correlations 
in arbitrary network topologies. Our approach is based on symmetries, which may occur 
as permutation symmetries or invariance under certain local unitary transformations. 
Symmetries play an outstanding role in various {fields} of physics~{\cite{Coleman}} and they have already turned out to be useful for various other problems in quantum information theory~{\cite{Werner1989, Vollbrecht2001, Eckert2002, Caves2002, Doherty2002, Toth2009, Eltschka2012,Seevinck2009}}. On a technical level, we combine the inflation technique for quantum networks {\cite{Wolfe2019, Navascues2020}} with estimates known from the study of entropic uncertainty relations~{\cite{Toth2005, Niekamp2012, Wehner2008}}.
Based on our approach, we derive simply testable inequalities in terms of 
expectation values, which can be used to decide whether a given state may 
be prepared in a network or not. With this we can prove that large classes 
of states cannot be prepared in networks using simple communication, for 
instance all multiparticle graph states with up to twelve vertices with 
noise, as well as all mixed entangled permutationally symmetric states. 
This delivers various methods for benchmarking: First, the observation 
of such states in a network certifies the implementation of advanced 
network protocols. Second, our results allow to design simple tests 
for the proper working of a specific link in a given network.

\section*{Results}
{\bf Network entanglement.}
To start, let us define the types of correlations that can be prepared in 
a network. In the simplest scenario Alice, Bob and Charlie aim to prepare 
a tripartite quantum state using three bipartite source states
$\varrho_a$, $\varrho_b$ and $\varrho_c$, see Fig.~\ref{fig:TriangleNetwork}. 
Parties belonging to a same source state are sent to different parties of the network, i.e.\@ $A$, $B$ or $C$, such that the global state reads $\varrho_{ABC} = \varrho_c \otimes \varrho_b \otimes \varrho_a.$ Note that here the order of the parties on both sides of the 
equation is different. After receiving the states, each party may still apply 
a local operation 
$\mathcal{E}_X$ (for $X=A,B,C$), in addition these operations may be 
coordinated by shared randomness. This leads to a global state of the form
\begin{equation} \label{eq:NteState}
    \varrho = 
    \sum_\lambda 
    p_\lambda 
    \mathcal{E}_A^{(\lambda)} \otimes \mathcal{E}_B^{(\lambda)} \otimes \mathcal{E}_C^{(\lambda)} \big[\varrho_{ABC}\big],
\end{equation}
and the question arises, which three-party states can be written in this form and which cannot?

\begin{figure}
	\includegraphics[width=.75\linewidth]{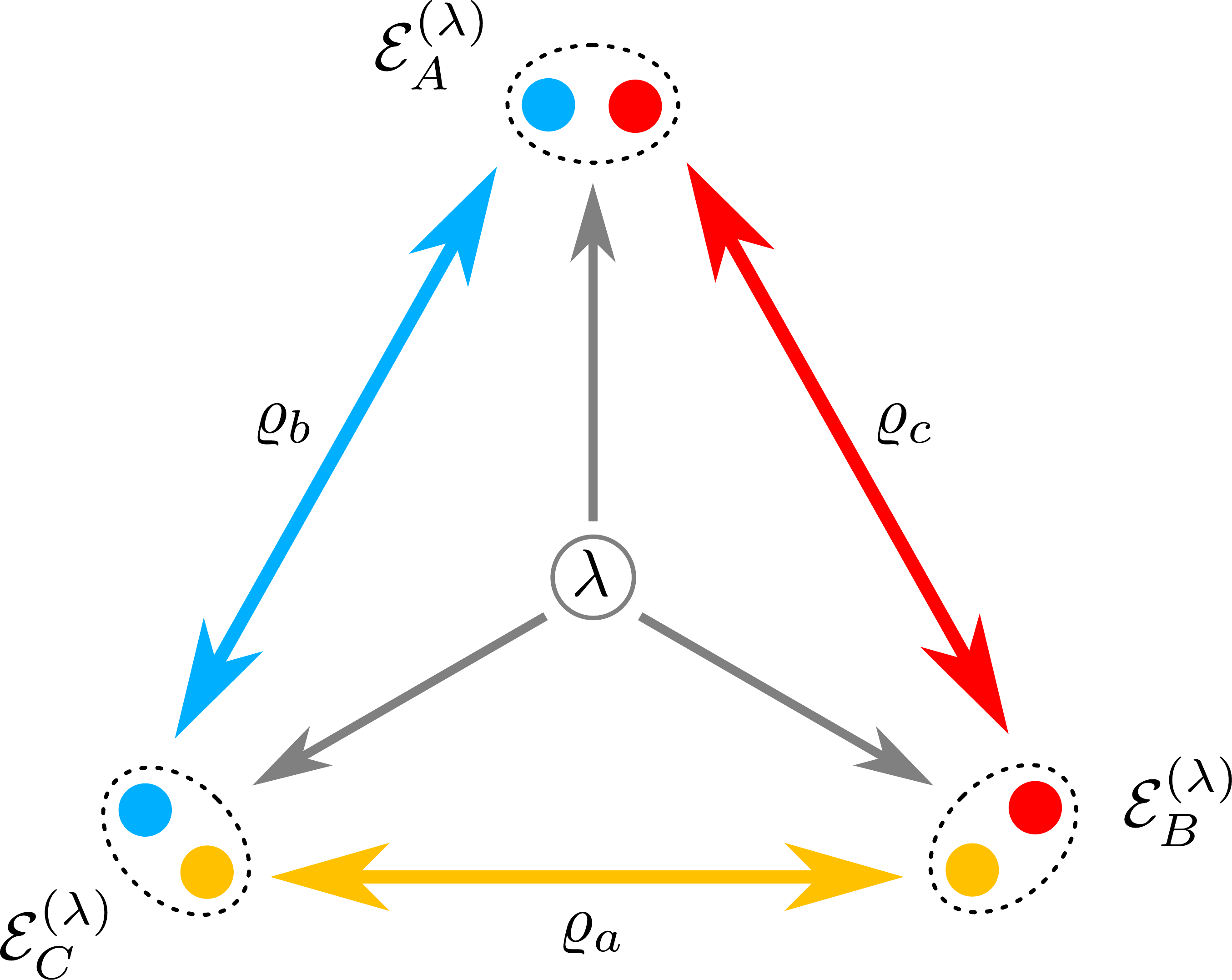}
	\caption{Triangle quantum network. Three sources $\varrho_a$, 
	$\varrho_b$ and $\varrho_c$ distribute parties to three nodes, Alice, 
	Bob and Charlie ($A$, $B$ and $C$). {The  colours yellow, blue and red are associated to the sources $\varrho_a$, $\varrho_b$ and $\varrho_c$ respectively.} Alice, Bob and Charlie each end 
	up with a bipartite system $X=X_1X_2$ on which they perform a local channel $\mathcal{E}_X^{(\lambda)}$ ($X=A,B,C$) depending on a 
	classical random variable $\lambda$.}
	\label{fig:TriangleNetwork}
\end{figure}

Some remarks are in order: First, the definition of network states 
in Eq.~(\ref{eq:NteState}) can directly be extended to more parties 
or more advanced sources, e.g.\@ one can consider the case of five 
parties $A,B,C,D,E,$ where some sources distribute four-party states 
between some of the parties. Second, the scenario considered uses 
local operations and shared randomness (LOSR) as allowed operations, 
which is a smaller set than local operations and classical communication 
(LOCC). In fact, LOCC are much more difficult to implement, but using 
LOCC and teleportation any tripartite state can be prepared 
from bipartite sources. On the other hand, the set LOSR is strictly 
larger than, e.g., the unitary operations considered {in  refs.}~\cite{Luo2020,Kraft2020triangle}. Finally, the discerning reader may have noticed 
that in Eq.~(\ref{eq:NteState}) the state $\varrho_{ABC}$ does not 
depend on the shared random variable $\lambda$, but since the 
dimension of the source states $\varrho_i$ is not bounded one 
can always remove a dependency on $\lambda$ in the $\varrho_i$
by enlarging the dimension~\cite{Navascues2020}. Equivalently, 
one may remove the dependency of the maps $\mathcal{E}_X$ on 
$\lambda$ and the shared randomness may be carried by the 
source states only. 

{\bf Symmetries. } 
Symmetry groups can act on quantum states in different ways. First, the 
elements of a unitary symmetry group may act transitively on the density 
matrix $\varrho$. That is, $\varrho$ is invariant under transformations 
like
\begin{equation} 
\label{eq:SymmDef}
\varrho \longmapsto U \varrho U^\dagger = \varrho.
\end{equation}
If $\varrho= \ketbra{\psi}$ is pure, this implies 
$U \ket{\psi} = e^{i \phi}\ket{\psi}$ and $\ket{\psi}$ is, up 
to some phase, an eigenstate of some operator. Second, for pure 
states one can also identify directly a certain subspace of the 
entire Hilbert space that is equipped with a certain symmetry, 
e.g.\@ symmetry under exchange of two particles. Denoting by 
$\Pi$ the projector onto this subspace, the symmetric pure states 
are defined via
\begin{equation}
\Pi \ket{\psi} = \ket{\psi}   
\label{eq:SymmDef2}
\end{equation}
and for mixed states one has $\varrho = \Pi \varrho \Pi$. Note that 
if $\varrho= \sum_k p_k \ketbra{\phi_k}$ has some decomposition 
into pure states, then each $\ket{\phi_k} = \Pi \ket{\phi_k}$ 
has to be symmetric, too. 

In the following, we consider mainly two types of symmetries. First, 
we consider multi-qubit states obeying a symmetry as in Eq.~(\ref{eq:SymmDef}) 
where the symmetry operations consist of an Abelian group of tensor products 
of Pauli matrices. These groups are usually referred to as stabilizers in 
quantum information theory~\cite{Gottesman}, and they play a central role in the 
construction of quantum error correcting codes. Pure states obeying such 
symmetries are also called stabilizer states, or, equivalently, graph states~\cite{Hein2006,Audenaert2005}. Second, we consider states with a permutational 
(or bosonic) symmetry \cite{Eckert2002, Toth2009,Kraus2003}, obeying relations as in Eq.~(\ref{eq:SymmDef2}) with 
$\Pi$ being the projector onto the symmetric subspace.

{\bf GHZ states. }
As a warming-up exercise we discuss the Greenberger-Horne-Zeilinger (GHZ) 
state of three qubits, 
\begin{equation}
\ket{GHZ} = \frac{1}{\sqrt{2}} (\ket{000} + \ket{111})
\end{equation}
in the triangle scenario. This simple case was already the main example 
in previous works on network correlations~\cite{Navascues2020,Kraft2020triangle,Luo2020}, but it allows us to introduce our concepts and ideas in a simple setting, such that their 
full generalization is later conceivable. 

The GHZ state is an eigenstate of the observables
\begin{equation}
g_1 = X_A X_B X_C , \quad g_2 = 1_A Z_B Z_C , \quad g_3= Z_A Z_B 1_C.
\end{equation}
Here and in the following we use the shorthand notation
$1_A X_B Y_C = \eins\otimes\sigma_x \otimes \sigma_y$ for tensor
products of Pauli matrices. Indeed, these $g_k$ commute and generate 
the stabilizer $\mathcal{S} = \{\id, g_1, g_2, g_3, g_1g_2 , g_1g_3 , g_2g_3 , g_1g_2g_3 \}$.
Clearly, for any $S_i \in \mathcal{S}$ we have 
$S_i \ket{GHZ} = \ket{GHZ}$ and so $\bra{GHZ} S_i \ket{GHZ}=1$.

\begin{figure}
	\includegraphics[width = 0.95\columnwidth]{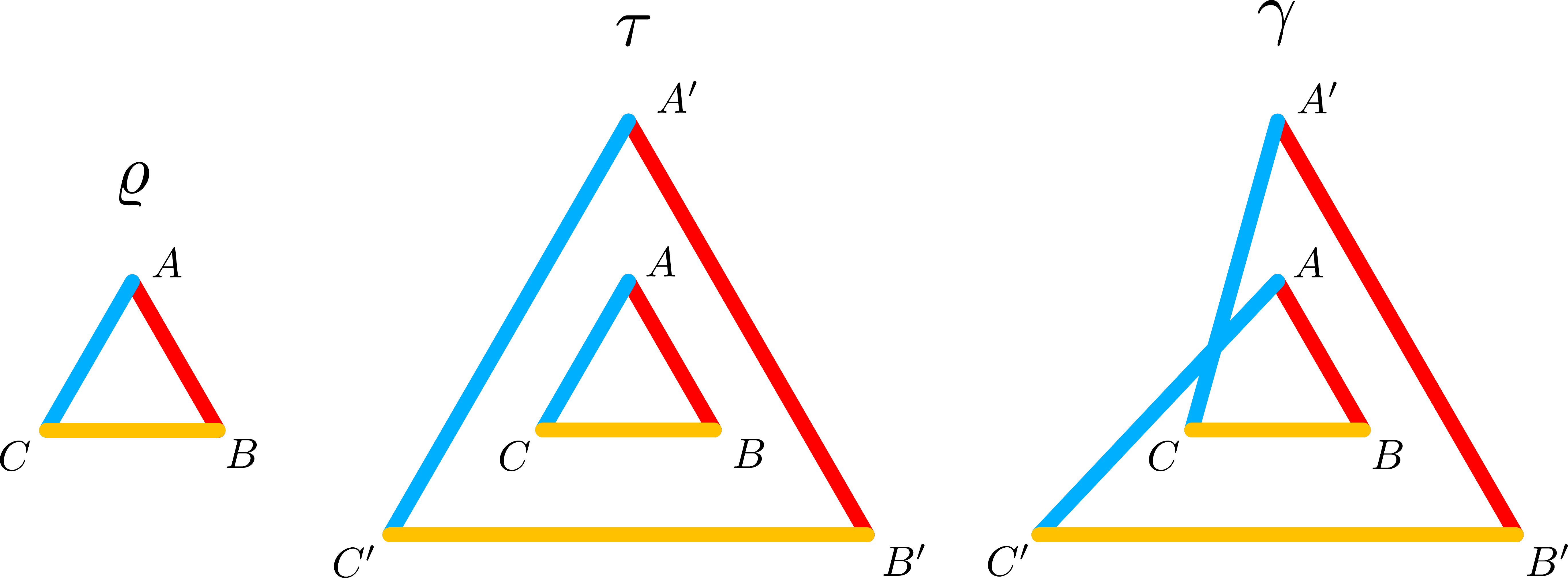}
	\caption{Triangle network and two of its inflations. The first figure represents the triangle network of Fig.\@ \ref{fig:TriangleNetwork}, with global 
	state $\varrho$ and parties $A$, $B$ and $C$. Using the same source states {(represented by lines of same 
	colour, i.e.\@ yellow, blue and red are associated to the sources $\varrho_a$, $\varrho_b$ and $\varrho_c$ of Fig.\@ {\ref{fig:TriangleNetwork}} respectively)} and same local channels, one can build the so-called inflated state 
	$\tau$ {with parties $X$, $X'$ ($X=A,B,C$)}. This state is separable with respect to the $ABC|A'B'C'$ partition. The 
	state $\gamma$ is build similarly, but with a rewiring of the sources, leading 
	to an inflated state that is in general not separable and different from 
	$\tau$. Still, this procedure imposes that several marginals of $\varrho$, 
	$\tau$ and $\gamma$ are equal, e.g.\@ $\varrho_{AC}= \tau_{A'C'} = \gamma_{A'C}$. {The parties of $\gamma$ are labeled in the same way than $\tau$. Note that this is a simplified version of Fig.\@ {\ref{fig:TriangleNetwork}}, i.e.\@ that the local channels and the randomness source are not depicted but implied.}
	}
	\label{fig:triangleandinflations}
\end{figure}

As a tool for studying network entanglement, we use the inflation
technique~\cite{Wolfe2019, WolfePRX}. The basic idea is depicted in Fig.~(\ref{fig:triangleandinflations}). If a state $\varrho$ can be 
prepared in the network scenario, then one can also consider a 
scenario where each source state is sent two-times to multiple 
copies of the parties. In this multicopy scenario, the source 
states may, however, also be wired in a different manner. In the 
simplest case of doubled sources, this may lead to two different 
states, $\tau$ and $\gamma$. Although $\varrho$, $\tau$ and $\gamma$ 
are different states, some of their marginals are identical, see Fig.~(\ref{fig:triangleandinflations}) and Supplementary Note 1. If one can 
prove that states $\tau$ and $\gamma$ with the marginal conditions
do not exist, then $\varrho$ cannot be prepared in the network. 

Let us start by considering the correlation  $\mean{Z_A Z_B}$
in $\varrho$, $\tau$ and $\gamma$. The values are equal in all 
three states, $\mean{Z_A Z_B}_\varrho = \mean{Z_A Z_B}_\tau =
\mean{Z_A Z_B}_\gamma$, and the same holds for the correlation 
$\mean{Z_B Z_C}.$ Note that these should be large, if $\varrho$
is close to a GHZ state, as $Z_A Z_B$ is an element of the 
stabilizer. Using the general relation $\mean{Z_A Z_C} \geq \mean{Z_A Z_B}
+ \mean{Z_B Z_C} - 1 $~\cite{Navascues2020} we can use this to estimate 
$\mean{Z_A Z_C}$ in $\gamma$. Due to the marginal conditions, 
we have $\mean{Z_A Z_C}_\gamma = \mean{Z_{A'} Z_C}_\tau$, implying
that this correlation in $\tau$ must be large, if $\varrho$ is
close to a GHZ state. On the other hand, the correlation $\mean{X_A X_B X_C}$ 
corresponds also to a stabilizer element and should be large in the 
state $\varrho$ as well as in $\tau$.

The key observation is that the observables $X_A X_B X_C$ and $Z_{A'} Z_C$
anticommute; moreover, they have only eigenvalues $\pm 1$. For this 
situation, strong constraints on the expectation values are known: 
If $M_i$ are pairwise anticommuting observables with eigenvalues $\pm 1$, 
then $\sum_i \mean{M_i}^2 \leq 1$~\cite{Toth2005}. This fact has already
been used to derive entropic uncertainty relations~\cite{Wehner2008,Niekamp2012}
or monogamy relations~\cite{Tran2018,Kurzynski2011}. For our situation, it directly
implies that for $\tau$ the correlations $\mean{X_A X_B X_C}$  and 
$\mean{Z_{A'} Z_C}$ cannot both be large. Or, expressing everything
in terms of the original state $\varrho$, if 
$\mean{Z_A Z_B} + \mean{Z_B Z_C} - 1 \geq 0$ 
then a condition for preparability of a state in the network is
\begin{equation}
\mean{X_A X_B X_C}^2 + (\mean{Z_A Z_B} + \mean{Z_B Z_C} - 1)^2 \leq 1.
\label{eq-ghz-condition}
\end{equation}
This is clearly violated by the GHZ state. In fact, if one considers
a GHZ state mixed with white noise, $\varrho = p \ketbra{GHZ} 
+ (1-p) \eins/8$, then these states are detected already for 
$p > 4/5.$ Note that using the other observables of the stabilizer
and permutations of the particles, also other conditions like 
$\mean{Y_A Y_B X_C}^2 + (\mean{Z_A Z_B} + \mean{Z_A Z_C} - 1)^2 \leq 1$ 
can be derived. 

Using these techniques as well as concepts based on covariance
matrices~\cite{Aberg2020,Kraft2020cm} and classical networks \cite{Gisin2020}, 
one can also derive bounds on the maximal GHZ fidelity achievable by network 
states. In fact, for network states
\begin{equation}
F_{GHZ} = \bra{GHZ}\varrho\ket{GHZ} \leq \frac{1}{\sqrt{2}} \approx 0.7071
\end{equation}
holds, as explained in Supplementary Note 2. This is a clear improvement on previous 
analytical bounds, although it does not improve a fidelity bound obtained 
by numerical convex optimization~\cite{Navascues2020}.

{\bf Cluster and graph states. }
The core advantage of our approach is the fact that it can directly 
be generalized to more {parties} and complicated networks, while the 
existing numerical and analytical approaches are mostly restricted 
to the triangle scenario. 

\begin{figure}[t]
    \includegraphics[width = 0.95\columnwidth]{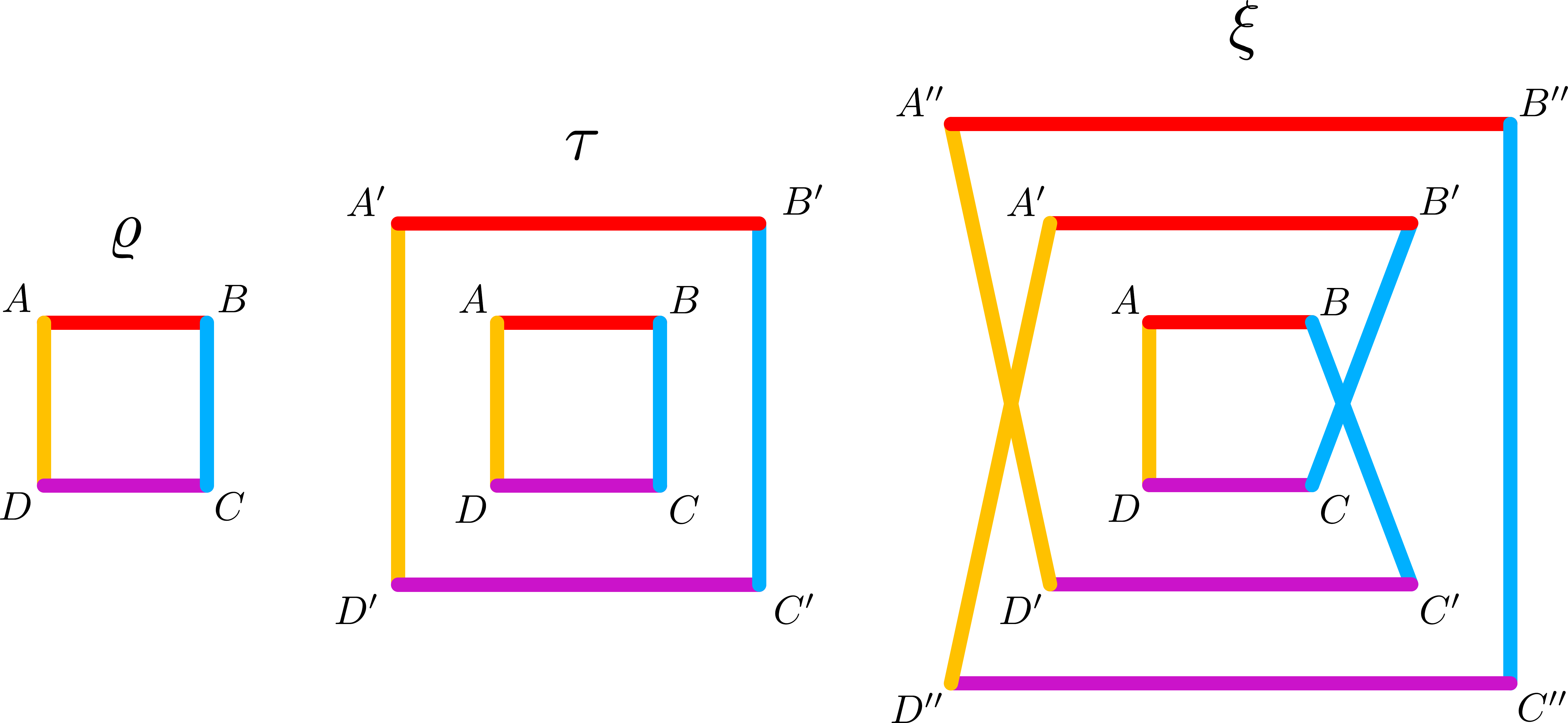}
    \caption{Square network and two of its inflations. Similar 
    to the triangle network of Fig.\@ \ref{fig:triangleandinflations}, 
    the state $\tau$ is generated using two copies of the sources and 
    channels used to generate $\varrho$. Then, one goes to a higher-order 
    inflation by using three copies of the sources and the local channels. 
    By rewiring one obtains the inflated state $\xi$. Again, one has several 
    equalities between the marginals of $\varrho$, $\tau$ and $\xi$. {Parties with a link of same colour are connected by an identical source, as in Fig.\@ {\ref{fig:triangleandinflations}}. The labels $X$, $X'$, $X''$ ($X=A,B,C$) denote the parties of the states.}
    }
	\label{fig:square}
\end{figure}

Let us start the discussion with the four-qubit cluster state 
$\ket{C_4}.$ This may be defined as the unique common $+1$-eigenstate of 
\begin{align}
g_1 &= X_A Z_B 1_C Z_D, \ g_2 = Z_AX_BZ_C1_D, \nonumber \\
g_3 &=1_AZ_BX_CZ_D, \ 
g_4 = Z_A1_BZ_CX_D.
\label{eq-cluster-stab}
\end{align}
For later generalization, it is useful to note that the choice of 
these observables is motivated by a graphical analogy. For the 
square graph in Fig.~\ref{fig:square} one can associate to any 
vertex a stabilizing operator in the following manner: One
takes $X$ on the vertex $i$, and $Z$ on its neighbours, i.e. 
the vertices connected to $i$. This delivers the observables in 
Eq.~(\ref{eq-cluster-stab}), but it may also be applied to 
general graphs, leading to the notion of graph states~\cite{Hein2006}.

If a quantum state can be prepared in the square network, 
then we can consider the third order
inflated state $\xi$ shown in Fig.~\ref{fig:square}. In the 
inflation $\xi$, the three observables $X_{B''} X_D$, $Z_{B'} X_C Z_D$, 
and $ X_{A} Y_{B} Y_D$ anticommute. These observables act
on marginals {that} are identical to those in $\varrho$. 
Consequently, for any state that can be prepared in the 
square network, the relation
\begin{equation} 
\label{eq:anti-commI3}
\mean{X_B X_D}^2 + \mean{Z_{B} X_C Z_D}^2 + \mean{X_{A} Y_{B} Y_D}^2 \leq 1
\end{equation}
holds. All these observables are also within the stabilizer of 
the cluster state, so the cluster state violates this inequality 
with an lhs equal to three. This proves that cluster states mixed 
with white noise cannot be prepared in the square network for 
$p > 1/\sqrt{3} \approx 0.577.$ Again, with the same strategy 
different nonlinear witnesses like 
$\mean{X_A X_C}^2 + \mean{Y_A Y_B Z_C Z_D}^2 \leq 1$ for network 
entanglement in the square network can be derived. This follows
from the second inflation in Fig.~\ref{fig:square}. Furthermore, it can be shown that states 
with a cluster state fidelity of $F_{C_4} > 0.7377$ cannot be 
prepared in a network. The full discussion is given in 
Supplementary Note 3.

This approach can be generalized to graph states. As already 
mentioned, starting from a general graph one can define a graph 
state by stabilizing operators in analogy to Eq.~(\ref{eq-cluster-stab}).
The resulting states play an eminent role in quantum information 
processing. For instance, the so-called cluster states, which correspond 
to graphs of quadratic and cubic square lattices, are resource states 
for measurement-based quantum computation~\cite{Raussendorf2001} and topological error 
correction~\cite{Kitaev2003,Yao2012}.

Applying the presented ideas to general graphs results in the following: 
If a graph contains a triangle, then under simple and weak conditions
an inequality similar to Eq.~(\ref{eq-ghz-condition}) can be derived. {This then
excludes} the preparability of noisy graph states in any
network with bipartite sources only. Note that this is a stronger 
statement than proving network entanglement only for the network 
corresponding to the graph, as was considered above for the cluster
state. 

At first sight, the identification of a specific triangle in the 
graph may seem a weak condition, but here the entanglement theory
of graph states helps: It is well known that certain transformations
of the graph, so-called local-complementations, change the graph state
only by a local unitary transformation~\cite{Nest2004,Hein2004}, so one may apply these to 
generate the triangle with the required properties. Indeed, this works for
all cases we considered (e.g.\@ the full classification up to twelve qubits 
from Refs.~\cite{Adan2009,Cabello2011, Danielsen2011}) and we can summarize: 

\noindent
{\bf{Observation 1.}} {\it (a) No graph state with up to twelve vertices can be prepared in a network with only bipartite sources. (b) If a graph contains a vertex with degree $d \leq 3$, then it cannot be prepared in any network with bipartite sources. (c) The two- and three-dimensional cluster states cannot be prepared in any network.}

In all the cases, it follows that graph states mixed with white noise, 
$\varrho = p \ketbra{G} + (1-p) \eins /2^N$ are network entangled for 
$p>4/5$, independently of the number of qubits. A detailed discussion 
is given in Supplementary Note 3.

{As mentioned above, the exclusion of noisy graph states from the set of network states with bipartite sources holds for all graph states we considered. Therefore, we conjecture that this is valid for all graph states, without restrictions on the number of parties.}

{We note that similar statements on entangled multiparticle states and symmetric 
states were made in Ref.~{\cite{Luo2020}}. However, we stress that the methods to obtain these results are very different from the anticommuting method used here, and that Observation 1 is only an application of this method (see section on the certification of network links for another use). Furthermore, the result of Ref.~{\cite{Luo2020}} concerning permutationally symmetric states only holds for pure states, whereas in the next section we will see that it holds for all permutationally symmetric states.}

{
A natural questions that arises is whether this method might be useful to characterize correlations in networks with more-than-bipartite sources. While this still needs to be investigated in details, examples show that the answer is most likely positive: Using the anticommuting relations, we demonstrate in Supplementary Note 3 that some states cannot be generated in networks with tripartite sources.
}

{\bf Permutational symmetry. }
Now we consider multiparticle quantum states of arbitrary dimension that obey a permutational or bosonic symmetry. Mathematically, these states act on the symmetric subspace only, meaning that $\varrho = \Pi^+ \varrho \Pi^+$, where $\Pi^+$ is the projector on the symmetric subspace. For example, in the case of three qubits this 
space is four-dimensional, and spanned by the Dicke states
$
\ket{D_0} = \ket{000}$, 
$\ket{D_1} = (\ket{001}+\ket{010}+\ket{100})/\sqrt{3}$, 
$\ket{D_2} = (\ket{011}+\ket{101}+\ket{110})/\sqrt{3}$, 
and 
$\ket{D_3} = \ket{111}$.

The symmetry has several consequences \cite{Eckert2002, Toth2009}. First, if one has a decomposition $\varrho= \sum_k p_k \ketbra{\psi_k}$ into pure states, then 
all $\ket{\psi_k}$ have to come from the symmetric space {} too. Since pure
symmetric states {} are either fully separable (like 
$\ket{D_0}$) or genuine multiparticle entangled (like
$\ket{D_1}$), this implies that mixed symmetric states 
have also only these two possibilities. That is, if a mixed symmetric 
state is separable for one bipartition, it must be fully separable. 

Second, permutational invariance can also be characterized
by the flip operator $F_{XY} = \sum_{ij} \ket{ij}_{XY}\bra{ji}_{XY}$ on
the particles $X$ and $Y$. Symmetric multiparticle states obey 
$\varrho = F_{XY}\varrho = \varrho F_{XY}$ for any pair
of particles, where the second equality directly follows
from hermiticity. Conversely, concluding  full permutational
symmetry from two-particle properties {only requires}
this relation for pairs such that the $F_{XY}$ generate
the full permutation group. Finally, it is easy to check
that if the marginal $\varrho_{XY}$ of a multiparticle state
$\varrho$ obeys $\varrho_{XY}= F_{XY}\varrho_{XY}$, then 
the full state $\varrho$ obeys the same relation, too. 

Armed with these insights, we can explain the idea for our main
result. Consider a three-particle state with bosonic symmetry
{that} can be prepared in a triangle network, the inflation 
$\gamma$ from Fig.~(\ref{fig:triangleandinflations}) and 
the reduced state $\gamma_{ABC}$ in this inflation. This 
obeys $\gamma_{ABC} = F_{XY}\gamma_{ABC}$ for $XY$ equal 
to $AB$ or $BC$. Since $F_{AB}F_{BC}F_{AB}=F_{AC}$, this 
implies that the reduced state $\tau_{AC'}$ obeys 
$\tau_{AC'}=F_{AC'}\tau_{AC'}$ and hence also the six-particle 
state $\tau.$ Moreover, $\tau$ also obeys similar constraints 
for other pairs of particles (like $AB$, $BC$, $A'B'$ and $B'C'$) 
and {it is} easy to see that jointly with $AC'$ these generate the 
full permutation group. So, $\tau$ must be fully symmetric. But 
$\tau$ is separable with respect to the $ABC|A'B'C'$ bipartition, 
so $\tau$ and hence $\varrho = \tau_{ABC}$ must be fully separable. 

The same argument can easily be extended to more complex networks, which
are not restricted to use bipartite sources and {holds for states of arbitrary local dimension}. We can summarize:

\noindent
{\bf Observation 2.} {\it Consider a permutationally symmetric state of $N$ parties. This state can be generated by a
network with $(N-1)$-partite sources if and only if it is fully separable.}

 We add that this Observation can also be extended to the case of fermionic
 antisymmetry, a detailed discussion is given in the Supplementary Note 4.

{\bf Certifying network links. }
For the technological implementation of quantum networks, it is 
of utmost importance to design certification methods to test
and benchmark different realizations. One of the basic questions
is, whether a predefined quantum link works or not. Consider a network where the link between two particles is absent 
or not properly working. For definiteness, we may consider the 
square network on the lhs of Fig.~\ref{fig:square} and the 
parties $A$ and $C$. In the second inflation $\tau$ we have 
for the marginals $\tau_{AC} = \tau_{A'C}$. This implies that 
the observables $X_A X_C$ and $Z_A Z_C$ on the original
state $\varrho$ correspond to anticommuting observables
on $\tau$, so we have $\mean{X_A X_C}^2 + \mean{Z_A Z_C}^2 
\leq 1$. Using higher-order inflations, one can extend and formulate
it for general networks: If a state can be prepared in a network with
bipartite sources but without the link $AC$, then
\begin{equation}
\mean{X_A X_C P_{R_1}}^2 
+ 
\mean{Y_A Y_C P_{R_2}}^2 
+ 
\mean{Z_A Z_C P_{R_3}}^2 
\leq 1.
\end{equation}
Here the $P_{R_i}$ are arbitrary observables on disjoint subsets of the other
particles, $R_i \cap R_j = \emptyset$. If a state was indeed  prepared in a real 
quantum network then violation of this inequality proves that the link $AC$ is 
working and distributing entanglement. In Supplementary Note 5, details are discussed and
examples are given, where this test allows to certify the functionality of
a link even if the reduced state $\varrho_{AC}$ is separable.

\section*{Discussion}
We have provided an analytical method to analyze correlations arising in 
quantum networks from few measurements. With this, we have shown that 
large classes of states with symmetries, namely noisy graph states and permutationally symmetric states cannot be prepared in networks. Moreover, our approach allows to design simple tests for the functionality of a specified link in a network.

Our results open several research lines of interest. First, they are 
of direct use to analyze quantum correlations in experiments and to show that 
multiparticle entanglement is needed to generate observed quantum correlations. 
Second, they are useful for the design of networks in the realistic setting: 
For instance, we have shown that the generation of graph states from bipartite 
sources necessarily requires at least some communication between the parties, which may be of relevance for quantum repeater schemes based on graph states that have been designed~\cite{Epping2016}. Moreover, it has been shown that GHZ states provide an advantage for multipartite conference key agreement over bipartite sources~\cite{Epping2017}, which may be directly connected to the fact that their symmetric entanglement is inaccessible in networks. Finally, our {results} open the door for further studies of entanglement in networks, e.g. using limited communication (first results on this have recently been reported \cite{Spee2021}) or restricted quantum memories, which is central for future
realizations of a quantum internet.

\vspace{-0.4cm}
\subsection{Acknowledgments}
We thank Xiao-Dong Yu and Carlos de Gois for discussions. This work was supported by the 
Deutsche Forschungsgemeinschaft (DFG, German Research Foundation, project numbers 447948357 and 440958198), the Sino-German Center for Research Promotion (Project M-0294), and the ERC (Consolidator Grant 683107/TempoQ). K.H. acknowledges support from the House of Young Talents of the University of Siegen. Z.P.X. acknowledges support from the Humboldt foundation. {T.K. acknowledges support from the Austrian Science Fund (FWF): P 32273-N27.}

\vspace{-0.4cm}
\subsection{Author contributions}
K.H., Z.P.X., T.K. and O.G. derived the results and wrote the manuscript. K.H. and Z.P.X. contributed equally to the project. O.G. supervised the project. Correspondence and requests for materials should be addressed to Z.P.X.

\newpage
\onecolumngrid

\section{Supplementary Note 1: Network Correlations and the Inflation Technique}

Before explaining the inflation technique in some detail, it is useful 
to note some basic observations on the definition of network correlations.
Recall from the main text that triangle network states are of the form
\begin{equation}
\label{eq:NteStateAPPENDIX}
\varrho = 
\sum_\lambda p_\lambda 
\mathcal{E}_A^{(\lambda)} \otimes \mathcal{E}_B^{(\lambda)} \otimes \mathcal{E}_C^{(\lambda)} \big[\varrho_{ABC}\big],
\end{equation}
i.e.\@ there exist source states $\varrho_a$, $\varrho_b$, $\varrho_c$ with 
$\varrho_{ABC} =\varrho_a \otimes \varrho_b \otimes \varrho_c$, a shared random 
variable $\lambda$ and channels (that is, trace preserving positive maps) 
$\mathcal{E}_A^{(\lambda)}$, $\mathcal{E}_B^{(\lambda)}$ and $\mathcal{E}_C^{(\lambda)}$ 
that can be used to generate the state $\varrho$, as shown in Supplementary Fig.\@ \ref{fig:TriangleNetwork}.

\begin{figure}[b]
	\includegraphics[width=.4\linewidth]{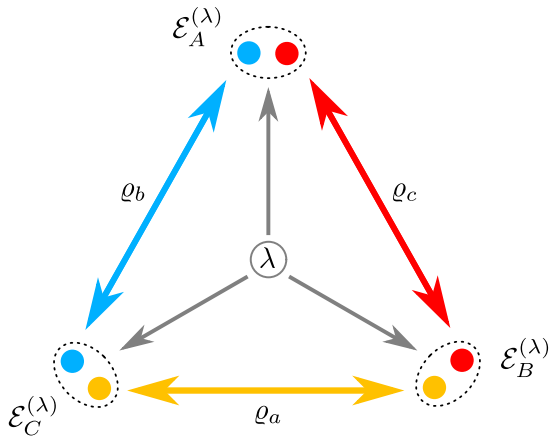}
	\caption{\textit{Triangle quantum network.} Three sources $\varrho_a$, 
	$\varrho_b$ and $\varrho_c$ distribute parties to three nodes, Alice, 
	Bob and Charlie ($A$, $B$ and $C$). Alice, Bob and Charlie each end 
	up with a bipartite system $X=X_1X_2$ on which they perform a local 
	channel $\mathcal{E}_X^{(\lambda)}$ ($X=A,B,C$) depending on a 
	classical random variable $\lambda$.}
	\label{fig:TriangleNetwork}
\end{figure}

First, we note that in this definition, the state $\varrho_{ABC}$ does not depend on 
the classical variable $\lambda$. This is however, no restriction, as the dimensions 
of the source states are not bounded. If in Eq.~(\ref{eq:NteStateAPPENDIX}) the 
$\varrho_{ABC}(\lambda)$ and hence the $\varrho_a(\lambda)$, $\varrho_b(\lambda)$, $\varrho_c(\lambda)$ depend on $\lambda$, one can just combine the set of all 
$\varrho_a(\lambda)$ to a single $\varrho_a$ etc.~and redefine the maps 
$\mathcal{E}_A^{(\lambda)}$ etc.~such that they act on the appropriate 
$\varrho_a(\lambda)$. This results in a form where $\varrho_{ABC}$
does not depend on $\lambda$ anymore, hence the state can be written as in
Eq.~(\ref{eq:NteStateAPPENDIX}). Note that this has already been observed 
in~\cite{Navascues2020}. 

As mentioned in the main text, one may define network 
states also in a manner where the shared randomness is carried by the sources 
only. Indeed, if one adds an ancilla system to the source states, this may 
be used to identify the channel $\mathcal{E}_X$ to be applied. More explicitly, 
the source states may be redefined as $\varrho_c^{(\lambda)} \otimes \ketbra{\lambda}$ with orthogonal ancilla states $\ket{\lambda}$ being 
send to Bob (respectively to Charlie and Alice for sources $a$ and $b$),
such that Bob can, by measuring $\ket{\lambda}$, decide which channel to 
apply. This measurement can then be seen as a global channel $\mathcal{E}_B$
that does not depend on $\lambda$.
From the linearity of the maps, one may also write general network states
as $\varrho = \mathcal{E}_A \otimes \mathcal{E}_B \otimes \mathcal{E}_C 
\big[ \sum_\lambda p_\lambda \varrho_{ABC}^{(\lambda)}\big]$ as an equivalent definition. For our purpose, the potential dependence of $\varrho_{ABC}$ on 
$\lambda$ has the following consequence: If we wish to compute for a symmetric $\ket{\psi}$ the maximum fidelity $\bra{\psi} \varrho \ket{\psi}$ over all network states $\varrho$, then we may assume that $\varrho$ permutationally symmetric, too. This follows from the simple fact that we can, without decreasing the overlap, 
symmetrize the state $\varrho_{ABC},$ and the symmetrized state will still
be preparable in the network.

Second, one may restrict the $\varrho_{ABC} =\varrho_a \otimes \varrho_b \otimes \varrho_c$
further. Indeed it is straightforward to see that $\varrho_a=\ketbra{a}$,
$\varrho_b=\ketbra{b}$, and $\varrho_c=\ketbra{c}$ can be chosen to be pure, 
as the channels $\mathcal{E}_A^{(\lambda)}$ etc.~are linear.

Third, as the set of network preparable states is by definition convex, one may 
ask for its extremal points. Formally, these are of the form 
$\mathcal{E}_A^{(\lambda)} \otimes \mathcal{E}_B^{(\lambda)} \otimes \mathcal{E}_C^{(\lambda)} \big[\ketbra{a} \otimes \ketbra{b}\otimes \ketbra{c}\big],$ but can these further 
be characterized? Clearly, pure biseparable three-particle states, such as
$\ket{\psi}=\ket{\phi}_{AB}\otimes\ket{\eta}_C$ are extremal points. There are 
however, also mixed states as extremal points, which can be seen as follows: It 
was shown in Ref.~\cite{Luo2020} that pure three-qubit states which are genuine multiparticle 
entangled (that is, not biseparable) cannot be prepared in the triangle network. 
On the other hand, in Ref.~\cite{Navascues2020} it was shown that there are network states
having a GHZ fidelity of 0.5170, which implies that they are genuine multiparticle
entangled \cite{Seevinck2009}. So, the set defined in Eq.~(\ref{eq:NteStateAPPENDIX})
must have some extremal points, which are genuine multiparticle entangled mixed states.

After this prelude, let us explain the inflation technique~\cite{WolfePRX, Wolfe2019}, which has already proven to be useful for the characterization of quantum
networks~\cite{Navascues2020}. We introduce it for triangle networks as 
for arbitrary networks it is a direct generalization.

\begin{figure} 
	\includegraphics[scale = .15]{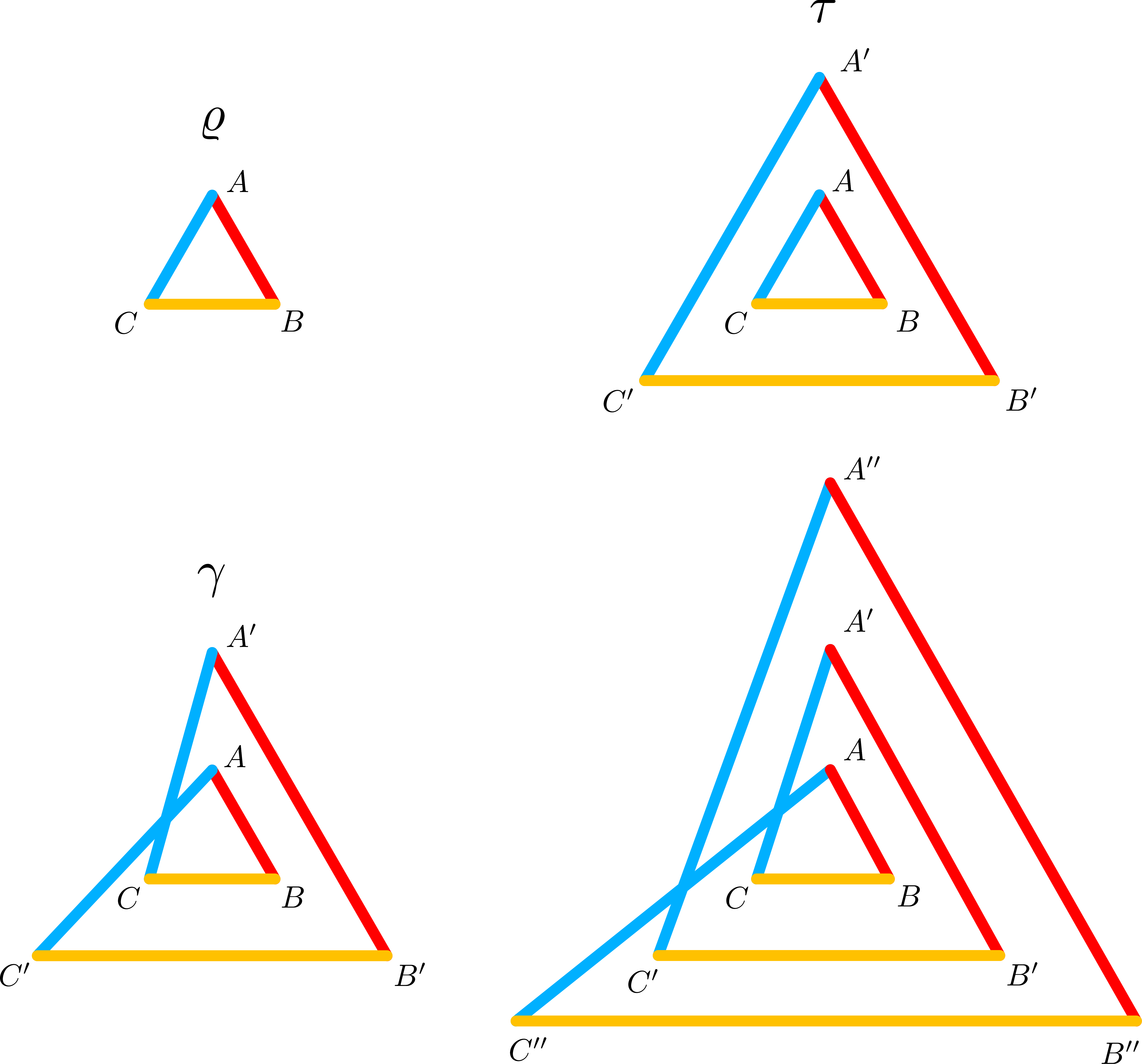}
	\caption{\textit{Triangle network and three of its inflations.} The first figure represents the triangle network Supplementary Fig.\@ \ref{fig:TriangleNetwork}, with global state $\varrho$. Using the same source states (represented by lines of same colour) and same local channels, one can build the so-called inflated state $\tau$, which is biseparable. The state $\gamma$ is build similarly, but with a different rewirering, leading to an inflated state that is in general not separable and different from $\tau$. One may also go to higher order inflations, e.g.\@ with three copies and some rewireing, as depicted here. This procedure implies several equalities between the marginals of the original state and its inflations.
	}
	\label{fig:triangleandinflationsAPPENDIX}
\end{figure}

We start with constructing two inflations of the triangle network. Consider two networks 
with six vertices and six edges as in Supplementary Fig.~\ref{fig:triangleandinflationsAPPENDIX}. 
Identical sources are distributed along the lines of same colour, thus two copies of 
each source are needed per network. In other words, the source $\varrho_b$ is distributed 
between $AC$ and $A'C'$ to generated $\tau$, and between $AC'$ and $A'C$ for $\gamma$ 
(analogously for $\varrho_a$ and $\varrho_c$, following
Supplementary Fig.~\ref{fig:triangleandinflationsAPPENDIX}). Then, the channels are performed 
according to the random parameter $\lambda$. Both on primed and non-primed $A$ 
nodes, the same channel $\mathcal{E}_A^{(\lambda)}$ is applied and similarly for 
$B$ and $C$. This leaves us with two network states, $\tau$ and $\gamma$. Those 
operators are physical states, i.e.\@ they have a unit trace and are positive 
semi-definite. Formally, they can be written as
\begin{align}
 \tau = & \sum_\lambda p_\lambda 
 \Big(\mathcal{E}_A^{(\lambda)} \otimes \mathcal{E}_B^{(\lambda)} \otimes \mathcal{E}_C^{(\lambda)} \big[\varrho_{ABC}\big]\Big) 
 \otimes \Big(\mathcal{E}_{A'}^{(\lambda)} \otimes \mathcal{E}_{B'}^{(\lambda)} \otimes \mathcal{E}_{C'}^{(\lambda)}  \big[\varrho_{A'B'C'}\big]\Big)
 \label{eq:gammaI1}
\end{align}
and
\begin{align}
\gamma =  &\sum_\lambda p_\lambda \mathcal{E}_A^{(\lambda)} \otimes \mathcal{E}_B^{(\lambda)} \otimes \mathcal{E}_C^{(\lambda)} \otimes \mathcal{E}_{A'}^{(\lambda)} \otimes 
\mathcal{E}_{B'}^{(\lambda)} \otimes \mathcal{E}_{C'}^{(\lambda)}  \big[ \varrho_{ABCA'B'C'}\big],
\end{align}
where $\varrho_{ABCA'B'C'}= \varrho_c \otimes  \varrho_b \otimes  \varrho_a \otimes  \varrho_c \otimes \varrho_b \otimes  \varrho_a$, with the ordering of parties being different on both sides. 
Here, one needs to carefully pay attention to which channel acts on which party (this is 
depicted in Supplementary Fig.\@ \ref{fig:triangleandinflationsAPPENDIX}). Clearly, given only the knowledge
of $\varrho$, the precise form of $\tau$ and $\gamma$ is not known. Still, due to the way 
they are constructed some of their marginals have to be equal, namely
\begin{align}
    \tau_{ABC} &= \tau_{A'B'C'} = \varrho, \\
	\gamma_{ABC} &= \gamma_{A'B'C'}, \\
	\tr_{XX'}(\tau) &= \tr_{XX'} (\gamma) \quad {\rm for} \quad X = A,B,C.
\end{align}
Furthermore, from Eq.\@ (\ref{eq:gammaI1}) it is clear that $\tau$ is separable wrt 
the partition $ABC|A'B'C'$ and we note that $\tau$ and $\gamma$ are permutationally 
symmetric under the exchange of non primed and primed vertices.	Therefore, if, 
for some given state $\varrho$, it is not possible to find states $\tau$ and 
$\gamma$ that satisfy those conditions, then $\varrho$ cannot be generated 
in the considered network.

An interesting point is that the question for the existence of $\tau$ and $\gamma$ with the desired properties can be directly formulated as a semidefinite program (SDP). This can be used to prove that such inflations do not exist, and the corresponding dual program can
deliver an witness-like construction that can be used to exclude preparability of a state
in the network. Still, these approaches are memory intensive. For instance, as the authors
of Ref.~\cite{Navascues2020} acknowledge, it is difficult to derive tests for tripartite qutrit states in a normal computer.

Finally, let us note that other triangle inflations may be considered, for instance inflations with $3n$ nodes ($n=3,4,\dots$) or simply wired differently than $\tau$ and $\gamma$. As mentioned previously, this technique can also be used for more complicated networks. 

\section{Supplementary Note 2: Fidelity estimate for the GHZ state}

Let us compute a bound on the fidelity of triangle network states to the GHZ
state, i.e.\@ compute $F=\max \bra{GHZ}\varrho\ket{GHZ}$, where the maximum 
is taken over all states as in Eq.~(\ref{eq:NteStateAPPENDIX}). For this 
maximization is it sufficient to consider the extremal states, which are of 
the type $\varrho_{ITN}=\mathcal{E}_A \otimes \mathcal{E}_B \otimes \mathcal{E}_C \big[\varrho_{ABC}\big]$,
here $ITN$ stands for the independent triangle network, that is the 
triangle network without shared randomness. 

Therefore, one may use techniques based on covariance matrices~\cite{Kraft2020cm,Aberg2020}, 
which are designed for the ITN. 
The covariance matrix (CM) $\Gamma$ of some random variables $x_1, \dots, x_N$ 
is the matrix with elements $\Gamma_{ij} = \mathrm{cov}(x_i,x_j) = \mean{x_ix_j} - \mean{x_i}\mean{x_j}$, $i=1,\dots,N$. 

We can now explain the general idea of the technique in Ref.\@ \cite{Kraft2020cm}: 
If one computes the CM of the outcomes of $Z$-measurements on each qubit 
of an ITN state, then this matrix has a certain block structure. Checking
this block structure can be done by checking the 
positivity 
of the comparison matrix $M(\Gamma)$. The comparison matrix obtained by flipping the signs of the off-diagonal elements of $\Gamma$. The condition then reads: For any quantum state in the ITN the comparison matrix of the CM is positive semi-definite. Hence, a negative eigenvalue in the comparison matrix excludes a state of being preparable in the ITN.

Now, if we apply that to states $\varrho(F) = F \ketbra{GHZ} + (1-F) \tilde{\varrho}$ 
with a fidelity $F$ to the GHZ state, the comparison matrix of the CM reads
\begin{equation}
		M(\Gamma) =
		\begin{bmatrix}
			1- a^2(1-F)^2 & -\left(F + d(1-F)-ab (1-F)^2\right) & -\left(F + e(1-F)-ac (1-F)^2\right) \\
			-\left(F + d(1-F)-ab (1-F)^2\right) & 1- b^2(1-F)^2 & -\left(F + f(1-F)-bc (1-F)^2\right) \\
			-\left(F + e(1-F)-ac (1-F)^2\right) & -\left(F + f(1-F)-bc (1-F)^2\right)& 1-c^2 (1-F)^2  \\
		\end{bmatrix},
	\end{equation}
where $a=\mean{Z11}_{\tilde{\varrho}}$, $b=\mean{1Z1}_{\tilde{\varrho}}$, $c=\mean{11Z}_{\tilde{\varrho}}$, $d=\mean{ZZ1}_{\tilde{\varrho}}$, $e=\mean{Z1Z}_{\tilde{\varrho}}$ and $f=\mean{1ZZ}_{\tilde{\varrho}}$. 
From the last paragraph, we have that this matrix is positive semi-definite 
for ITN states, thus $\bra{\phi}M(\Gamma)\ket{\phi} \geq 0$, for all vectors 
$\ket{\phi}$, and in particular for $\ket{\phi}= (1,1,1)/{\sqrt{3}}$. 
We notice that $\bra{\phi}M(\Gamma)\ket{\phi}$ is upper bounded by $4-6F+F^2$ and
therefore, $0 \leq \bra{\phi}M(\Gamma)\ket{\phi} \leq 4-6F+F^2$ holds for 
ITN states and we are able to exclude all states $\varrho(F)$ with 
$F > 3-\sqrt{5}\simeq 0.7639$ form the triangle network scenario.

By making use of additional  constraints or other criteria, we can obtain tighter bound. 
For any given three compatible dichotomic measurements $M_1, M_2, M_3$, we have~\cite{Navascues2020}
\begin{equation}
    p(M_1=M_2) \geq p(M_1=M_3) + p(M_2=M_3) -1.
\end{equation}
This implies
\begin{equation}
    \mean{M_1M_2} \ge \mean{M_1M_3} + \mean{M_2M_3} - 1.
\end{equation}
By substituting $M_i$ with $-M_i$, we obtain
\begin{align}
    &\mean{M_1M_2} \ge |\mean{M_1M_3} + \mean{M_2M_3}| - 1,\\
    &\mean{M_1M_2} \le 1 - |\mean{M_1M_3} - \mean{M_2M_3}|.
\end{align}
In our case, we have 
\begin{equation}\label{eq:extracon}
    d\ge |a+b| - 1, \quad e\ge |a+c| - 1, \quad f\ge |b+c| - 1.
\end{equation}
With this extra constraint, $0 \leq \bra{\phi}M(\Gamma)\ket{\phi} \leq 9-12F$ holds for 
ITN states and we are able to exclude all states $\varrho(F)$ with 
$F > 3/4 = 0.75$ form the triangle network scenario.

Another criterion for ITN states~\cite{Gisin2020} states that
\begin{align}\label{eq:gisinineq}
      &(1+|E_A|+|E_B|+E_{AB})^2\nonumber\\
    + &(1+|E_A|+|E_C|+E_{AC})^2\nonumber\\ 
    + &(1+|E_B|+|E_C|+E_{BC})^2\nonumber\\
    \le &6(1+|E_A|)(1+|E_B|)(1+|E_C|),
\end{align}
where 
\begin{align}
    &E_{A} = \mean{Z11}_{\varrho}, \ E_{B} = \mean{1Z1}_{\varrho}, \ E_{C} = \mean{11Z}_{\varrho}, \\
    &E_{AB} = \mean{ZZ1}_{\varrho}, \ E_{AC} = \mean{Z1Z}_{\varrho}, \ E_{BC} = \mean{1ZZ}_{\varrho}.
\end{align}

As it turns out, if Eq.~\eqref{eq:gisinineq} together with Eq.~\eqref{eq:extracon} has a feasible solution of $a, b, c, d, e, f \in [-1,1]$, then $F$ should be no more than $1/\sqrt{2} \simeq 0.7071$. Hence, we can exclude all states $\varrho(F)$ with 
$F > 1/\sqrt{2} \simeq 0.7071$ form the triangle network scenario.

In particular, if we know $E_A = E_B =E_C = 0$, i.e., $a=b=c=0$,
Eq.~\eqref{eq:gisinineq} reduces to
\begin{align}
    6 \ge  & \ (1+F+d(1-F))^2 + (1+F+e(1-F))^2\nonumber\\ 
    & + (1+F+f(1-F))^2\nonumber\\
    \ge & \ 3(1+F-(1-F))^2\nonumber\\
    =& \ 12F^2,
\end{align}
which implies $F\le 1/\sqrt{2}$.
The bound $1/\sqrt{2} \simeq 0.7071$ is slightly worse, but close to the one $0.6803$ obtained in Ref.~\cite{Navascues2020} based on advanced numerical computations.
\section{Supplementary Note 3: Graph and cluster states}
In this Supplementary Note we present our results on graph states and cluster
states. It is structured as follows: We first recall the
basic facts about graph and cluster states. Then, we prove the estimate on the 
fidelity with cluster states for states in the square network (see 
the main text). Finally, we present the proof and discussion of
Observation 1.
 
\subsection{Graph states and the stabilizer formalism}
Graph states~\cite{Hein2004,Hein2006} are quantum states defined 
through a graph $G = (V,E)$, i.e.\@ through a set $V$ of $N$ vertices 
and a set $E$ containing edges that connect the vertices. The vertices 
represent the physical systems, qubits. One way of describing these 
states is through the stabilizer formalism. For that, as introduced 
in the main text, one first needs to introduce the generator operators 
$g_i$ of graph states: a graph state $\ket{G}$ is the unique common 
$+1$-eigenstate of the set of operators $\{g_i\}$,
\begin{equation}
	g_i = X_i \prod_{j \in \mathcal{N}_i} Z_j,
\end{equation}
where $\mathcal{N}_i$ is the neighbourhood of the qubit $i$, i.e.\@ 
the set of all qubits $j\in V$ connected to the qubit $i\in G$. The 
state $\ket{G}$ can also be described through its stabilizer, which is 
the set $\mathcal{S} = \{S_1, \dots, S_{2^N}\} = \{\prod_{i=1}^N g_i^{x_i}:\{x_1,\dots, x_N\}\in\{0,1\}^N\}$. This means that $\mathcal{S}$ contains all possible 
products of the generators $g_i$, hence $S_i \ket{G} = \ket{G}$. We note 
that $\id \in \mathcal{S}$. The projector onto the state $\ket{G}$ reads
\begin{equation} \label{eq:GraphStatesDec}
\ketbra{G} = \frac{1}{2^N} \sum_{i=1}^{2^N} S_i.
\end{equation}

Defined like that, graph states are a subset of the more general stabilizer states~\cite{Gottesman,Audenaert2005,Hein2006}.
First, one has to consider an abelian subgroup $\mathcal{S}$ of the Pauli group $\mathcal{P}_N$ on $N$ qubits that does not contain the operator $-\id$. To that 
set corresponds a vector space $V_\mathcal{S}$ that is said to be stabilized by $\mathcal{S}$, i.e.\@ every element of this vector space is stable under the action of any element of $\mathcal{S}$. We call stabilizers that lead only to one state full-rank stabilizers, i.e.\@ there is a unique common eigenstate with eigenvalue $+1$. That state is completely determined by a subset of $N$ elements of $\mathcal{S}$. As an example, one may consider the GHZ state, as explained in the main text. Indeed, it is the unique common eigenstate of $XXX$, $1ZZ$ and $ZZ1$.  One can show that any stabilizer state is, after a suitable local unitary transformation, equivalent to a graph state.

More precisely, the local unitary transformations that map any stabilizer state to a graph state belong to the so-called \emph{local Clifford group} $\mathcal{C}_1$. The local Clifford group is defined as the normalizer of the single-qubit Pauli group, i.e.\@ $U\mathcal{P}_1 U^{\dagger}=\mathcal{P}_1$ for all $U\in\mathcal{C}_1$. By construction, the stabilizer formalism is preserved under the action of the local Clifford group, and hence, an interesting question is under which conditions two graph states (or two stabilizer states) are equivalent under local Cliffords. For graph states this question has a simple solution in terms of graphical operations that determine their equivalence. Namely, two graph states are equivalent under the action of the local Clifford group if and only if their corresponding graphs are equivalent under a sequence of \emph{local complementations}~\cite{Nest2004}. For a given graph $G$ and vertex $i\in V$ the local complement $G'$ of $G$ at the vertex $i$ is constructed in two steps. First, we have to determine the neighborhood $N(i)\subset V$ of the vertex $i$ and then the induced subgraph is inverted, i.e. considering all possible edges in the neighborhood any pre-existing edge is removed and any non-existing edge is added. E.g. having a graph $V=\{1,2,3\}$ and $E=\{(1,2),(2,3),(3,1)\}$, a local complementation on vertex $1$ results in the graph with edges $E=\{(1,2),(3,1)\}$.

\begin{figure}[t!]
    \centering
    \includegraphics[scale=.15]{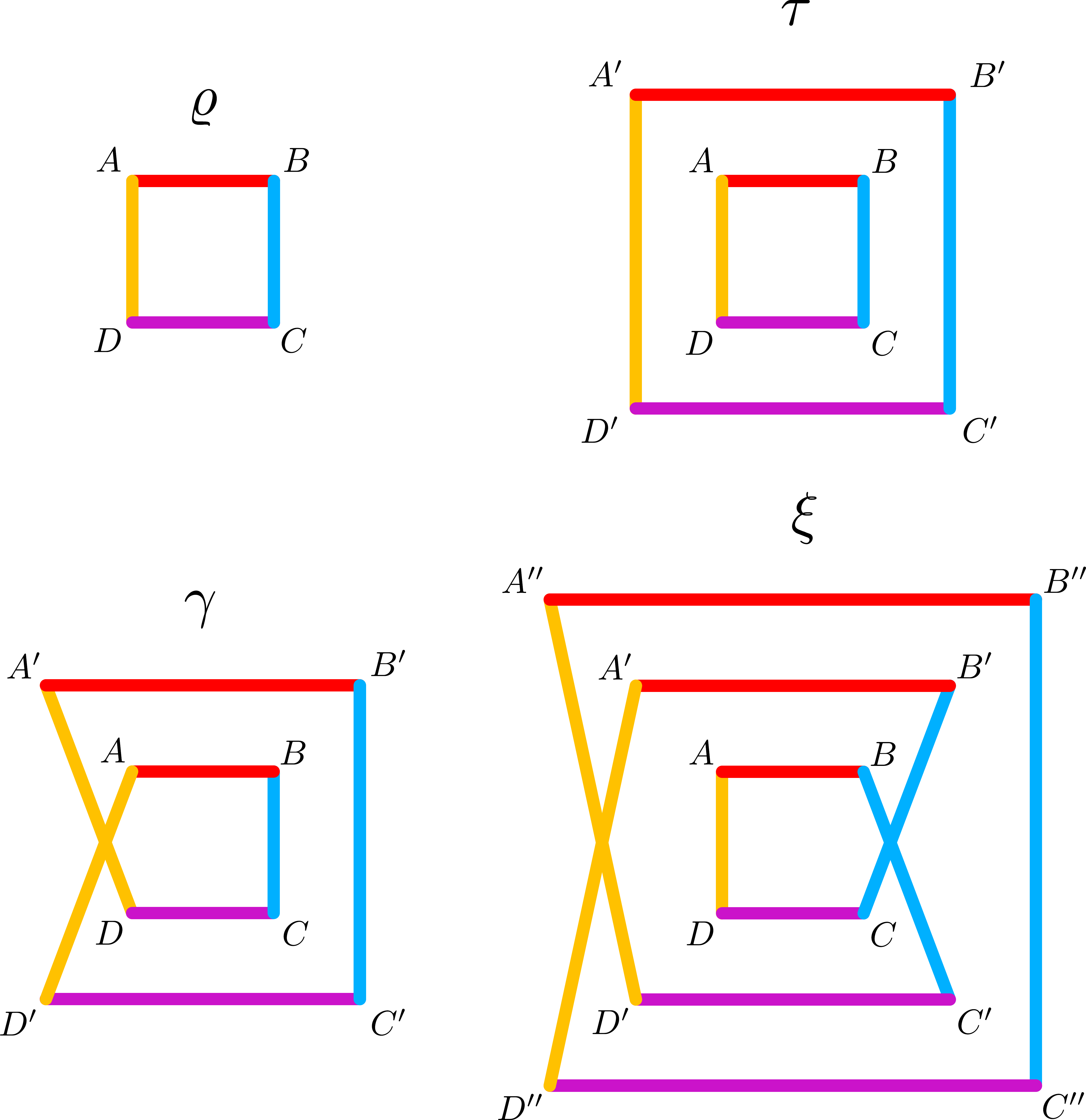}
    \caption{\textit{Square network and three of its inflations.} Similar to the triangle network of Supplementary Fig.\@ \ref{fig:triangleandinflationsAPPENDIX}, the states $\tau$ and $\gamma$ are generated using two copies of the sources and channels used to generate $\varrho$, without and with rewireing respectively. Then, one goes to a high order inflation by using three copies of the sources and the local channels. By rewireing according to the figure, one gets the inflated state $\xi$. Due to the way network states are generated, several of the marginals of $\varrho$, $\tau$, $\gamma$ and $\xi$ are identical.
    }
    \label{fig:Square+I123}
\end{figure}

\subsection{Estimate for the cluster state fidelity}
We aim at computing a bound on the fidelity of square network states 
to the four-qubit ring cluster state $\ket{C_4}$, i.e.\@ $F=\mathrm{max}\bra{Cl_4}\varrho\ket{Cl_4}$, were the maximum is 
taken over all square network states $\varrho$. The fidelity of a 
state $\varrho$ with the cluster state is given by $F =\frac{1}{16}\sum_{i=0}^{15}\mean{S_i}_\varrho$, where $\{S_i\}$ is the 
stabilizer of $\ket{C_4}$, consisting of $S_0 = 1111$ and further 
elements given in Supplementary Table \ref{stabilizertable}.

\begin{table*} 
    \begin{tabular}{ |c|c|c|c|c| } 
        \hline
        \multicolumn{5}{|c|}{Stabilizer elements} \\ \hline
        $S_1 = XZ1Z$ & $S_5 = YYZZ$ & $S_9 = X1X1$ & $S_{11} = -YXY1$ & $S_{15} = XXXX$\\ 
        $S_2 = ZXZ1$ & $S_6 = YZZY$ & $S_{10}=1X1X$ & $S_{12} = -1YXY$ & \\ 
        $S_3 = 1ZXZ$ & $S_7 = ZYYZ$ &  & $S_{13} = -Y1YX$ & \\ 
        $S_4 = Z1ZX$ & $S_8 = ZZYY$ &  & $S_{14} = -XY1Y$ & \\ \hline
        $\mean{\cdot}_\varrho = \Theta$ & $\mean{\cdot}_\varrho = \Lambda$ & $\mean{\cdot}_\varrho = \Xi$ & $\mean{\cdot}_\varrho = -\Sigma$ & $\mean{\cdot}_\varrho = \Omega$  \\
        \hline
    \end{tabular}
\caption{Elements of the stabilizer of the four-qubit cluster states. The qubit indices $A,B,C,D$ are suppressed here. See the text for further details.}
\label{stabilizertable}
\end{table*}

The symmetry of $\ket{C_4}$ implies that one can assume the network state $\varrho$ that maximizes the fidelity to admit the same expectation value on operators from the same column, as denoted in the last row of Supplementary Table \ref{stabilizertable}.

As explained in the main text, the general idea is to notice that some stabilizers of $\ket{C_4}$ anticommute in the appropriate inflation, and then use the fact that anticommuting operators cannot all have large expectation values for a given state.
In the $\tau$-inflation of the square network (see Supplementary Fig.\@ \ref{fig:Square+I123}), 
the observable $X_{B}X_{D'}$ and $Y_AY_BZ_CZ_D$ anticommute, and since $\tau_{BD'}=\varrho_{BD}$ and $\tau_{ABCD}=\varrho_{ABCD}$ one has
\begin{equation}
        \Xi^2 + \Lambda^2 \leq 1.
\end{equation}
Secondly, we have Eq.\@ (9) of the main text that we reformulate as
\begin{equation}
    \Xi^2 + \Theta^2 + \Sigma^2 \leq 1.
\end{equation}

At last, we consider the observables $X_AX_BX_CX_D$ and $Z_AX_BZ_{A'}X_{D'}$ in the inflation $\tau$. However, the latter is not a stabilizer 
of the four-qubit ring cluster state, but we have $\mean{Z_AX_BZ_{A'}X_{D'}}_\tau = \mean{Z_AX_BZ_{A'}X_D}_\gamma$. Then, using the fact that for commuting dichotomic measurements, $\mean{M_1 M_2} \geq \mean{M_1 M_3} + \mean{M_2 M_3} - 1$~\cite{Navascues2020}, one gets $\mean{Z_AX_BX_DZ_{A'}}_\gamma \geq \mean{Z_AX_BZ_C}_\gamma +\mean{Z_CX_DZ_{A'}}_\gamma -1$.
Since $X_AX_BX_CX_D$ and $Z_AX_BZ_{A'}X_{D'}$ are anticommuting, from constraints on the marginals, one finally gets
\begin{equation}
    \begin{split}
        2\Theta-1 \leq \sqrt{1-\Lambda^2}.
    \end{split}
\end{equation}
Analogously,
\begin{align}
    & 2\Sigma-1 \leq \sqrt{1-\Lambda^2},\\
    & 2\Theta-1 \leq \sqrt{1-\Omega^2},\\
    & 2\Sigma-1 \leq \sqrt{1-\Omega^2}.
\end{align}
By exploiting all these inequalities as constraints on the maximization of the
fidelity, we finally get
\begin{equation}
    \begin{split}
        F &= \frac{1}{16} \big( 1 + 4\Theta + 4 \Lambda + 2 \Xi - 4 \Sigma + \Omega \big)
        \\ &\leq 0.737684,
    \end{split}
\end{equation}
hence all states with a larger fidelity to the four-qubit ring cluster state cannot be prepared in a square network.

\subsection{Proof of Observation 1}

Here we provide a detailed proof of Observation 1. To do so, we first 
need to prove the following theorem.

	\begin{figure} 
    \centering
    \includegraphics[scale = .3]{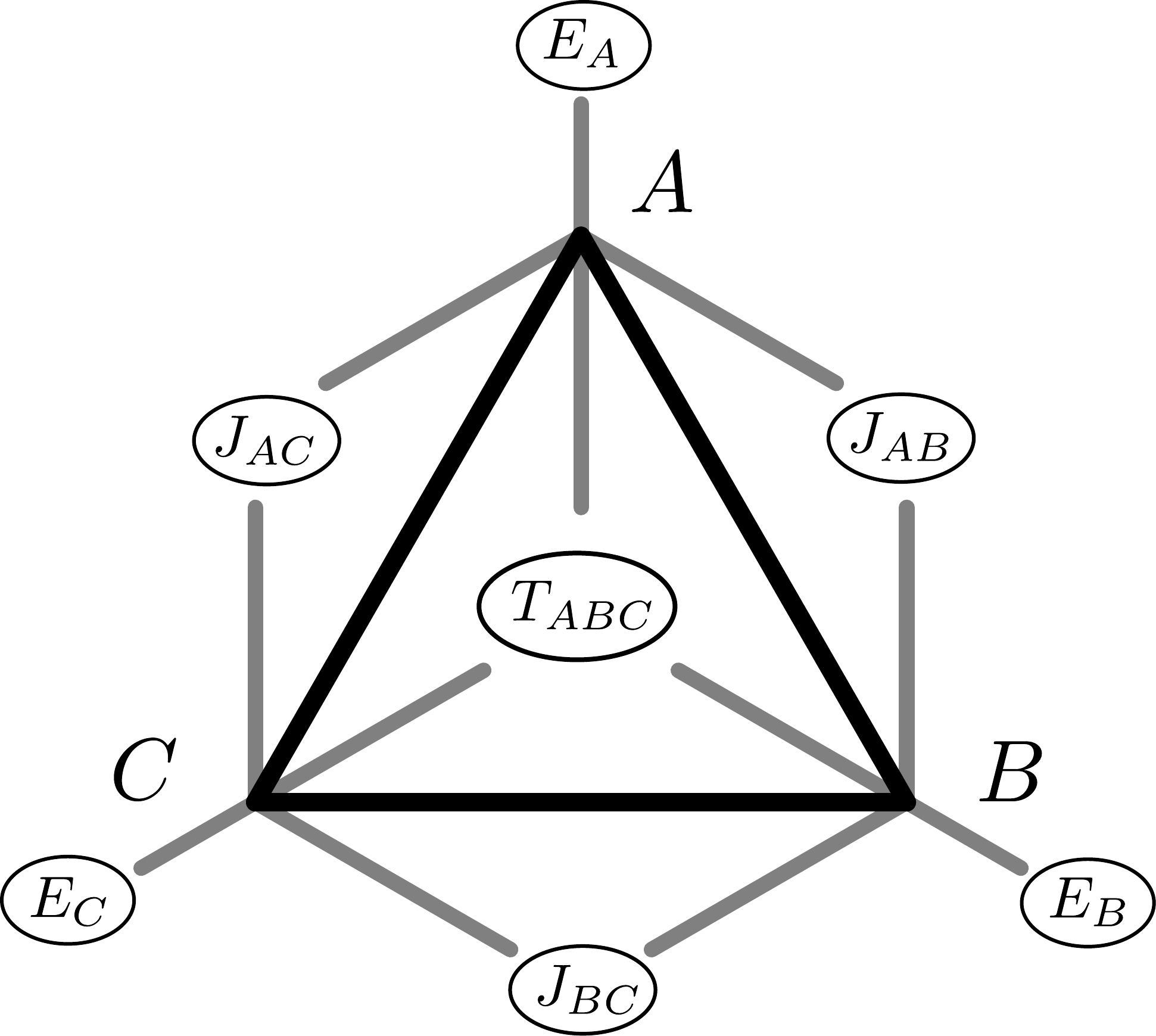}
    \caption{\textit{Illustration of the conditions in Theorem 3.} We consider a triangle
    in the graph of a graph state. The vertices $A,B,C$ share some neighbourhoods, 
    which are indicated by black ellipses. Note that the graph may contain further
    vertices, also the vertices in different neighbourhoods may be connected.
    See the text for further details.}
    \label{fig:ghg}
    \end{figure}
    
\setcounter{theorem}{2}
\begin{theorem}\label{thm:triangle}
	Let $G(V,E)$ be a graph as in Supplementary Fig.~\ref{fig:ghg} with three mutually connected vertices $A$, $B$ and $C$ and let 
	\begin{align}
		& T_{ABC} = \mathcal{N}_A\cap \mathcal{N}_B \cap \mathcal{N}_C,\\ 
		& J_{AB} = (\mathcal{N}_A\cap \mathcal{N}_B)\setminus T_{ABC},\\
		& E_A = \mathcal{N}_A \setminus (\mathcal{N}_B\cup \mathcal{N}_C),
	\end{align}
	etc., where $\mathcal{N}_X$ is the neighborhood of $X$ ($X=A,B,C$).
	Then the graph state $\ket{G}$ cannot originate from any network with only
	bipartite sources, if one of the following conditions is satisfied:
	\begin{enumerate}
		\item $J_{XY} = J_{XZ} = \emptyset$, where $X,Y,Z$ is a permutation of $A, B, C$;
		\item $E_X = E_Y = \emptyset$, where $X\neq Y \in \{A, B, C\} $ ;
		\item $E_X = J_{XY} = \emptyset$, where  $X\neq Y \in \{A, B, C\} $.
	\end{enumerate}
\end{theorem}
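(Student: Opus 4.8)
The plan is to mimic the GHZ/cluster-state arguments from the main text: exhibit, in a suitable inflation of the network associated with a chosen triangle in $G$, three (or more) pairwise anticommuting stabilizer-type observables whose expectation values, via the marginal identifications, reduce to stabilizer expectation values of $\ket{G}$ (all equal to $1$). Since pairwise anticommuting $\pm1$-observables obey $\sum_i \mean{M_i}^2 \le 1$, this yields a contradiction whenever the three observables can simultaneously equal $1$ on $\ket{G}$. The three cases of the theorem will correspond to the three combinatorially distinct ways the shared neighbourhoods $T_{ABC}$, $J_{XY}$, $E_X$ can be made to vanish so that the required anticommutation holds.

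\textbf{First}, I would write out the stabilizer generators $g_A, g_B, g_C$ attached to the triangle vertices: $g_A = X_A Z_B Z_C \prod_{j\in \mathcal{N}_A\setminus\{B,C\}} Z_j$, and likewise for $g_B,g_C$. The key structural observation is that products like $g_B g_C$ have $Z$-type support on $A$ (from the triangle edges) and cancel on vertices lying in $T_{ABC}$ or in $J_{BC}$, but survive on $E_B, E_C$ and on $J_{AB}, J_{AC}$. So $g_B g_C$ restricted appropriately looks like $Z_A Z_B Z_C \cdot(\text{stuff on }J_{AB}\cup J_{AC}\cup E_B\cup E_C)$. In the triangle inflation (Fig.~\ref{fig:triangleandinflationsAPPENDIX}-style, but for the sub-network on $A,B,C$ and their neighbours), I would use the GHZ-type trick: an observable built from $g_A$ on one copy anticommutes with one built from $g_B g_C$-like data on a rewired copy, precisely because $X_A$ and $Z_A$ anticommute while everything else either commutes or is arranged to commute. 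The vanishing of $E_X$, $J_{XY}$ in cases 2 and 3 is exactly what guarantees that no extra anticommutation on the neighbour qubits spoils the pattern; case 1 ($J_{XY}=J_{XZ}=\emptyset$) similarly cleans up the support around vertex $X$.

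\textbf{Then}, for each of the three cases I would: (i) specify which inflation (doubled sources with a specific rewiring, as in Fig.~\ref{fig:triangleandinflationsAPPENDIX}) makes the relevant marginals of $\varrho$, $\tau$, $\gamma$ coincide; (ii) identify the two or three observables and check pairwise anticommutation by a parity/support count on each qubit; (iii) invoke $\sum_i\mean{M_i}^2\le 1$ for network states together with the fact that each $\mean{M_i}$ equals a product of $\mean{S_j}_{\ket{G}}=1$ (possibly after using the classical inequality $\mean{M_1M_2}\ge\mean{M_1M_3}+\mean{M_2M_3}-1$ to route a non-stabilizer observable through $\gamma$, exactly as done for the cluster state above). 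The contradiction $2 \le 1$ (or $3\le 1$) excludes $\ket{G}$, and the robustness against white noise follows because $\mean{M_i}_\varrho = p$ for the noisy state, giving threshold $p>1/\sqrt{2}$ or better.

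\textbf{The main obstacle} I expect is bookkeeping: verifying pairwise anticommutation of the chosen observables in full generality (arbitrary neighbourhoods, arbitrary further edges among the $E_X$, $J_{XY}$, $T_{ABC}$ vertices, and edges \emph{between} those sets) requires a careful case analysis of how Pauli supports overlap on each vertex, and making sure the rewiring of sources in the inflation actually identifies the marginals one needs. The conditions in the theorem are precisely the hypotheses under which this bookkeeping closes; the art is in choosing, for each case, the inflation and the observable triple so that every unwanted $Z$ on a shared-neighbour qubit either cancels in a product or is matched by a $Z$ (not an $X$) on the rewired copy, so commutation is preserved. I would organize this by first handling the cleanest case (say case 1, closest to the bare GHZ argument), then case 3, then case 2, reusing the support-counting lemma each time.
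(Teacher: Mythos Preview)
Your high-level plan matches the paper's, including the use of the classical inequality to ``route'' one correlation through an inflation, but there is a genuine gap in the execution and one computational error that would derail you. First, the error: $g_B g_C$ has \emph{identity} on $A$ (since $A\in\mathcal N_B\cap\mathcal N_C$, the two $Z_A$'s cancel) and $Y$, not $Z$, on $B$ and $C$; the correct form is $g_Xg_Y=Y_XY_YZ_{R_{XY}}$ with $R_{XY}=E_X\cup E_Y\cup J_{XZ}\cup J_{YZ}$. So your claimed source of anticommutation ``$X_A$ vs.\ $Z_A$'' is not available; in the paper the anticommutation comes from a $Y_C$ against a $Z_C$ (or the analogous pair after permutation).

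The more serious gap is the inflation itself. A single $\gamma$-type rewiring of the $A$--$C$ link is enough to run the classical inequality and lower-bound $\mean{Y_AY_CZ_{\mathcal R}}_\gamma$ with $\mathcal R=R_{AC}$, but in that same $\gamma$ you cannot evaluate any useful anticommuting partner: every candidate stabilizer $g_B$, $g_A$, or $g_Ag_Bg_C$ has both $A$ and $C$ in its support, and the $A$--$C$ link is broken in $\gamma$, so its expectation no longer equals the value on $\varrho$. The paper resolves this with a \emph{second} inflation $\eta$, built from two disconnected copies of the complete network where one rewires precisely the links from $\mathcal R$ to $A$ over to $A'$ (and symmetrically). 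This achieves two things at once: the marginal on $\{A',C\}\cup\mathcal R$ in $\eta$ matches the corresponding marginal in $\gamma$, so $\mean{Y_{A'}Y_CZ_{\mathcal R}}_\eta=\mean{Y_AY_CZ_{\mathcal R}}_\gamma$, \emph{and} the marginal on the support of the partner observable equals that of $\varrho$ provided that support is disjoint from $\mathcal R$. The three conditions of the theorem are exactly what make this disjointness hold for the three natural partners: $J_{AB}=J_{BC}=\emptyset$ lets you use $g_B$; $E_A=E_C=\emptyset$ lets you use $g_Ag_Bg_C$; and $E_A=J_{AB}=\emptyset$ lets you use $g_A$. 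Without the $\mathcal R$-dependent rewiring in $\eta$, your step (ii) cannot close, and this is the main missing idea rather than mere bookkeeping.
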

\begin{proof}
	We only need to show that the graph state $\ket{G}$ cannot be generated in the network with the complete graph $K$ as shown in Supplementary Fig.~\ref{fig:ghk}, where the number of vertices is the same than in $G$. 
	\begin{figure} 
    \centering
    \includegraphics[scale = .2]{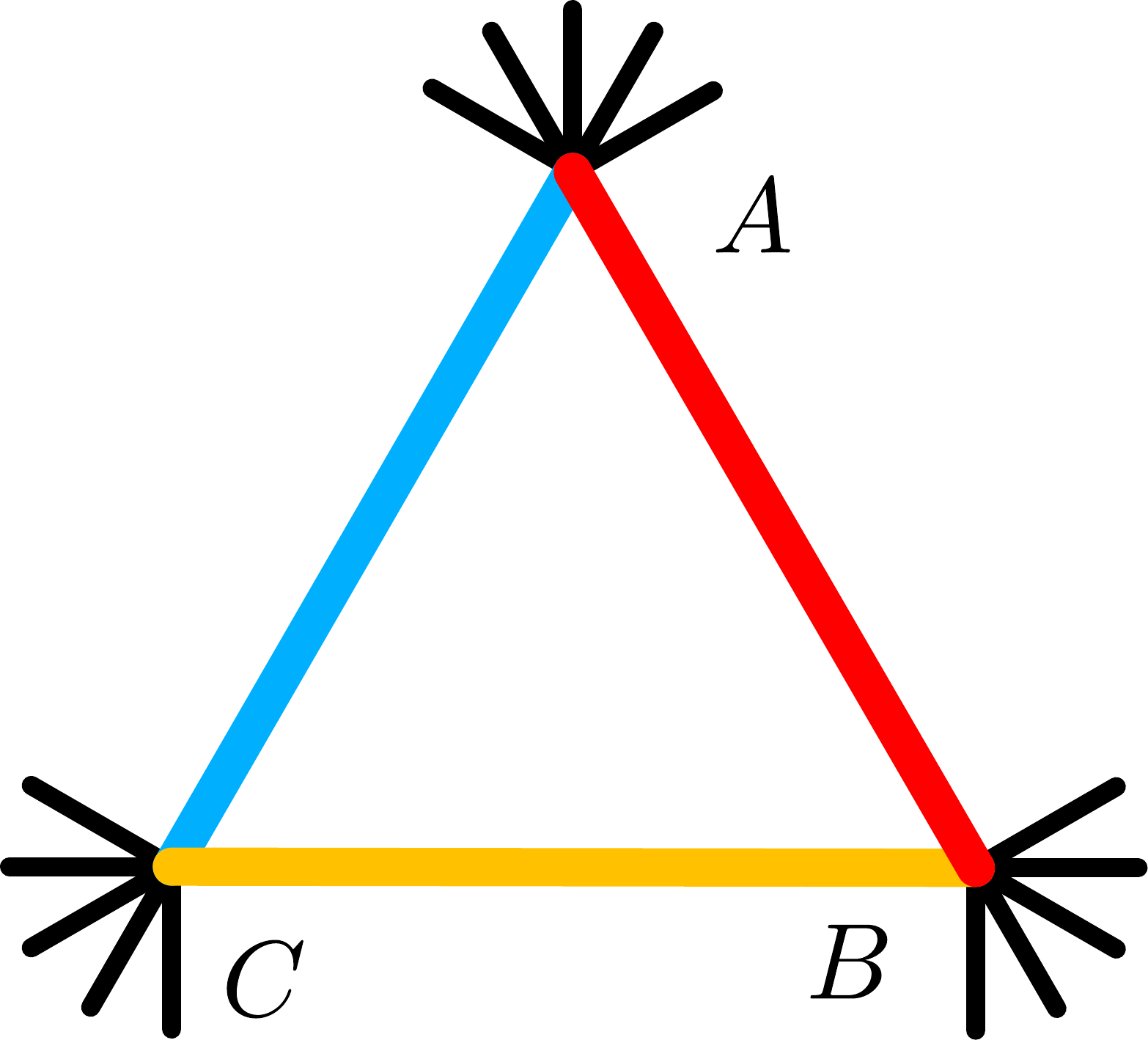}
    \caption{\textit{The bipartite network with complete graph $K$.} This is the network that should generate the graph state from Supplementary Fig.~\ref{fig:ghg}. Note that here the bipartite links of the network are shown, and not the edges of the graph of the graph state.}
    \label{fig:ghk}
    \end{figure}
	
	We start our discussion with the inflation $\gamma$ as in Supplementary Fig.~\ref{fig:ghgamma}. 
	\begin{figure} 
    \centering
    \includegraphics[scale = .2]{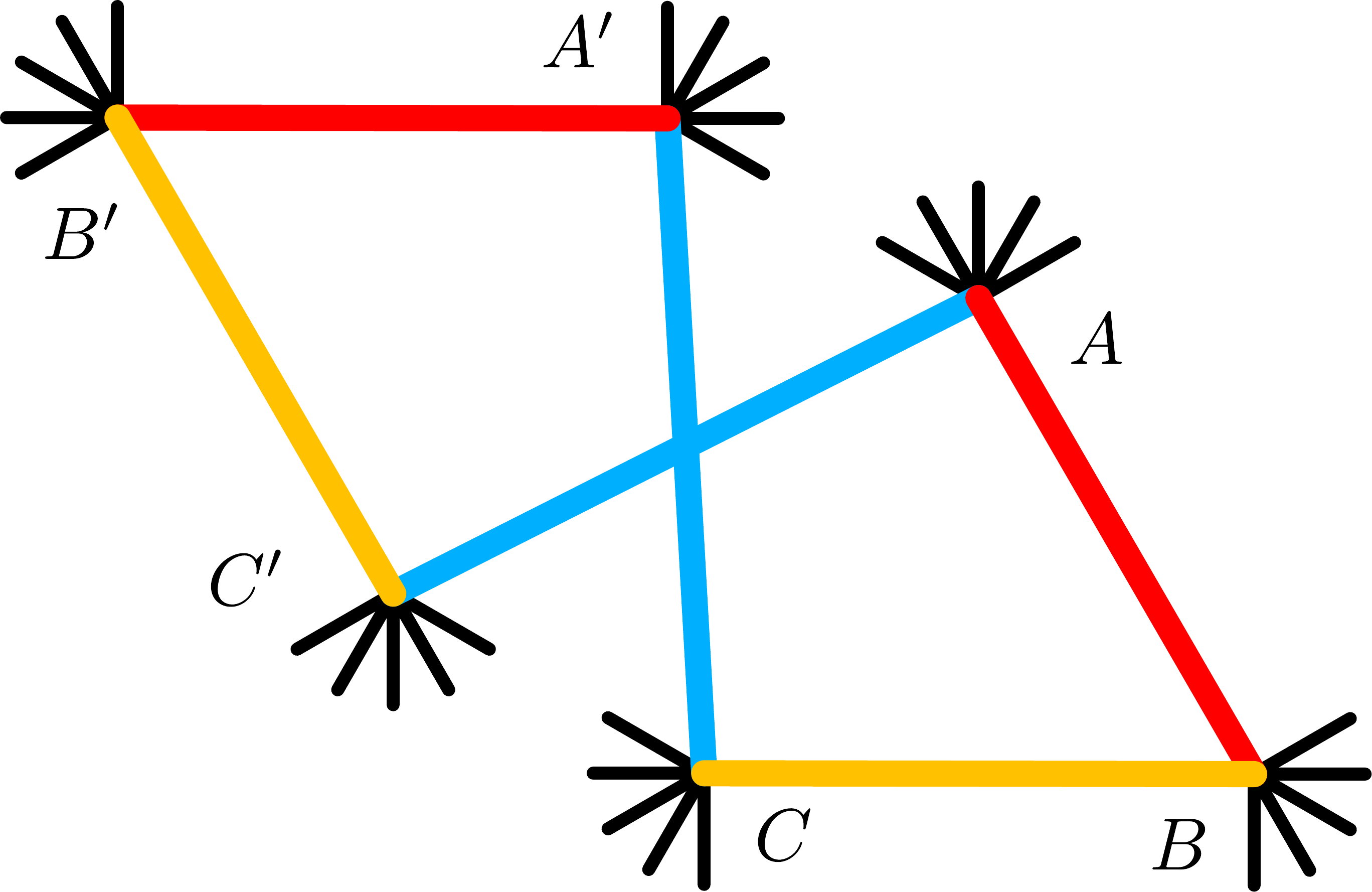}
    \caption{\textit{The network for the inflation $\gamma$. }This is a two-copy 
    inflation, where only the links between $AC$ and $A'C'$ are rewired.}\label{fig:ghgamma}
    \end{figure}
    Let us denote 
	\begin{equation}
		g_A = X_A Z_{\mathcal{N}_A}, \quad g_B = X_B Z_{\mathcal{N}_B}, \quad g_C = X_C Z_{\mathcal{N}_C},
	\end{equation}
	where $\mathcal{N}_A$ is the neighborhood of the vertex $A$ in the graph $G$.
	Then
	\begin{align}
		& g_A g_B = Y_AY_B Z_{R_{AB}},\\ 
		& g_A g_C = Y_AY_C Z_{R_{AC}},\\ 
		& g_B g_C = Y_BY_C Z_{R_{BC}},
	\end{align}
	where $R_{AB} = {E_A\cup E_B\cup J_{AC}\cup J_{BC}}$, and analoguously
	for $R_{AC}$ and $R_{BC}$.
	Since $g_A g_C = (g_A g_B)(g_B g_C)$, we can apply the usual argument
	from the GHZ state to conclude that
	\begin{equation}\label{eq:cond0}
		\mean{Y_AY_C Z_{R_{AC}}}_\gamma \ge \mean{Y_AY_BZ_{R_{AB}}}_\gamma + \mean{Y_BY_CZ_{R_{BC}}}_\gamma - 1.
	\end{equation}
	By comparing the marginals of the states $\gamma$ and $\varrho$, we have
	\begin{align}\label{eq:cond0m}
		& \mean{Y_AY_BZ_{R_{AB}}}_\gamma = \mean{Y_AY_BZ_{R_{AB}}}_\varrho,\\ 
		& \mean{Y_BY_CZ_{R_{BC}}}_\gamma = \mean{Y_BY_CZ_{R_{BC}}}_\varrho,
	\end{align}
	since $A, C \not\in R_{AB} \cup R_{BC}$. In the following, we will also
	use the notation 
	\begin{equation}
	\mathcal{R} = R_{AC}= E_A\cup E_C\cup J_{AB}\cup J_{BC}
	\end{equation}
	in order to avoid a plethora of indices. 
	
	Then, we consider the inflation $\eta$ shown in Supplementary Fig.~\ref{fig:gheta}.
	This is constructed as follows. First, one has two disconnected complete
	graphs, $K$ and $K'$. Then, one takes the subset $\mathcal{R}$ as a subgraph
	of $K$ and rewires all connections from vertices in $\mathcal{R}$ to $A$ to 
	$A'$. Similarly, one takes the subset $\mathcal{R'}$ as a subgraph
	of $K'$ and rewires all connections from vertices in $\mathcal{R'}$ to $A'$ to 
	$A$.

	By comparing the marginals of $\gamma$ and $\eta$, this time we have
	\begin{equation}\label{eq:gammaeta}
		\mean{Y_AY_CZ_{\RR}}_\gamma = \mean{Y_{A'}Y_{C}Z_{\RR}}_\eta.
	\end{equation}
	With this reasoning, we have established that the correlation $\mean{Y_{A'}Y_{C}Z_{\RR}}_\eta$ is large in $\eta$, if the original
	state $\varrho$ is close to the graph state. Now we have to identify
	another anticommuting observables in $\eta$ with large expectation value
	in order to arrive at a contradiction. 
	
	A natural first candidate is the stabilizing operator
	\begin{equation}
	{g}_B = X_B Z_AZ_C Z_{\mathfrak{R}}.
	\end{equation}
	with $\mathfrak{R} =E_{B}\cup J_{AB} \cup J_{BC} \cup T_{ABC}$
	of the graph state. This, however, cannot always be identified
	with some observable in the inflation $\eta$.
	Still, if 
	\begin{equation}
	\RR \cap \mathfrak{R} = \emptyset \Leftrightarrow J_{AB} = J_{BC} = \emptyset
	\end{equation}
	the observable ${g}_B$ is not affected by the rewiring in $\eta$, we 
	have $\mean{{g}_B}_\varrho=\mean{{g}_B}_\eta$. Moreover, in $\eta$
	the observables $g_B$ and $Y_A'Y_CZ_{\RR}$ anticommute. So, in this case we have
	\begin{equation}
	\mean{Y_{A}Y_{C}Z_{\RR}}_\gamma^2 + \mean{{g}_B}_\eta^2 \le 1.
	\end{equation}
	and for the original $\varrho$ we arrive at the condition (assuming $\mean{Y_A Y_B Z_{R_{AB}}}_\varrho + \mean{Y_BY_CZ_{R_{BC}}}_\varrho - 1\geq 0$, as in Eq.~(6)
	in the main text)
	\begin{equation}
	 (\mean{Y_A Y_B Z_{R_{AB}}}_\varrho + \mean{Y_BY_CZ_{R_{BC}}}_\varrho - 1)^2 
	 + \mean{{g}_B}_\varrho^2 \leq 1
	\end{equation}
    for states that can be prepared in the network.
    \begin{figure} 
    \centering
    \includegraphics[scale = .20]{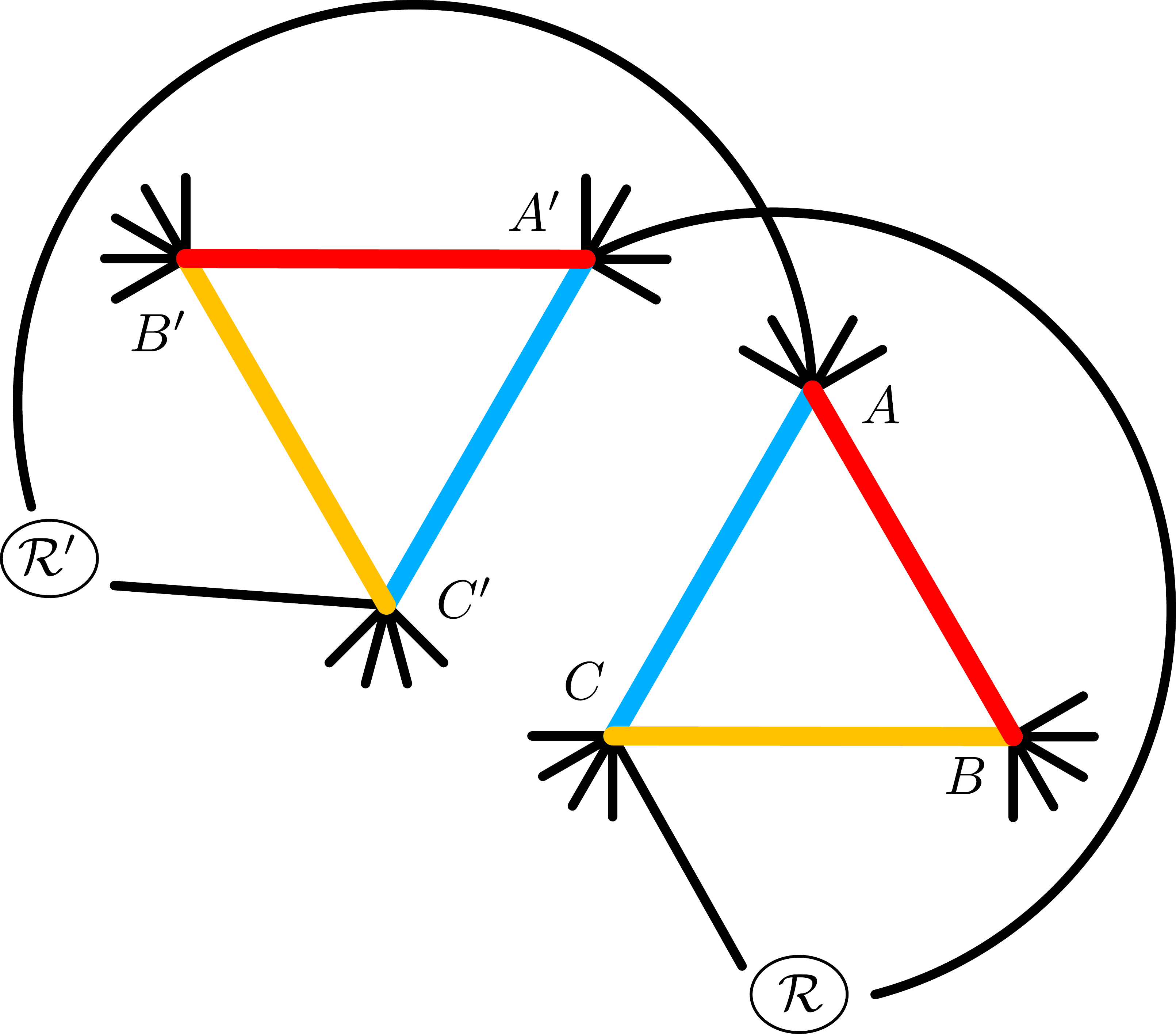}
    \caption{\textit{The network for the inflation $\eta$.} This is constructed as follows. First, one has two disconnected complete
	graphs, $K$ and $K'$. Then, one takes the subset $\mathcal{R}$ as a subgraph
	of $K$ and rewires all connections from vertices in $\mathcal{R}$ to $A$ to 
	$A'$. Similarly, one takes the subset $\mathcal{R'}$ as a subgraph
	of $K'$ and rewires all connections from vertices in $\mathcal{R'}$ to $A'$ to 
	$A$.}
    \label{fig:gheta}
    \end{figure}

	Furthermore, in the case that $E_A = E_C = \emptyset$ and by making 
	use of the operator
	\begin{equation}
	{g}_A{g}_B{g}_C = X_A X_B X_C Z_{E_B \cup T_{ABC}},  
	\end{equation}
	we arrive at a similar condition on $\varrho$ that is also 
	violated by the graph state $\ket{G}$.
	
	Finally, if $E_A = J_{AB} = \emptyset$, one can make use of 
	the operator
	\begin{equation}
	{g}_A = X_AZ_BZ_CZ_{J_{AC} \cup T_{ABC}},  
	\end{equation}
	in order to arrive at a condition on $\varrho$ that is not satisfied by 
	the graph state $\ket{G}$.
	 
	 By permuting $A, B, C$ in the above argument, we finish our proof.
\end{proof}

In the following, we identify some basic situations where the Theorem 
\ref{thm:triangle} can be applied. First, we show that the conditions
of Theorem \ref{thm:triangle} are met, if there is one vertex with a 
small degree.

\begin{corollary}\label{coro:d3}
	Let $G$ be connected graph with no less than three vertices. If its 
	minimal degree is no more than three, the graph state $\ket{G}$ cannot 
	be generated by any network with bipartite sources.
\end{corollary}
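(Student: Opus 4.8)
The plan is to reduce everything to Theorem~\ref{thm:triangle} via local complementations. Two graph states whose graphs are related by a sequence of local complementations differ only by a local unitary~\cite{Nest2004}, and preparability in a network with bipartite sources is clearly closed under local unitaries (the single-qubit unitaries can be absorbed into the parties' local channels). Hence it suffices to produce, starting from $G$, a local-complementation-equivalent graph $G'$ that contains three mutually connected vertices satisfying one of the three conditions of Theorem~\ref{thm:triangle}; then $\ket{G'}$ --- and therefore $\ket{G}$ --- cannot come from a bipartite-source network. First I would fix a vertex $v$ with $d := \deg(v)\le 3$ (which exists by hypothesis); since $G$ is connected with $|V|\ge 3$ we have $d\ge 1$, and I would split into the cases $d=1,2,3$, constructing the required triangle in each.

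For $d=1$, let $w$ be the unique neighbour of $v$; then $\deg(w)\ge 2$, since otherwise $\{v,w\}$ would be an isolated component. A local complementation at $w$ toggles all edges inside $\mathcal{N}_w$ but leaves $\mathcal{N}_w$ itself unchanged, so afterwards $v$ is adjacent to every vertex of $\mathcal{N}_w$; thus $v$ and $w$ become \emph{true twins}, i.e.\ adjacent and with identical neighbourhoods everywhere else. Choosing any $u\in\mathcal{N}_w\setminus\{v\}$ yields a triangle $\{v,w,u\}$, and because $v$ and $w$ are twins every vertex outside this triangle that neighbours one of them neighbours the other, so $E_v=E_w=\emptyset$ and condition~2 holds.

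For $d=2$ with $\mathcal{N}_v=\{a,b\}$: if $a$ and $b$ are not already adjacent, a local complementation at $v$ makes them so while leaving $\mathcal{N}_v$ unchanged. Then $\{v,a,b\}$ is a triangle with no vertex of $\mathcal{N}_v$ outside it, so $E_v=J_{va}=\emptyset$ and condition~3 applies. For $d=3$ with $\mathcal{N}_v=\{a,b,c\}$: a local complementation at $v$ complements the induced subgraph on $\{a,b,c\}$ (in particular it turns the edgeless subgraph into a triangle), so after at most one such move the induced subgraph on $\mathcal{N}_v$ has at least one edge, while $\mathcal{N}_v$ is still $\{a,b,c\}$. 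I would then pick a triangle $\{v,x,y\}$ with $x,y\in\{a,b,c\}$; the remaining vertex $z$ is the \emph{only} neighbour of $v$ outside this triangle, and the argument closes according to how many of $x,y$ are adjacent to $z$: if $z$ is adjacent to both then $z\in T_{vxy}$, if to neither then $z$ is private to $v$, and in either case $J_{vx}=J_{vy}=\emptyset$, which is condition~1 (with $X=v$); if $z$ is adjacent to exactly one, taking $y$ to be the one it is not adjacent to gives $E_v=\emptyset$ and $J_{vy}=\emptyset$, which is condition~3. In every case Theorem~\ref{thm:triangle} applies to the transformed graph, which proves the corollary.

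The casework above is routine. The point that needs care is the bookkeeping of the sets $T_{XYZ}$, $J_{XY}$ and $E_X$ of Theorem~\ref{thm:triangle}: they must be read as neighbourhoods \emph{restricted to vertices outside the chosen triangle}, and one has to track the membership of the single leftover vertex in the $d=3$ analysis. The reason the bound $d\le 3$ is exactly what is needed is that it leaves at most one vertex of $\mathcal{N}_v$ outside a triangle through $v$ (and none when $d\le 2$), so one of the three conditions is automatically forced; for $d=1$ the analogous effect is produced by the local complementation at $w$, which synchronises the two neighbourhoods and empties $E_v$ and $E_w$.
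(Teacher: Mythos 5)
Your proof is correct and follows essentially the same route as the paper's: a case split on the minimal degree $d=1,2,3$, with a local complementation (at $w$ for $d=1$, at $v$ for $d=2,3$) to manufacture a triangle meeting one of the hypotheses of Theorem~\ref{thm:triangle}, using the local-unitary invariance of network preparability. The only differences are cosmetic — you invoke condition~2 in the degree-one case where the paper uses condition~3, and your degree-three bookkeeping needs only one edge inside $\mathcal{N}_v$ rather than two — and your remark that $E_X$, $J_{XY}$, $T_{XYZ}$ must be read as restricted to vertices outside the triangle correctly matches the theorem's intended usage.
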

\begin{proof}
The proof will be done successively for minimal degree one, two and three.
	
	Let $v$ be a vertex whose degree is one and let $w$ be the vertex connected to $v$. Since $G$ is a connected graph with no less than three vertices,  
	\begin{equation}
		\mathcal{N}_w \setminus \{v\} \neq \emptyset, 
	\end{equation}
	where $\mathcal{N}_w$ is the neighbourhood of $w$. If we apply local 
	complementation on the vertex $w$, we obtain a new graph  $G'$, where 
	\begin{equation}
	u \sim v, \quad \forall u\in \mathcal{N}_w \setminus \{v\},
	\end{equation}
	where $u\sim v$ means that the vertices $u, v$ are connected.
	
	By setting
	\begin{equation}
		B = w, \quad A = v, \quad C = u_0,
	\end{equation}
	where $u_0$ is an arbitrary vertex in $ \mathcal{N}_w \setminus \{v\} $, 
	we see that
	\begin{equation}
	    \begin{split}
	        &\mathcal{N}_B \setminus \{A,C\} = T_{ABC} \cup J_{AB}, \\ &A\sim B, \quad A\sim C, \quad B\sim C.
	    \end{split}
	\end{equation}
	Hence, $E_B = J_{BC} = \emptyset$, which implies  $\ket{G'}$ cannot be from any network with only bipartite sources. Since  $\ket{G}$ is equivalent to  $\ket{G'}$ up to a local unitary transformation, we come to the same conclusion for $\ket{G}$.
	
	Now, let us consider graphs with minimal degree equal to two, and let $v$ be a vertex with degree two, and $w$ and $u$ be the two vertices connected to  $v$. If  $w \not\sim u$, we can apply a local complementation on $v$ to connect them. Hence, we can assume  $w \sim u$ without loss of generality. By setting
	\begin{equation}
		A = w, \quad B = v, \quad C = u,
	\end{equation}
	we have
	\begin{equation}
		E_B = J_{AB} = J_{BC} = T_{ABC} = \emptyset,
	\end{equation}
	which leads to the desired conclusion.
	
	Lastly, let $v$ be a vertex with degree three and let $w$, $u$ and $t$ be the three vertices connected to $v$. Since we can apply local complementation on $v$, without loss of generality, we can assume that there are at least two edges among  $w$, $u$ and $t$, more specifically,  $w\sim u$ and $w\sim t$. Let us take
	\begin{equation}
		A = w, \quad B = v,  \quad  C = u,
	\end{equation}
	hence we see that
	\begin{equation}
		E_B = \emptyset, \quad J_{AB} = \emptyset \quad \text{or} \quad J_{BC}=\emptyset.
	\end{equation}
	This implies that the graph state $\ket{G}$ with a vertex of degree three cannot be from any network with only bipartite sources.
\end{proof}

Under certain conditions, we can also exclude a network structure for
graphs with minimal degree four:

\begin{corollary}\label{coro:d4}
	Let $G$ be a graph that has a vertex $v$ of degree four. If the induced subgraph on the neighborhood $\mathcal{N}_v$ is not a line graph, then $\ket{G}$ cannot be generated in any network with  bipartite sources. 
\end{corollary}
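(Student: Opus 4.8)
The plan is to reduce Corollary~\ref{coro:d4} to a single combinatorial configuration and then apply Theorem~\ref{thm:triangle} to it. The key preliminary fact is that, among all graphs on four vertices, the claw $K_{1,3}$ is the \emph{unique} one that is not a line graph: this is the special case of Beineke's forbidden-induced-subgraph characterization of line graphs in which the obstruction has at most four vertices (there $K_{1,3}$ is the only one). Since $\mathcal{N}_v$ has exactly four elements, the hypothesis ``$G[\mathcal{N}_v]$ is not a line graph'' therefore means precisely that $G[\mathcal{N}_v]\cong K_{1,3}$. So I only need to treat the case $\mathcal{N}_v=\{c,\ell_1,\ell_2,\ell_3\}$ with $c$ adjacent to each $\ell_i$ and the $\ell_i$ pairwise nonadjacent; note that these nonadjacencies hold in $G$ itself, not merely ``within $\mathcal{N}_v$'', because all four vertices lie in $\mathcal{N}_v$, so their mutual adjacencies in $G$ coincide with those of the induced subgraph.

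The second step is to choose the triangle fed into Theorem~\ref{thm:triangle}. I would set $B=v$, $A=c$, $C=\ell_1$. These are mutually connected: $A\sim B$ and $C\sim B$ since $c,\ell_1\in\mathcal{N}_v$, and $A\sim C$ since $c$ is the center of the claw. With this choice the sets appearing in the theorem collapse, essentially because $\mathcal{N}_B=\mathcal{N}_v$ is small and entirely ``absorbed'' by $\mathcal{N}_A$ and $\mathcal{N}_C$. Concretely: $c=A\in\mathcal{N}_C$ (since $c\sim\ell_1$), $\ell_1=C\in\mathcal{N}_A$, and $\ell_2,\ell_3\in\mathcal{N}_A=\mathcal{N}_c$, so every element of $\mathcal{N}_B$ lies in $\mathcal{N}_A\cup\mathcal{N}_C$ and hence $E_B=\emptyset$. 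Moreover, since $v$ has no neighbours outside $\mathcal{N}_v$ and the only element of $\mathcal{N}_v$ adjacent to $\ell_1$ is $c$, we get $\mathcal{N}_B\cap\mathcal{N}_C=\{c\}=\{A\}$, so $J_{BC}=\emptyset$ once the triangle vertices are excluded (the convention implicit in Theorem~\ref{thm:triangle} and already used in Corollary~\ref{coro:d3}). Thus condition~3 of Theorem~\ref{thm:triangle}, namely $E_X=J_{XY}=\emptyset$, holds with $X=B$ and $Y=C$, no matter how the remaining vertices of $G$ are attached, and the theorem immediately gives that $\ket{G}$ cannot be generated in any network with bipartite sources.

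Unlike in the proof of Corollary~\ref{coro:d3}, no local complementation is needed here: the claw neighbourhood already satisfies a condition of Theorem~\ref{thm:triangle} ``as is'', precisely because the two leaves not used as triangle vertices are automatically neighbours of $A=c$. The only non-routine ingredient — and therefore the main obstacle — is recognizing that the line-graph hypothesis is just a compact restatement of ``$G[\mathcal{N}_v]\cong K_{1,3}$'', together with the short bookkeeping check that this single configuration meets condition~3 of Theorem~\ref{thm:triangle} uniformly over all completions of $G$; no new inequality, inflation, or anticommutation estimate is required beyond what is already packaged into Theorem~\ref{thm:triangle}.
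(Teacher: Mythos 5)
Your proof is correct, and it rests on the same engine as the paper's — exhibiting a triangle through $v$ that meets one of the conditions of Theorem~\ref{thm:triangle} — but the combinatorial reduction is genuinely different. The paper enumerates all six isomorphism classes of four-vertex graphs up to complementation (complementation being available through a local complementation at $v$), applies Theorem~\ref{thm:triangle} to five of them, and sets aside the sixth, the self-complementary $P_4$; in that route the claw is handled only indirectly, via its complement (a triangle plus an isolated vertex) and condition~1. You instead observe that Beineke's characterization collapses the hypothesis ``$G[\mathcal{N}_v]$ is not a line graph'' to the single configuration $G[\mathcal{N}_v]\cong K_{1,3}$, and you verify that the claw itself, with $B=v$, $A$ the center and $C$ a leaf, already satisfies condition~3 ($E_B=J_{BC}=\emptyset$, under the convention — correctly identified by you and used implicitly throughout the paper — that the triangle vertices are excluded from the sets $E_X$, $J_{XY}$, $T_{ABC}$). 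This buys a shorter argument with no case analysis and no local complementation. What it gives up is the extra information contained in the paper's enumeration, namely that the method in fact covers every four-vertex neighbourhood except $P_4$ (so the stated line-graph hypothesis is far from tight); your proof establishes exactly the corollary as stated, no more.
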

\begin{proof}
	As shown in Supplementary Fig.~\ref{fig:d4}, there are six inequivalent graphs with four vertices up to permutation and complementation. 
	\begin{figure} 
    \centering
    \includegraphics[width=0.42\textwidth]{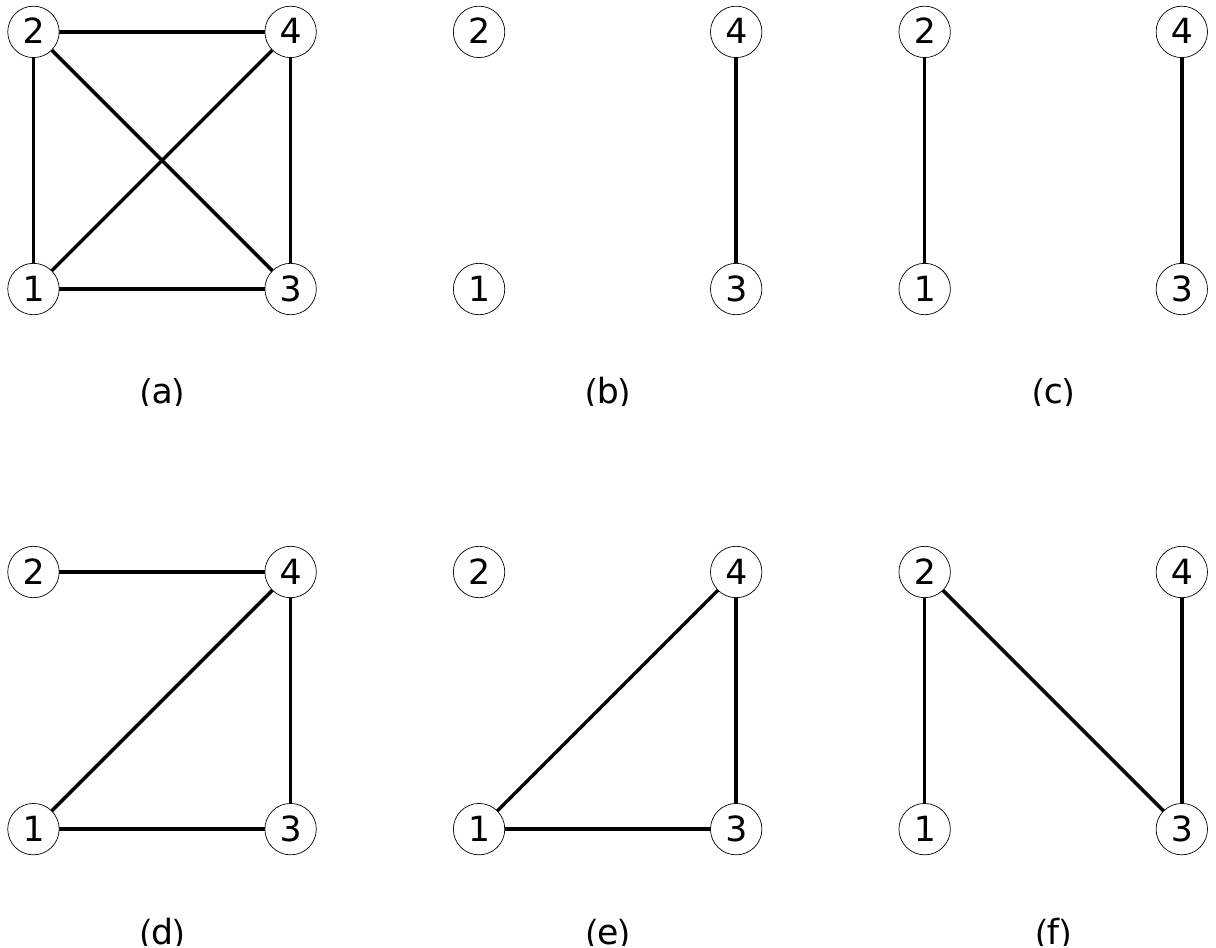}
    \caption{The six inequivalent graphs with four vertices up to permutations and complementation.}
    \label{fig:d4}
    \end{figure}
	In case (a), we can set $B = v, A = u_1, C = u_2$, then
	\begin{equation}
		T_{ABC} = \{u_3, u_4\}, \quad E_B = J_{AB} = J_{BC} = \emptyset.
	\end{equation}
	In case (b) and (c), we can set $B=v, A = u_3, C=u_4$, then
	\begin{equation}
		E_B = \{u_1, u_2\}, \quad J_{AB} = J_{BC} = T_{ABC} = \emptyset.
	\end{equation}
	In case (d) and (e), we can set $B=v, A = u_1, C = u_3$, then
	\begin{equation}
		E_B = \{u_2\}, \quad T_{ABC} = \{u_4\}, \quad J_{AB} = J_{BC} = \emptyset.
	\end{equation}
    In all the above cases, Theorem~\ref{thm:triangle} implies that the graph state $\ket{G}$ cannot be from any network with only bipartite sources.
	In case (f), the neighbourhood $\mathcal{N}_v$ of $v$ is a line graph whose complementation is also a line graph.
\end{proof}

Having established these results, we can discuss graphs with a small number
of vertices. Here. previous works have established a classification of all 
small graphs with respect to equivalence classes under local complementation. 
In detail, this classification has been achieved for up to seven vertices in Ref.~\cite{Hein2004},
for eight vertices in  Ref.~\cite{Adan2009} and for nine to twelve vertices in Ref.~\cite{Danielsen2011}.
These required numerical techniques are advanced, as, for 
instance, for twelve qubits already $1~274~068$ different equivalence classes under local 
complementation exist. We can use this classification now, and apply our result
on it to obtain: 

\begin{theorem}
No graph state with up to twelve vertices can originate from a network with 
only bipartite sources.
\end{theorem}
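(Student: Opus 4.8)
The plan is to reduce this statement, via the entanglement theory of graph states, to a finite verification that is settled by Theorem~\ref{thm:triangle} together with Corollaries~\ref{coro:d3} and~\ref{coro:d4} and the known classification of small graph states. The first step is to recall that two graph states related by a sequence of local complementations differ only by a local unitary transformation~\cite{Nest2004,Hein2004}, and that the set of states preparable in a network with bipartite sources is invariant under local unitaries, since any local unitary can be absorbed into the local channels $\mathcal{E}_X$. Hence it suffices to show, for each equivalence class of connected graphs on $N \le 12$ vertices under local complementation (and vertex relabelling), that \emph{some} representative contains a triangle $A,B,C$ meeting one of the conditions (1)--(3) of Theorem~\ref{thm:triangle}; the disconnected case reduces to its connected components.

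The second step disposes of the ``cheap'' classes using the corollaries: if a class has a representative with a vertex of degree at most three, Corollary~\ref{coro:d3} applies immediately, and if it has a representative with a vertex of degree four whose induced neighbourhood is not a line graph, Corollary~\ref{coro:d4} applies. As in the proof of Corollary~\ref{coro:d3}, a degree-one vertex can always be pulled into a suitable triangle by complementing its neighbour, so a large fraction of classes is covered directly. The only potential obstruction is a class all of whose representatives are connected graphs on $\ge 3$ vertices with every vertex of degree $\ge 4$ and, moreover, all degree-four neighbourhoods line graphs.

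The third step---and this is the main obstacle---is to handle the remaining classes one at a time. Here one relies on the explicit enumeration of local-complementation orbits: up to seven vertices from Ref.~\cite{Hein2004}, eight vertices from Ref.~\cite{Adan2009}, and nine to twelve vertices from Ref.~\cite{Danielsen2011}, where for twelve qubits already $1\,274\,068$ inequivalent classes occur. For each class one searches its orbit for a labelled triangle $A,B,C$ such that, up to permutation, $J_{AB}=J_{AC}=\emptyset$, or $E_A=E_B=\emptyset$, or $E_A=J_{AB}=\emptyset$; one then checks the mild sign hypothesis appearing in Theorem~\ref{thm:triangle} (the analogue of Eq.~(6) of the main text) and uses that the relevant stabilizer elements have expectation value $1$ on $\ket{G}$, so that the derived nonlinear network inequality is violated with its left-hand side as large as possible. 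The substantive claim is that this search terminates successfully for \emph{every} one of these classes; this is a finite but computationally heavy task, and it is the genuine content of the theorem.

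Finally, the white-noise robustness $p>4/5$ is obtained for free: each inequality produced by Theorem~\ref{thm:triangle} has the form $\mean{M_1 M_2}^2 + \big(\mean{M_3 M_4} + \mean{M_5 M_6} - 1\big)^2 \le 1$ (or the analogous three-term anticommuting version), and on $\varrho = p\ketbra{G} + (1-p)\eins/2^N$ every stabilizer expectation value equals $p$; substituting and asking when the inequality is violated gives a threshold independent of $N$, namely $p>4/5$ for the GHZ-type conditions, which dominates the three-term ones.
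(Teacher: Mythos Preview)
Your proposal follows exactly the paper's approach: reduce to local-complementation orbits via the classification in Refs.~\cite{Hein2004,Adan2009,Danielsen2011}, dispose of almost all classes using Corollaries~\ref{coro:d3} and~\ref{coro:d4}, and treat any survivors with Theorem~\ref{thm:triangle} directly. The one thing you are missing is the concrete outcome of the ``computationally heavy'' step, which the paper reports to be very light: every class up to twelve vertices is covered by the two corollaries except a single twelve-vertex graph $G_{d5}$ (whose LC-orbit has minimal degree at least five throughout), and for that graph the choice $B=v_1$, $A=v_4$, $C=v_5$ gives $E_B=\{v_2,v_3\}$, $T_{ABC}=\{v_6\}$ and hence $J_{AB}=J_{BC}=\emptyset$, so condition~(1) of Theorem~\ref{thm:triangle} applies.
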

\begin{proof}
    Using the tables in Ref.~\cite{Danielsen2011} one can directly check that except the graph 
    $G_{d5}$ in Supplementary Fig.~\ref{fig:d5}, all graphs with no more than twelve vertices, 
    up to isomorphism and local complementation, satisfy at least one condition 
    in Corollary~\ref{coro:d3} and~\ref{coro:d4}. 
    
    For the graph $G_{d5}$, the minimal degree is no less than $5$ whatever local complementation is applied. However, if we set
    \begin{equation}
        B = v_1, \quad A = v_4, \quad C = v_5,
    \end{equation}
    then
    \begin{equation}
        E_B = \{v_2, v_3\}, \quad T_{ABC} = \{v_6\}.
    \end{equation}
    Thus, $J_{AB} = J_{BC} = \emptyset$, which implies that the graph state $\ket{G_{d5}}$ cannot originate from a network with only bipartite sources.
    
    All in all, no graph state with more than twelve vertices can originate 
    from a network with only bipartite sources.
\end{proof}

\begin{figure}
    \centering
    \includegraphics[scale = .5]{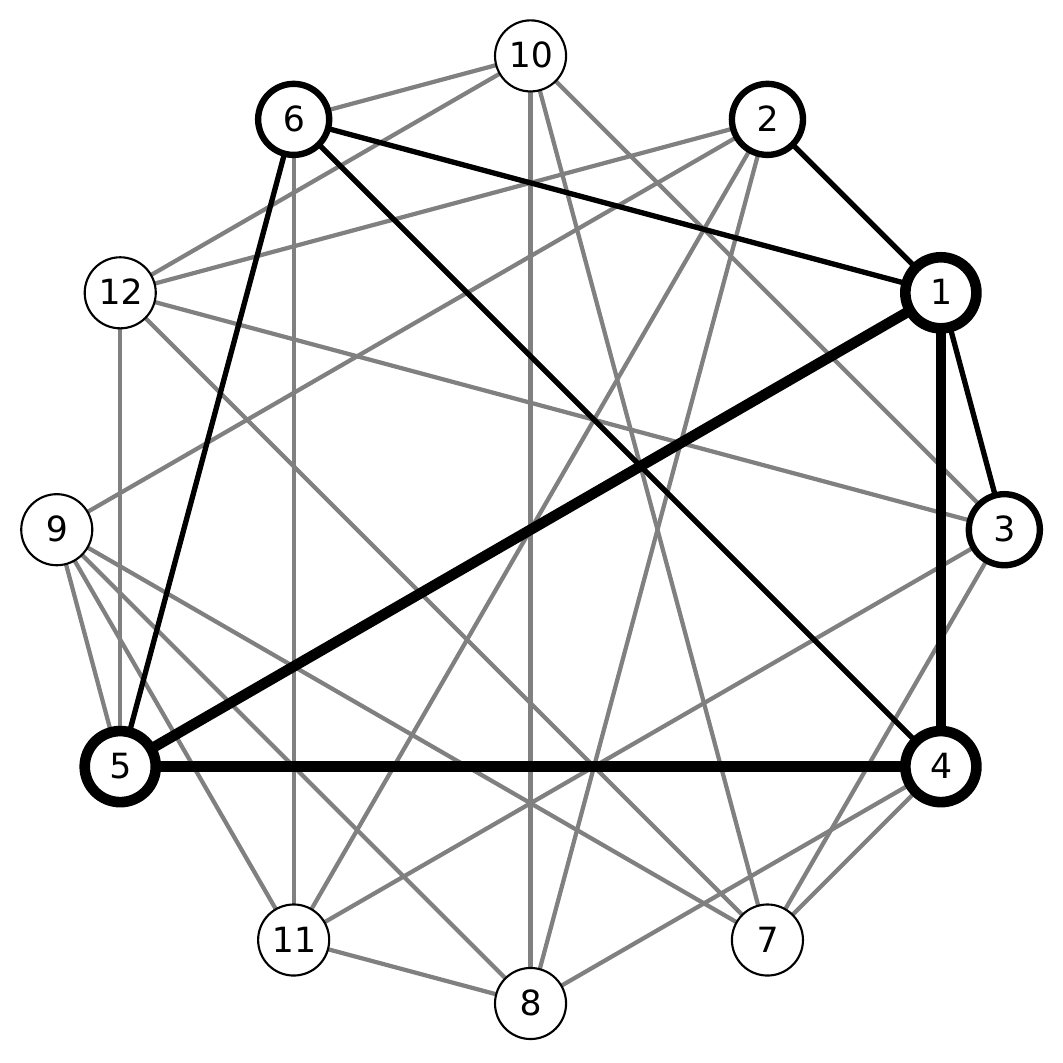}
    \caption{The graph $G_{d5}$ with twelve vertices, where the minimal degree is no less than five whatever local complementation is applied. }\label{fig:d5}
\end{figure}

The statements in Observation 1 concerning the two- or three-dimensional cluster states
follow directly from Corollary 5 (in the 2D case) or the application of a local complementation and Theorem 3.

\subsection{Networks with tripartite sources}

Let us finally discuss two examples of networks with tripartite sources. First, one may expand the method used in the main text to exclude the GHZ state of the set of network states. Consider the fully connected four-partite graph state, which is locally equivalent to the four-partite GHZ state and let us show that it cannot have originated from the four partite network with four tripartite sources, illustrated in Supplementary Fig.\@ \ref{fig:tripartsquare}. Consider the $\nu$- and $\tau$-inflations of that network, depicted in Supplementary Fig.\@ \ref{fig:tripartsquareINFL}. From $\mean{Y_iY_k}_\nu \geq \mean{Y_iY_j}_\nu + \mean{Y_jY_k}_\nu -1$ and the marginal equalities, one gets $\mean{Y_AY_{A'}}_\tau \geq \mean{Y_AY_B}_\varrho + \mean{Y_BY_C}_\varrho+ \mean{Y_CY_D}_\varrho+ \mean{Y_DY_A}_\varrho -3$. Then, from an  anticommuting relation in the $\tau$-inflation and marginal equalities, one gets $\mean{Y_AY_{A'}}_\tau^2+\mean{X_AZ_BZ_CZ_D}_\varrho^2 \leq 1$. Combining those two equations, one gets that a fidelity higher than $0.923$ to the the fully connected graph state cannot be achieved by a four-partite network state.

Second, consider the graph state $\ket{G}$ whose graph is represented in Supplementary Fig.\@ \ref{fig:additional2} (note that it is equivalent under local complementation to the fully connected graph state with the edge between $A$ and $D$ missing). Now, consider a six-partite network with six tripartite sources, connected as in Supplementary Fig.\@ \ref{fig:triparthexa}.  One can show that the observables $X_AX_{D''}$, $Z_AX_BZ_D$ and $Y_AY_{E'}X_D$  anticommute in the $\xi$-inflation of the network (see Supplementary Fig.\@ \ref{fig:triparthexaINLF}) and thus get to an inequality for any state $\varrho$ in the considered network, $\mean{X_AX_{D}}_\varrho^2 + \mean{Z_AX_BZ_D}_\varrho^2 + \mean{Y_AY_{E}X_D}_\varrho^2 \leq 1$, where all three observables are in the stabilizer of $\ket{G}$. Seven other inequalities of that type can be derived, and using them one gets that no state with a fidelity to $\ket{G}$ higher that $0.949$ can be prepared in the quantum network of Supplementary Fig.\@ \ref{fig:triparthexa}.
\begin{figure}
    \centering
    \includegraphics[scale=.2]{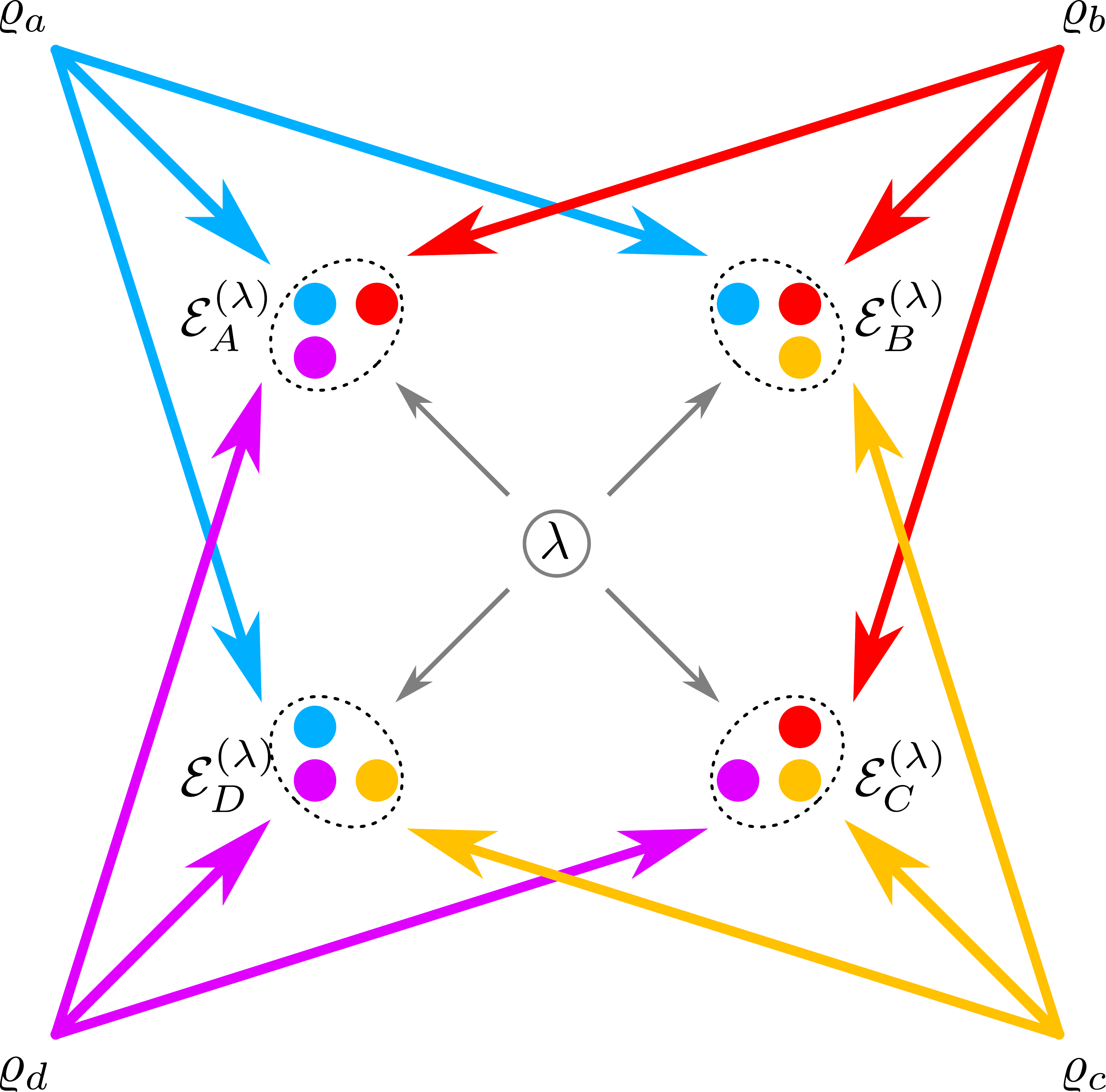}
    \caption{Square network with four tripartite sources. }\label{fig:tripartsquare}
\end{figure}
\begin{figure}
    \centering
    \includegraphics[scale=.1]{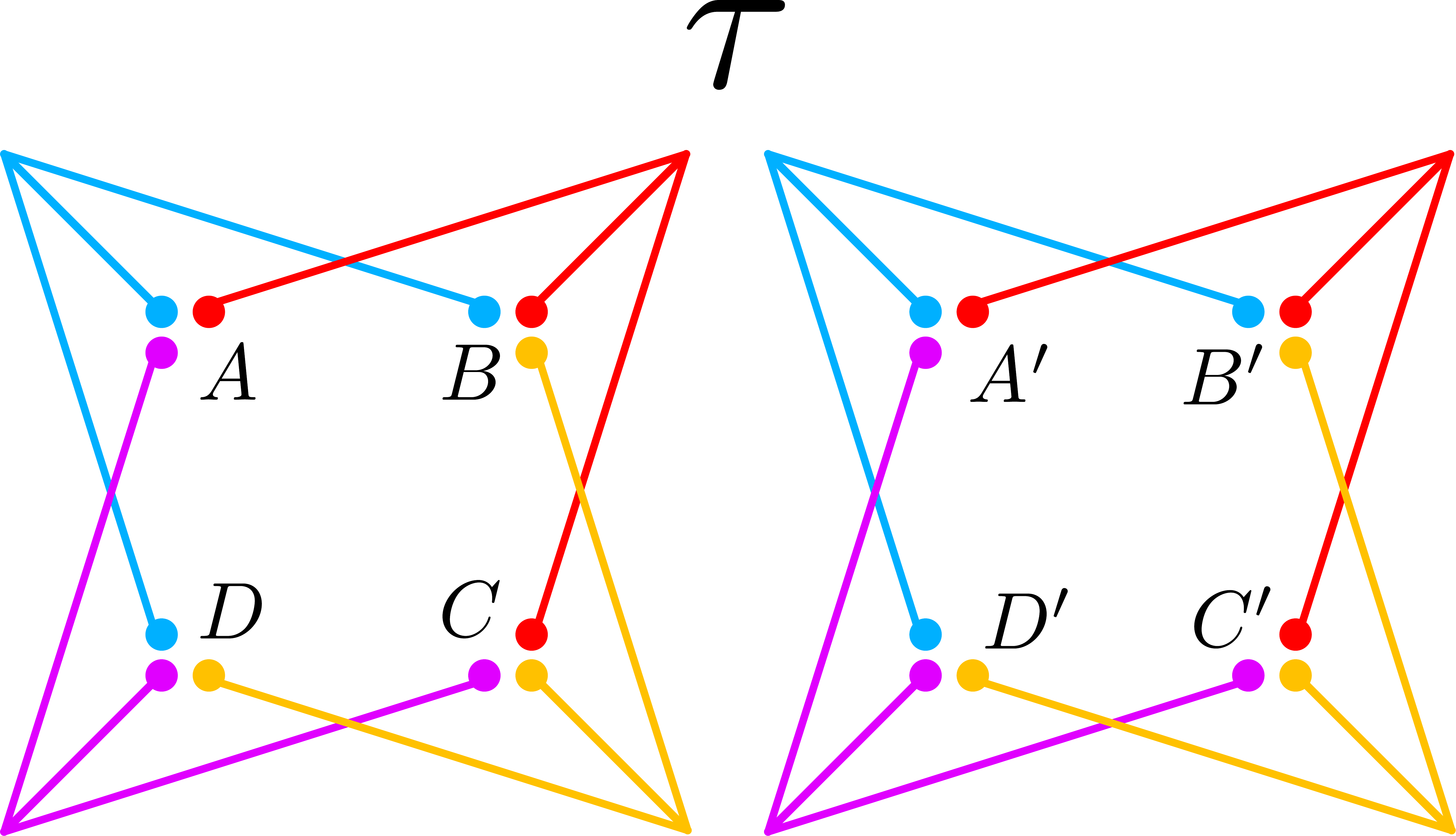} \hspace{1cm}
    \includegraphics[scale=.1]{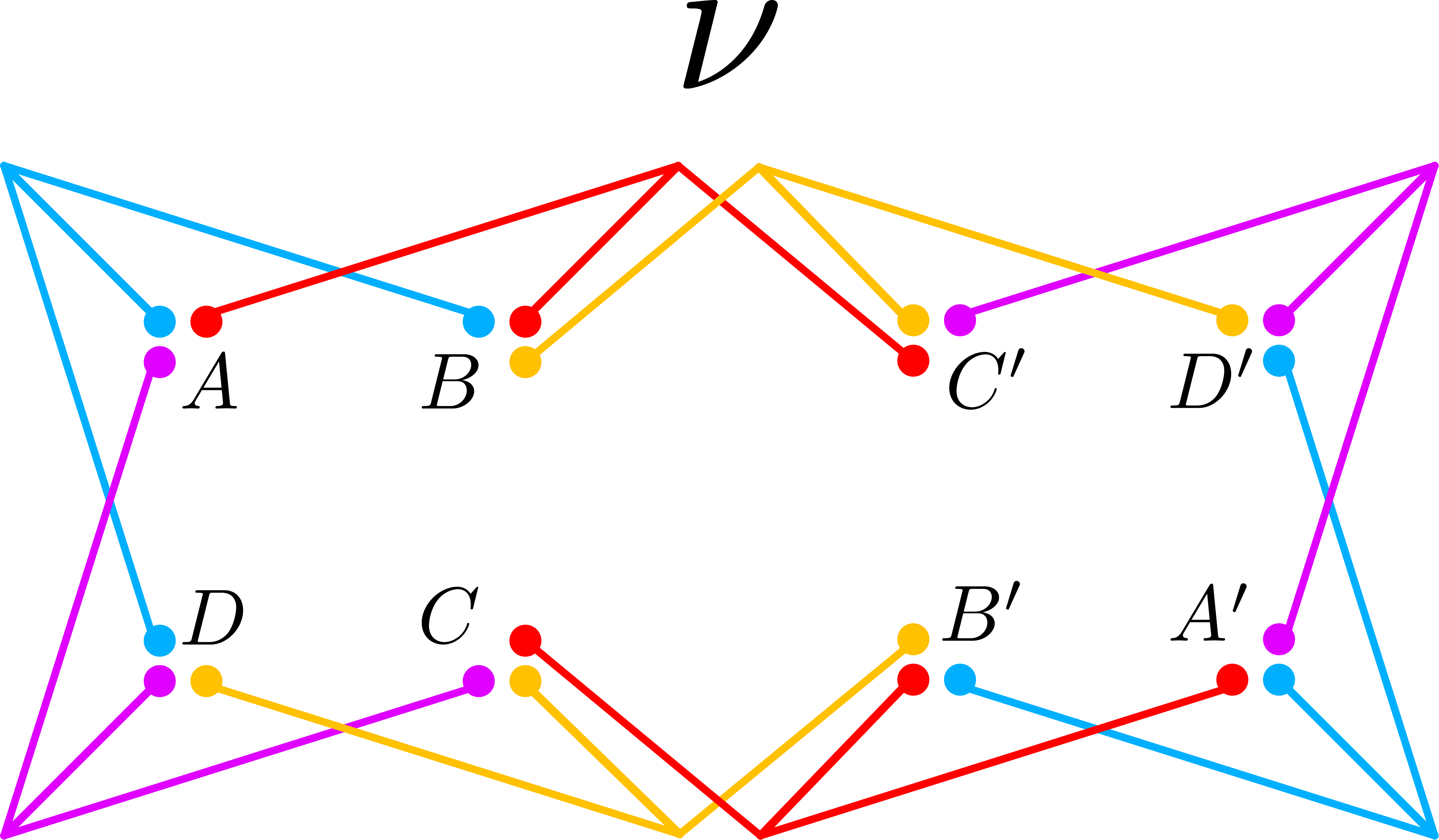}
    \caption{The $\tau$- and $\nu$-inflations of the square network with tripartite sources.}\label{fig:tripartsquareINFL}
\end{figure}
\begin{figure}[h!]
    \centering
    \includegraphics[height=.2\linewidth]{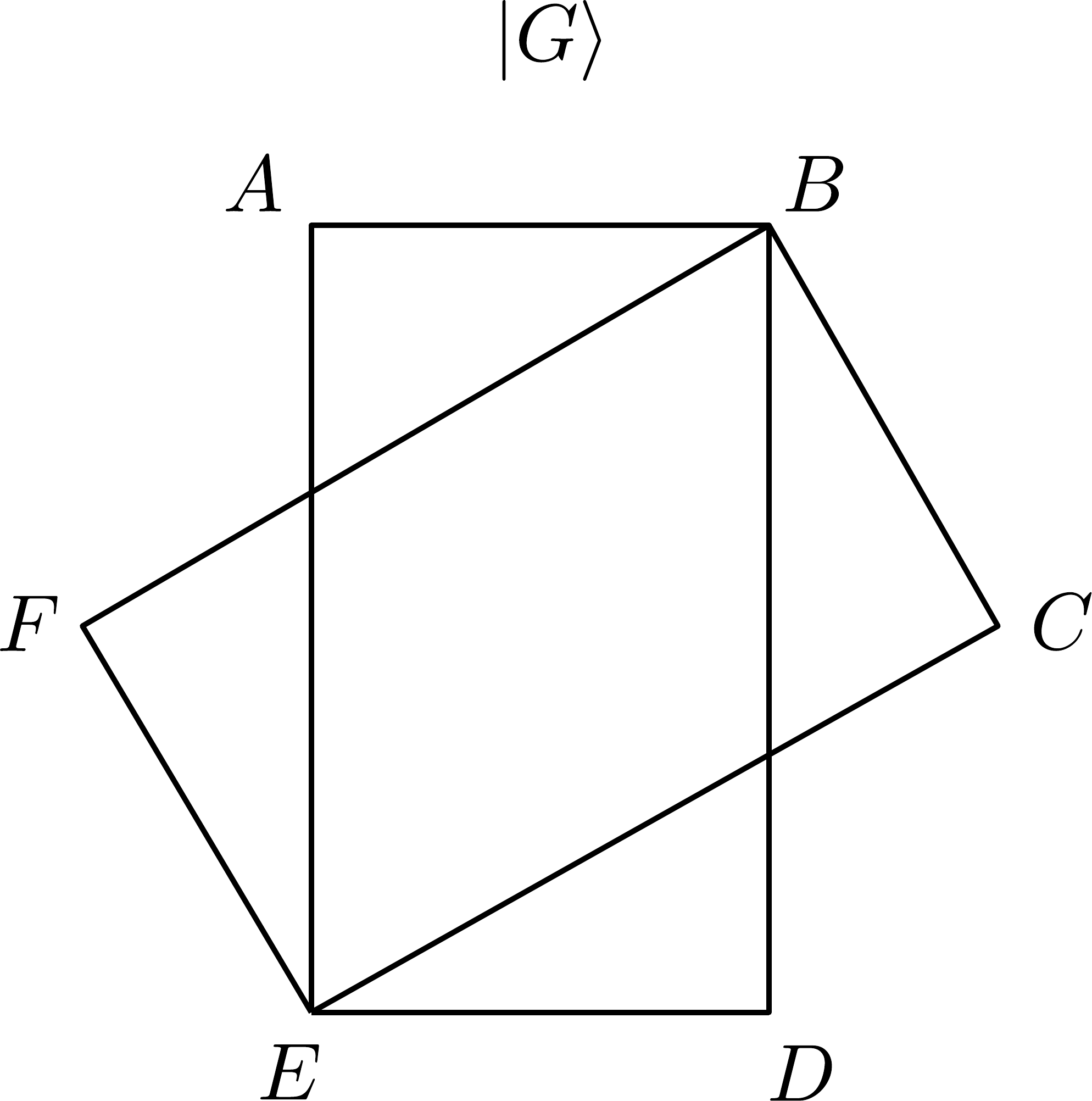}
    \caption{A six-partite graph state $\ket{G}$.}\label{fig:additional2}
\end{figure}
\begin{figure}[h!]
    \centering
    \includegraphics[scale=.2]{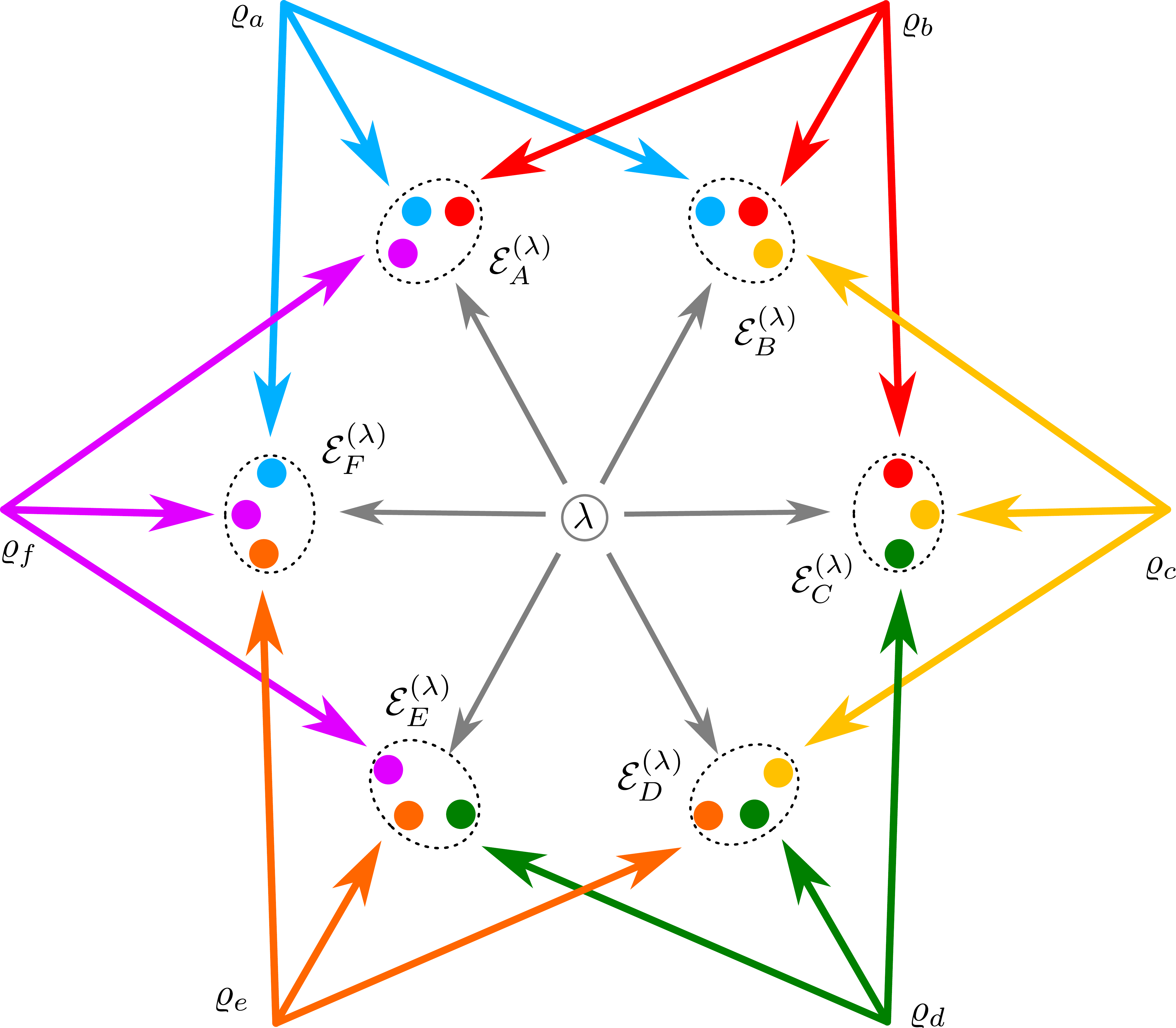}
    \caption{Hexagon network with six tripartite sources.}\label{fig:triparthexa}
\end{figure}
\begin{figure}[h!]
    \centering
    \includegraphics[scale=.1]{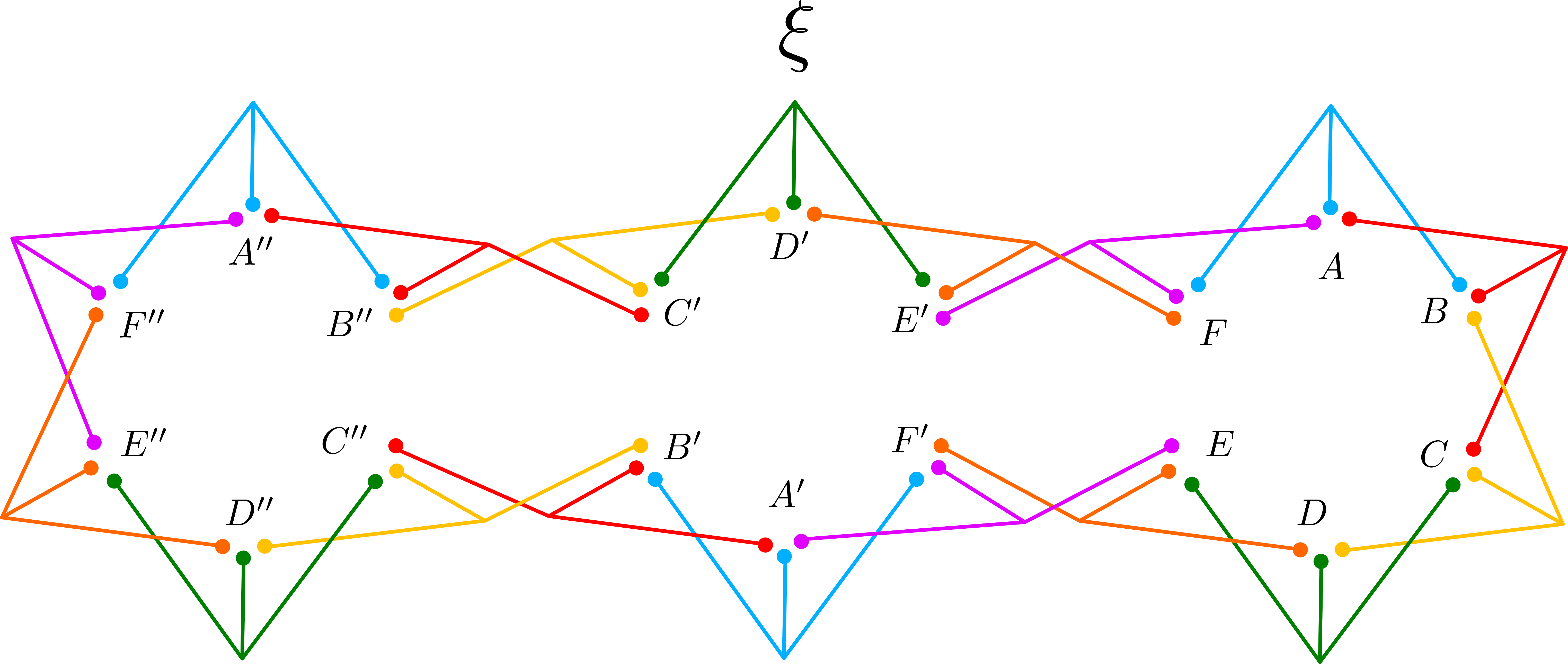}
    \caption{The $\xi$-inflation of the hexagon network with tripartite sources.}\label{fig:triparthexaINLF}
\end{figure}

From these two examples, we see that the  anticommuting relation method also holds for networks with more then bipartite sources. However, it remains an open question whether it leads to general results similar to those of Observation 1.

\section{Supplementary Note 4: Permutationally symmetric states}
Before proving our main results, let us give some definitions. As 
introduced in the main text, we define $N$-partite permutationally 
symmetric (bosonic) states as states that satisfy $\Pi^+_{ij} \varrho \Pi^+_{ij}=\varrho$ for all $i,j \in \{1,\dots,N\}$ with $2\Pi^+_{ij} = \id + F_{ij}$ and $F_{ij}$ being the flip operator that exchanges parties $i$ and $j$. We can 
also introduce fermionic states that are antisymmetric for a pair of parties 
$\{ij\}$, i.e.\@ $\Pi^-_{ij} \varrho \Pi^-_{ij}=\varrho$, with $2\Pi^-_{ij} = \id - F_{ij}$. 
For our discussion we need several basic facts. We stress that the 
following Lemma \ref{lemma:SymmStatesDec} and \ref{lemma:GME} are 
well known~\cite{Eckert2002,Kraus2003,Toth2009}, while Lemma 
\ref{lemma:MarginalPerm} is a simple technical statement.

\begin{lemma} \label{lemma:SymmStatesDec}
    Let $\varrho=\sum_k p_k \ketbra{\psi_k}$ be a multipartite state 
    and let $\Pi$ be a projector such that $\Pi\varrho \Pi = \varrho$. 
    Then $\Pi \ket{\psi_k} = \ket{\psi_k}$ for all $k$.
\end{lemma}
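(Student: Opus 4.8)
The plan is to prove Lemma \ref{lemma:SymmStatesDec} by a standard positivity/support argument. Since $\varrho = \sum_k p_k \ketbra{\psi_k}$ with $p_k > 0$, the key fact I would use is that the support of $\varrho$ (the range of $\varrho$ as an operator) is exactly the span of the $\ket{\psi_k}$, or at least contains each $\ket{\psi_k}$. I would first establish this: for any $k$, writing $\bra{\psi_k}\varrho\ket{\psi_k} = \sum_j p_j |\braket{\psi_k}{\psi_j}|^2 \geq p_k \|\psi_k\|^4 > 0$, so $\ket{\psi_k}$ cannot lie in the kernel of $\varrho$; more carefully, one shows $\ket{\psi_k}$ lies in the range of $\varrho$ by noting that $\mathrm{range}(\varrho) = (\ker\varrho)^\perp$ and that $\ket{\psi_k}$ is orthogonal to $\ker\varrho$.

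Next I would exploit the hypothesis $\Pi\varrho\Pi = \varrho$. Taking traces, $\tr(\Pi\varrho\Pi) = \tr(\varrho\Pi) = \tr(\varrho) = 1$ (using $\Pi^2 = \Pi$ and cyclicity). Hence $\tr(\varrho(\id - \Pi)) = 0$. Since $\varrho \geq 0$ and $\id - \Pi \geq 0$, the trace of their product being zero forces $\varrho(\id-\Pi) = 0$, i.e.\ $\Pi\varrho = \varrho = \varrho\Pi$. In particular the range of $\varrho$ is contained in the range of $\Pi$. Then, combining with the first step, $\ket{\psi_k} \in \mathrm{range}(\varrho) \subseteq \mathrm{range}(\Pi)$, and since $\Pi$ is a projector, $\Pi\ket{\psi_k} = \ket{\psi_k}$ for every $k$.

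I expect the only mildly delicate point to be the justification that $\ket{\psi_k}$ actually lies in the \emph{range} of $\varrho$ (not merely outside its kernel). The cleanest route is: decompose any vector as $\ket{\psi_k} = \ket{u} + \ket{w}$ with $\ket{u}\in\mathrm{range}(\varrho)$ and $\ket{w}\in\ker(\varrho)$; then $0 = \bra{w}\varrho\ket{w} = \sum_j p_j|\braket{w}{\psi_j}|^2$ gives $\braket{w}{\psi_j} = 0$ for all $j$, in particular $\braket{w}{\psi_k} = 0$, so $\braket{w}{w} = \braket{w}{\psi_k} - \braket{w}{u} = 0$, hence $\ket{w} = 0$ and $\ket{\psi_k} = \ket{u} \in \mathrm{range}(\varrho)$. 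Alternatively one can avoid even this by arguing directly with $\id - \Pi$: from $\varrho(\id-\Pi)=0$ we get $0 = \bra{\psi_k}(\id-\Pi)\varrho(\id-\Pi)\ket{\psi_k} \cdot$(nothing) --- better, $\|( \id-\Pi)\ket{\psi_k}\|^2 p_k \le \bra{\psi_k}(\id-\Pi)\varrho(\id-\Pi)\ket{\psi_k}/\ldots$; the support argument above is the transparent one. Either way this is routine, so no real obstacle arises; the lemma is essentially a one-line consequence of the equivalence $\tr(AB)=0 \Leftrightarrow AB=0$ for positive semidefinite $A,B$.
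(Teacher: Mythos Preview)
Your proof is correct, but it takes a somewhat longer route than the paper's. The paper's argument is a two-liner: from $\Pi\varrho\Pi=\varrho$ one has
\[
1=\tr(\varrho)=\tr(\Pi\varrho\Pi)=\tr(\varrho\Pi)=\sum_k p_k \bra{\psi_k}\Pi\ket{\psi_k},
\]
and since each $\bra{\psi_k}\Pi\ket{\psi_k}\le 1$ while the $p_k$ are a probability distribution, every term with $p_k>0$ must satisfy $\bra{\psi_k}\Pi\ket{\psi_k}=1$; for a projector this is equivalent to $\Pi\ket{\psi_k}=\ket{\psi_k}$. So the paper expands the trace directly in the given decomposition and uses a convexity/averaging argument, bypassing any discussion of supports or ranges. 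Your approach instead passes through the operator identity $\varrho(\id-\Pi)=0$ (via the standard ``$\tr(AB)=0$ for $A,B\ge 0$ implies $AB=0$'') and then argues that each $\ket{\psi_k}$ lies in $\mathrm{range}(\varrho)\subseteq\mathrm{range}(\Pi)$. This is perfectly valid and has the advantage of yielding the intermediate structural fact $\Pi\varrho=\varrho=\varrho\Pi$, but it costs you the extra paragraph showing $\ket{\psi_k}\in\mathrm{range}(\varrho)$, which the paper's direct computation avoids entirely.
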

\begin{proof}
    One has
    \begin{equation}
        \begin{split}
            1 = \tr(\varrho)    = & \tr(\Pi \varrho \Pi) \\
                            = & \sum_k p_k \bra{\psi_k}\Pi\ket{\psi_k}.
        \end{split}
    \end{equation}
    So $\bra{\psi_k}\Pi\ket{\psi_k} = 1$, and since $\Pi$ is a projector, $ \Pi\ket{\psi_k} =\ket{\psi_k}$, for all $k$.
\end{proof}
This holds in particular for $\Pi=\Pi_{ij}^\pm$. As a second lemma, 
we have
\begin{lemma} \label{lemma:MarginalPerm}
	If the reduced state on $AB$ of some state is symmetric or antisymmetric 
	under the exchange of parties $A$ and $B$, then the global state also is.
\end{lemma}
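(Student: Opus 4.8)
The plan is to phrase both the hypothesis and the conclusion in terms of the (anti)symmetric projectors and to propagate the condition from the marginal to the global state using only positivity. Write the global state as $\varrho$ on parties $A$, $B$, and $R$, where $R$ collects all the remaining parties, so that the marginal is $\varrho_{AB}=\tr_R\varrho$. In the bosonic case the hypothesis $\Pi^+_{AB}\varrho_{AB}\Pi^+_{AB}=\varrho_{AB}$ is equivalent to the one-sided statement $\Pi^-_{AB}\varrho_{AB}=0$: taking the trace of the hypothesis gives $\tr(\Pi^-_{AB}\varrho_{AB})=0$, and since $\Pi^-_{AB}\varrho_{AB}\Pi^-_{AB}\ge 0$ has vanishing trace it must vanish, so factorising $\varrho_{AB}=N^\dagger N$ yields $N\Pi^-_{AB}=0$ and hence $\Pi^-_{AB}\varrho_{AB}=0$; the converse is a short computation expanding $\Pi^+_{AB}=\id-\Pi^-_{AB}$. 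In the fermionic case one swaps $\Pi^+_{AB}\leftrightarrow\Pi^-_{AB}$ throughout.

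Next I would lift this to the full state. Set $P=\Pi^-_{AB}\otimes\id_R$ (bosonic case; $P=\Pi^+_{AB}\otimes\id_R$ in the fermionic case). Then $P\varrho P\ge 0$ because $\varrho\ge 0$ and $P$ is a projector, and
\begin{equation}
\tr(P\varrho P)=\tr(P\varrho)=\tr_{AB}\!\big(\Pi^-_{AB}\tr_R\varrho\big)=\tr_{AB}(\Pi^-_{AB}\varrho_{AB})=0 .
\end{equation}
A positive semidefinite operator with zero trace is zero, so $P\varrho P=0$; writing $\varrho=M^\dagger M$ this forces $MP=0$, hence $PM^\dagger=0$, and therefore $P\varrho=PM^\dagger M=0$.

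Finally I would translate back: $P\varrho=0$ says $(\id-F_{AB})\varrho=0$ on the full Hilbert space (with $F_{AB}$ acting as the identity on $R$), i.e.\ $F_{AB}\varrho=\varrho$; taking the adjoint and using that $\varrho$ and $F_{AB}$ are Hermitian gives $\varrho F_{AB}=\varrho$ as well, so $\Pi^+_{AB}\varrho\Pi^+_{AB}=\varrho$. The fermionic case is identical with $F_{AB}\to -F_{AB}$ and $\Pi^+_{AB}\leftrightarrow\Pi^-_{AB}$. Since full permutational (anti)symmetry of a multiparticle state means invariance under $F_{ij}$ for a set of pairs $\{i,j\}$ that generate the permutation group, the argument applies verbatim pair by pair, which is exactly what is needed in the proof of Observation~2.

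The only place positivity of $\varrho$ enters is the elementary fact that a positive semidefinite operator with vanishing trace is zero, which upgrades the two-sided identity $P\varrho P=0$ to the one-sided $P\varrho=0$; this is the step that makes a condition on the marginal $\varrho_{AB}$ propagate to the global state, and it is essentially the whole content of the lemma. An equivalent route, avoiding the factorisation $\varrho=M^\dagger M$, is to purify $\varrho$ to $\ket{\Psi}$, observe that $\|(\Pi^-_{AB}\otimes\id)\ket{\Psi}\|^2=\tr(\Pi^-_{AB}\varrho_{AB})=0$ so $F_{AB}\ket{\Psi}=\ket{\Psi}$, and then trace out the purifying system.
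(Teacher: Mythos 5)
Your proof is correct, and it takes a genuinely different route from the one in the paper. The paper argues at the level of ensembles: it decomposes $\varrho=\sum_k p_k\ketbra{\psi_k}$, Schmidt-decomposes each $\ket{\psi_k}$ across the cut $AB|R$, observes that the resulting $AB$-Schmidt vectors form a pure-state decomposition of $\varrho_{AB}$, and then invokes its Lemma~6 (every pure state in a decomposition of a $\Pi$-invariant state is itself $\Pi$-invariant) to conclude $(F_{AB}\otimes\id)\ket{\psi_k}=\pm\ket{\psi_k}$ for each $k$. You instead work entirely at the operator level: you recast the hypothesis as the support condition $\Pi^\mp_{AB}\varrho_{AB}=0$, lift it via $\tr\bigl((M_{AB}\otimes\id_R)\varrho\bigr)=\tr(M_{AB}\varrho_{AB})$, and use only the fact that a positive semidefinite operator with vanishing trace is zero. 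Both arguments ultimately rest on that same positivity fact (it is also what powers the paper's Lemma~6), but yours avoids choosing any decomposition or Schmidt basis, makes the mechanism transparent, and states the reusable equivalence $\Pi\varrho\Pi=\varrho\Leftrightarrow(\id-\Pi)\varrho=0$ explicitly. Your closing remark about purification is essentially the paper's Schmidt-decomposition step in disguise, so you have in effect exhibited both proofs. All steps check out, including the factorisation trick $P\varrho P=0\Rightarrow P\varrho=0$ and the final observation that invariance under enough transpositions $F_{ij}$ generates full (anti)symmetry as needed for Observation~2.
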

\begin{proof}
	Let $ \varrho = \sum_k p_k \ket{\psi_k}\bra{\psi_k} $ be the state of some tripartite system $ABC$. Let us prove that if $F_{AB} \left(\tr_C (\varrho)\right) = \pm \tr_C (\varrho)$, then $(F_{AB} \otimes \id) \varrho = \pm \varrho$. If one considers the Schmidt decomposition of $\ket{\psi_k}$ wrt the bipartition $AB|C$, one has
	\begin{equation}
		\varrho = \sum_k p_k \sum_{i,j} s_{k,i}s_{k,j}^* \ket{\phi_{k,i}^{AB}}\bra{\phi_{k,j}^{AB}} \otimes \ket{\chi_{k,i}^{C}}  \bra{\chi_{k,j}^{C}}.
	\end{equation}
	From that, $\varrho_{AB} = \sum_k p_k \sum_i |s_{k,i}|^2 \ket{\phi_{k,i}^{AB}} \bra{{\phi_{k,i}^{AB}}} $ and since it is a permutationally symmetric (respectively antisymmetric) state, from Lemma \ref{lemma:SymmStatesDec} all states in its decomposition also are and thus $(F_{AB} \otimes \id) \varrho = \pm \varrho$.
\end{proof}
We note that for both those lemma, the converse is trivial. Finally, we have:


\begin{lemma}\label{lemma:GME}
    (a) A $N$-partite symmetric state $\varrho_s$ is either genuinely $N$-partite entangled or fully separable. (b) A $N$-partite antisymmetric state is always $N$-partite entangled. 
\end{lemma}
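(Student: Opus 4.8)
The plan is to prove part (a) by combining Lemma~\ref{lemma:SymmStatesDec} with the structural fact that pure symmetric states admit no nontrivial biseparability. First I would recall that any pure fully symmetric state of $N$ parties can be written (up to local unitaries that are the same on each party, i.e.\ in the Majorana/Dicke representation) as a superposition of Dicke states, and that such a state is a product state if and only if all the Majorana points coincide, i.e.\ it equals $\ket{\phi}^{\otimes N}$ for some single-party vector $\ket{\phi}$. The key observation is the dichotomy for \emph{pure} symmetric states: if $\ket{\psi_s}$ is symmetric and biseparable across \emph{some} bipartition $S|\bar S$, then, because permuting parties maps this splitting to a biseparability across every bipartition of the same sizes, and symmetric pure states biseparable across all bipartitions of a fixed cut size are forced to be fully product, one concludes $\ket{\psi_s}$ is fully separable; contrapositively, a symmetric pure state that is not fully separable is genuinely $N$-partite entangled. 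I would give this argument cleanly via the reduced single-party state: if $\ket{\psi_s}$ is symmetric and $\varrho_1 = \tr_{2\ldots N}\ketbra{\psi_s}$ is pure, then by symmetry every single-party marginal is the same pure state, and a short argument (e.g.\ purity of all single-party marginals of a pure multipartite state forces it to be a tensor product of those marginals) shows $\ket{\psi_s} = \ket{\phi}^{\otimes N}$; if instead some (hence every) single-party marginal is mixed, then the state has entanglement across every $1|(N-1)$ cut, which for a symmetric state propagates to genuine multipartite entanglement.

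For the mixed case in (a), I would invoke Lemma~\ref{lemma:SymmStatesDec}: given a symmetric mixed state $\varrho_s = \Pi^+\varrho_s\Pi^+$ with any decomposition $\varrho_s = \sum_k p_k \ketbra{\psi_k}$, each $\ket{\psi_k}$ lies in the symmetric subspace, so by the pure-state dichotomy each $\ket{\psi_k}$ is either $\ket{\phi_k}^{\otimes N}$ (fully product) or genuinely $N$-partite entangled. If \emph{every} $\ket{\psi_k}$ in some fixed decomposition is fully product, then $\varrho_s$ is fully separable by definition. Conversely, if $\varrho_s$ is biseparable but not fully separable, I would need to derive a contradiction: biseparability across a cut $S|\bar S$ means $\varrho_s = \sum_m q_m \, \sigma_m^{S}\otimes\omega_m^{\bar S}$, but this need not a priori be a decomposition into symmetric pure states. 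The clean route here is to symmetrize: since $\varrho_s$ is invariant under the permutation group, averaging the biseparable decomposition over permutations that preserve the partition type shows $\varrho_s$ is separable across \emph{every} bipartition of the size $|S|$; then one argues that a symmetric state separable across all such cuts is fully separable — which follows because its symmetric pure-state components must each be separable across all those cuts, hence each fully product by the pure dichotomy. This closes (a).

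For part (b), the antisymmetric case, I would argue that $N$-partite antisymmetry already forces genuine $N$-partite entanglement. The crisp version: by Lemma~\ref{lemma:SymmStatesDec} applied to $\Pi^-$ (the projector onto the totally antisymmetric subspace), every pure component $\ket{\psi_k}$ of an antisymmetric state lies in $\bigwedge^N \mathbb{C}^d$, which is nonzero only if $d \geq N$. Such a state is, up to local unitaries, a Slater determinant built from $N$ orthonormal single-party vectors, and the standard fact is that a totally antisymmetric pure state has every single-party marginal mixed (indeed maximally mixed on the relevant $N$-dimensional support when it is a single Slater determinant, and mixed in general) and more strongly is not biseparable across any cut: tracing out any subset leaves a mixed state, so there is entanglement across every bipartition, which for an antisymmetric pure state means genuine multipartite entanglement. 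Hence no pure component can be even biseparable, so neither can any mixture of them — an antisymmetric state is always genuinely $N$-partite entangled (and, implicitly, can never be fully separable, since a product state is symmetric, not antisymmetric).

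The main obstacle I anticipate is the mixed-state biseparability step in part (a): going from ``biseparable across one cut'' to ``fully separable'' for a symmetric state. One must be careful that a biseparable decomposition is not automatically a decomposition into symmetric states, so the permutation-averaging argument (or an equivalent appeal to the structure of the symmetric subspace) has to be spelled out, and one must confirm that ``separable across all bipartitions of every fixed size'' genuinely implies full separability in the symmetric setting — this is where the pure-state Majorana/Dicke dichotomy does the real work, so the proof of (b) and the mixed part of (a) both hinge on first nailing the pure-state classification rigorously. Since the paper flags Lemmas~\ref{lemma:SymmStatesDec}--\ref{lemma:GME} as ``well known~\cite{Eckert2002,Kraus2003,Toth2009},'' I expect the actual write-up to be short, citing these references for the pure-state facts and using Lemma~\ref{lemma:SymmStatesDec} as the bridge to mixed states.
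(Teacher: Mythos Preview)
Your proposal overcomplicates the mixed-state step in (a) because you miss the full force of Lemma~\ref{lemma:SymmStatesDec}: it applies to \emph{every} pure-state decomposition of $\varrho_s$, not just some canonical one. So if $\varrho_s$ is biseparable, take any decomposition $\varrho_s = \sum_k p_k \ketbra{\psi_k}$ into pure biseparable states; by Lemma~\ref{lemma:SymmStatesDec} each $\ket{\psi_k}$ is automatically symmetric, hence by the pure-state dichotomy each is fully product, and $\varrho_s$ is fully separable. Your anticipated obstacle --- that ``a biseparable decomposition is not automatically a decomposition into symmetric states'' --- is precisely what Lemma~\ref{lemma:SymmStatesDec} dispels, so no permutation-averaging is needed. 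The paper accordingly writes one sentence (``Due to Lemma~\ref{lemma:SymmStatesDec} we only need to consider pure states'') and moves on.

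For the pure case in (a), your single-party-marginal route has a gap as stated: mixed one-party marginals give entanglement across every $1|(N-1)$ cut, but that alone does not rule out a $t|(N-t)$ product structure with $t\ge 2$, and your earlier assertion that ``symmetric pure states biseparable across all bipartitions of a fixed cut size are forced to be fully product'' is stated, not argued. The paper's argument handles all cut sizes at once: if a pure (anti)symmetric $\ket{\Psi}$ factors as $\ket{\varphi}_{1\ldots t}\otimes\ket{\phi}_{t+1\ldots N}$, then tracing out the first $t$ parties leaves a pure state, and since permutations act on $\ket{\Psi}$ only up to a global phase, tracing out \emph{any} $t$ parties leaves a pure state; this immediately forces both tensor factors to be fully separable. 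For (b) the paper then finishes with a one-line contradiction --- a fully separable antisymmetric $\ket{ab\ldots c}$ would satisfy $\ket{ab\ldots c}=-\ket{ba\ldots c}$, whence $-1=\braket{ab\ldots c}{ba\ldots c}=|\braket{a}{b}|^2\ge 0$ --- which is shorter than your Slater-determinant argument, though yours is also correct there.
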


\begin{proof}
Due to Lemma \ref{lemma:SymmStatesDec} we only need to consider pure states. 
Let $\ket{\Psi}$ be a $N$-partite (anti)symmetric state that is not $N$-partite 
entangled, hence it is separable for some bipartition. Without loss of 
generality, we assume that
\begin{equation}
\ket{\Psi} = \ket{\varphi_{1,\dots,t}} \otimes \ket{\phi_{t+1,\dots,N}}.
\end{equation}
Thus, by tracing out the first $t$ parties, we have a pure state. The symmetry of $\ket{\Psi}$ implies that the reduced state is pure after tracing out any $t$ parties. This can only be true if $\ket{\varphi_{1,\ldots,t}}, \ket{\phi_{t+1,\ldots,N}}$ are fully separable.

Besides, denote $\ket{ab\ldots c}$ a normalized fully separable antisymmetric state, we have $\ket{ab\ldots c} = - \ket{ba\ldots c}$. This implies that
$-1 = \mean{ab\ldots c|ba\ldots c} = |\mean{a|b}|^2 \ge 0$, hence we arrive at a contradiction.
\end{proof}

We note that the notions of entanglement used in this Lemma are the standard ones
for non-symmetric states, as these are the relevant ones for the main text. In principle,  for indistinguihable particles one may separate the ``formal" 
entanglement due to the wave function symmetrization from the ``physical" entanglement \cite{Eckert2002}.

Now, let us prove the Observation 2 of the main text. For completeneness, we restate
it here in the full formulation:

\noindent
{\bf Observation 2'.} {\it Let $\varrho$ be a permutationally symmetric
multiparticle state. Then, $\varrho$ can be generated in a quantum network
with $N-1$-partite sources if and only if it is fully separable.  
If $\varrho$ be a permutationally antisymmetric, then it cannot be generated in a network.}

\begin{proof}
  Let $\varrho$ be a $N$-partite permutationally (anti)symmetric state. Let us assume that it can be generated in a 
network of $N$ nodes with some at most $(N-1)$-partite sources. Note that any 
state that can be generated in a network of $N$ nodes with no $N$-partite 
sources, can also be generated 
in a network of $N$ nodes with $N$ different $(N-1)$-partite sources. Let 
us denote by $\varsigma_{i}$ the source used to generate $\varrho$ that distributes parties to all 
nodes except the $i$th one. 
	
If we assume that $\varrho$ is a network state, then 
the inflation $\eta$ build the following way is a physical state: 
Consider a network of $2N$ nodes $\{A_i, A'_i:i=1,\dots,N\}$ and $2N$ 
sources $\{\zeta_{k}, \zeta'_{k}:k=1,\dots,N\}$ that distribute parties 
to
\begin{align}
  &\zeta_k : A_1\ldots A_{k-1} A'_{k+1}\ldots A'_N,\\ 
  &\zeta'_k : A'_1\ldots A'_{k-1} A_{k+1}\ldots A_N, 
\end{align}
where $\zeta_{k} = \zeta'_{k} = \varsigma_{k}$ for all $k$. The state $\eta$ is the network state build with these sources and the same channels on the nodes than $\varrho$ (with some shared randomness). From the inflation technique, we know that for the reduced states
\begin{align}\label{eq:chainsym}
  &\eta_{A_iA_{i+1}} = \eta_{A'_iA'_{i+1}} = \varrho_{A_iA_{i+1}}, \quad \forall i\le N-1,\\
  &\eta_{A_1A'_N} = \eta_{A'_1A_N} = \varrho_{A_1A_N}.
\end{align}
Since the state $\varrho$ is fully (anti)symmetric, Lemma~\ref{lemma:MarginalPerm} and Eq.~\eqref{eq:chainsym} imply that the state $\eta$ is also fully (anti)symmetric.
	
Now, we consider the inflated state $\tau$, whose sources $\{\omega_{k}, \omega'_{k}: k=1, \dots, N\}$ distribute states to 
\begin{equation}
\begin{split}
  & \omega_{k} : A_1A_2A_3 \dots A_{k-1}A_{k+1}\dots A_N \\
  & \omega'_{k} : A'_1A'_2A'_3 \dots A'_{k-1}A'_{k+1}\dots A'_N.
\end{split}
\end{equation} 
Again, the local channels and shared randomness are the same than for $\varrho$.
This is the two-copy inflation considered several times in the main text. One 
has
\begin{equation}
  \tau_{A_1\dots A_N} = \tau_{A'_1\dots A'_N} = \varrho.
\end{equation}
Moreover,
\begin{equation}
  \tau_{A_iA'_i} = \eta_{A_iA'_i} 
\end{equation}
hence $\tau$ is permutationally fully (anti)symmetric under the exchange of all its parties. However, $\tau$ is separable wrt the bipartition $A_1\dots A_N | A'_1\dots A'_N$. In the fully symmetric case, this means that $\tau$ is fully separable. Therefore $\varrho$ is also fully separable. So, if a network state is permutationally symmetric, it needs to be fully separable. In the fully antisymmetric case, the full separability of $\tau$ contradicts with the assumption that $\tau$ is fully antisymmetric. So, no network state can be permutationally antisymmetric.
\end{proof}

Finally, we note that cyclic symmetric states may be generated in network scenarios, as already pointed out by Ref.~\cite{Luo2020}. As an example, we consider three Bell pairs $\ket{\Phi^+}$ as sources in the triangle network, and no channels applied. The global $3 \times 4$-partite state is $\ket{\Psi}_{ABC} = \ket{\Phi^+}_{A_2B_1} \ket{\Phi^+}_{B_2C_1} \ket{\Phi^+}_{C_2A_1}$, with $A=A_1A_2$ and so on. With the appropriate reordering of the parties and by mapping $\ket{ij}_X \mapsto \ket{2i+j}_X$ for $X=A,B,C$, one gets
\begin{equation}
        \ket{\Psi}_{ABC} = 
        \frac{1}{2\sqrt{2}} \big( \ket{000} + \ket{012} + \ket{120} + \ket{201} 
        + \ket{132} + \ket{321} + \ket{213} + \ket{333} \big),
\end{equation}
which is a symmetric state under cyclic permutations of the $3 \times 4$-dimensional system $ABC$.

\section{Supplementary Note 5: Certifying network links}

Here, we prove the statement made in the main text, which can be formulated
as follows:

\begin{observation}
If a state $\varrho$ can be prepared in a network with bipartite sources but 
without the link $AC$, then
\begin{equation}
\mean{X_A X_C P_{R_1}}^2 
+ 
\mean{Y_A Y_C P_{R_2}}^2 
+ 
\mean{Z_A Z_C P_{R_3}}^2 
\leq 1.
\label{eq-linktest-app}
\end{equation}
Here the $P_{R_i}$ are arbitrary observables on disjoint subsets of the other
particles, $R_i \cap R_j = \emptyset$. If the state $\varrho$ was indeed  prepared 
in a real quantum network, then violation of this inequality proves that the link 
$AC$ is working and distributing entanglement.
\end{observation}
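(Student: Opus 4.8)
The plan is to reduce the statement to the anticommuting-observables inequality of Ref.~\cite{Toth2005}. If $M_1,\dots,M_n$ are Hermitian with $\|M_i\|\le 1$ and pairwise anticommute, then $(\sum_i c_i M_i)^2=\sum_i c_i^2 M_i^2\preceq(\sum_i c_i^2)\,\id$ because the mixed terms $c_ic_j(M_iM_j+M_jM_i)$ cancel; hence $\mean{\sum_i c_i M_i}^2\le\mean{(\sum_i c_i M_i)^2}\le\sum_i c_i^2$ in every state, and the choice $c_i=\mean{M_i}$ gives $\sum_i\mean{M_i}^2\le 1$. So it suffices to exhibit a single physical state in which $\mean{X_AX_CP_{R_1}}_\varrho$, $\mean{Y_AY_CP_{R_2}}_\varrho$ and $\mean{Z_AZ_CP_{R_3}}_\varrho$ appear as expectation values of three pairwise anticommuting norm-bounded observables.

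First I would construct a sufficiently high-order inflation $\xi$ of the given network, which by hypothesis has only bipartite sources and no source shared directly between $A$ and $C$. In $\xi$ one keeps a single copy $C$ of Charlie but three copies $A_1,A_2,A_3$ of Alice, together with as many copies of the remaining nodes as needed, and reroutes the duplicated bipartite sources so that for each $i$ the inflated sub-network supported on $\{A_i\}\cup\{C\}\cup R_i$ is wired exactly as the original sub-network on $\{A\}\cup\{C\}\cup R_i$. That this rerouting is consistent --- each bipartite source reused the same number of times, with fixed endpoints --- is precisely where the absence of the $AC$ link enters: since $A$ reaches $C$ only along paths of length $\ge 2$, a fresh copy of $A$ can be attached to the shared copy $C$ by duplicating and rerouting the sources along such a path, without ever creating a forbidden direct $A_iC$ link; the three copies $A_1,A_2,A_3$ never share a source with one another, and the disjointness $R_i\cap R_j=\emptyset$ lets the three rerouted regions be chosen vertex-disjoint away from $C$. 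The standard marginal-matching property of inflations (Supplementary Note~1 and Refs.~\cite{Wolfe2019,Navascues2020}) then gives $\xi_{A_iCR_i}=\varrho_{ACR_i}$ for $i=1,2,3$, while $\xi$, being itself a network state, is a legitimate density operator.

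Given such a $\xi$, set $M_1=X_{A_1}X_CP_{R_1}$, $M_2=Y_{A_2}Y_CP_{R_2}$, $M_3=Z_{A_3}Z_CP_{R_3}$ acting on $\xi$. Any two of them anticommute: the Alice factors live on distinct copies $A_i\ne A_j$ and commute, the spectator observables $P_{R_i},P_{R_j}$ have disjoint supports and commute, while the Charlie factors on the one shared copy $C$ are two distinct Paulis and anticommute, yielding a single overall sign change. Since $X_{A_i}$, $X_C$ and $P_{R_i}$ act on disjoint systems, $M_i^2=P_{R_i}^2\preceq\id$ (the $P_{R_i}$ being dichotomic, or more generally norm-bounded by one). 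Applying the inequality of the first paragraph in the state $\xi$ and using $\mean{M_i}_\xi=\mean{X_AX_CP_{R_i}}_\varrho$ etc.\ gives exactly \eqref{eq-linktest-app}. For the certification claim one argues by contraposition: if a state prepared in a real network violates \eqref{eq-linktest-app}, that network cannot have bipartite sources with the $AC$ link absent; moreover the $AC$ source must be entangled, since a separable source between $A$ and $C$ could be emulated by local preparations coordinated by the shared randomness already present in the LOSR model, turning the network into one without an $AC$ link --- hence violation certifies a working, entanglement-distributing link $AC$.

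I expect the main obstacle to be the inflation construction of the second paragraph: showing that for an \emph{arbitrary} bipartite-source network without the $AC$ link all three matchings $\xi_{A_iCR_i}=\varrho_{ACR_i}$ can be realized simultaneously in one consistent inflation. The two-term version ($\tau_{AC}=\tau_{A'C}$ in the square network) is immediate by inspection, but the general three-term case requires careful bookkeeping of how many copies of each node and each source are needed, and of how the connecting paths are routed so as to remain vertex-disjoint; this combinatorial step is routine in spirit but carries the real content of the proof.
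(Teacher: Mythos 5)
Your proposal is correct and follows essentially the same route as the paper: a three-copy inflation in which the three observables overlap only at a single node (where the distinct Paulis anticommute), the bound $\sum_i\langle M_i\rangle^2\le 1$ for pairwise anticommuting norm-one observables, and marginal matching between $\xi$ and $\varrho$. The only cosmetic difference is that you triplicate $A$ and keep one copy of $C$, whereas the paper keeps one $A$ and rewires to three copies $C, C', C''$; the two constructions are mirror images and equally valid.
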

\begin{proof}
Without loss of generality, we assume that
\begin{equation}
    R_1 = \{E\}, \quad R_2 = \{B\}, \quad R_3 = \{D\}.
\end{equation}
Otherwise, we can prove the result similarly. The disconnected nodes $A$
and $C$ may be connected via some source with the $R_i$ or not, but this 
is not essential. Then, the graph has a structure as the graph in Supplementary Fig.~\ref{fig:trianglerho2}.

We consider a three-copy inflation $\xi$ of this graph, where the observables in 
Eq.~(\ref{eq-linktest-app}) overlap only in the node $A$. This inflation is
constructed as follows: All links from $B$ to $C$ are rewired from $B$ to $C'$
and all links from  $D$ to $C$ are rewired from $D$ to $C''$. This is shown 
schematically in Supplementary Fig.~\ref{fig:trianglerho2}. 

\begin{figure}[h!] 
    \centering
    \includegraphics[scale=.2]{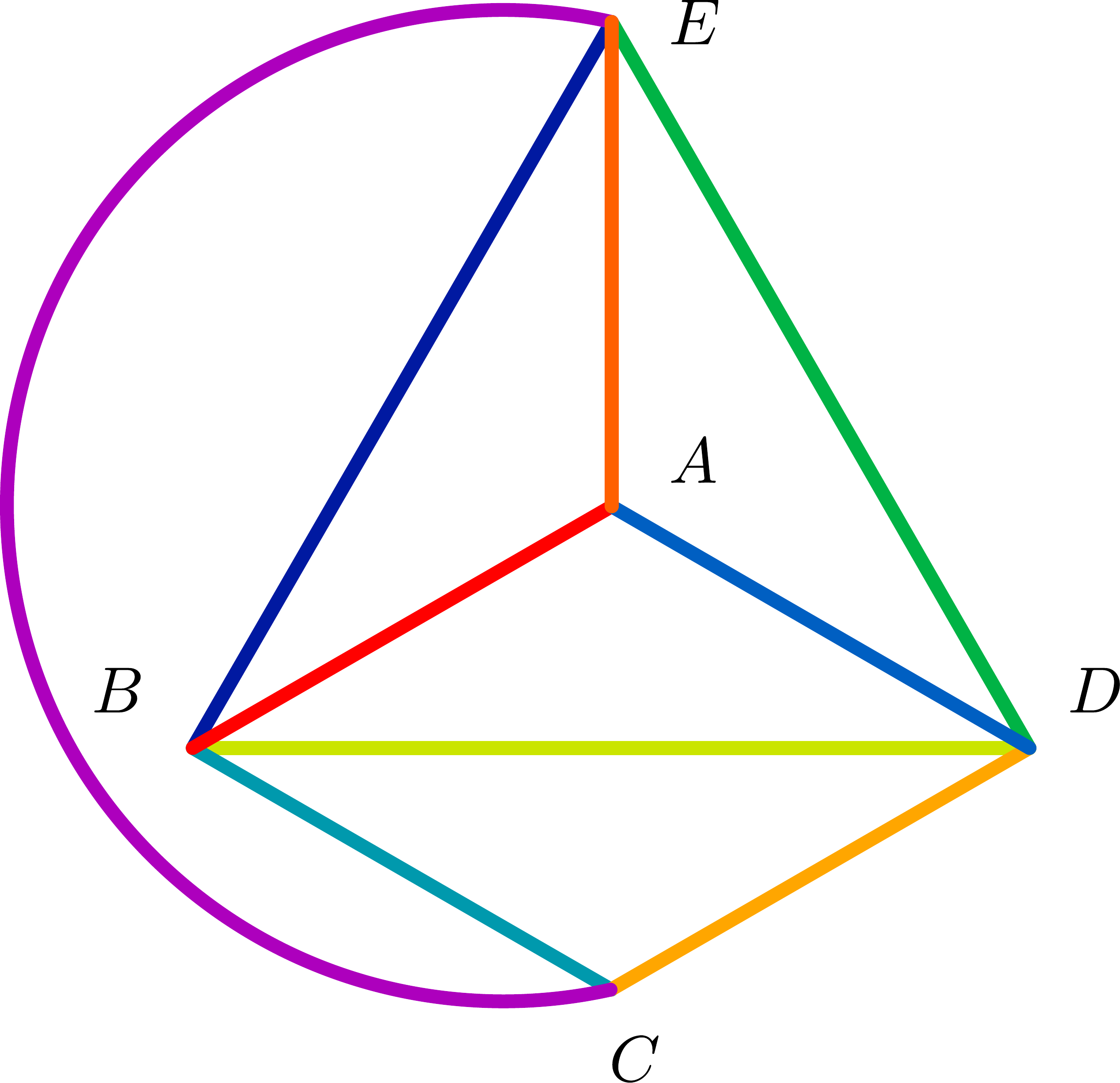}
    \hspace{1cm}
    \includegraphics[scale=.2]{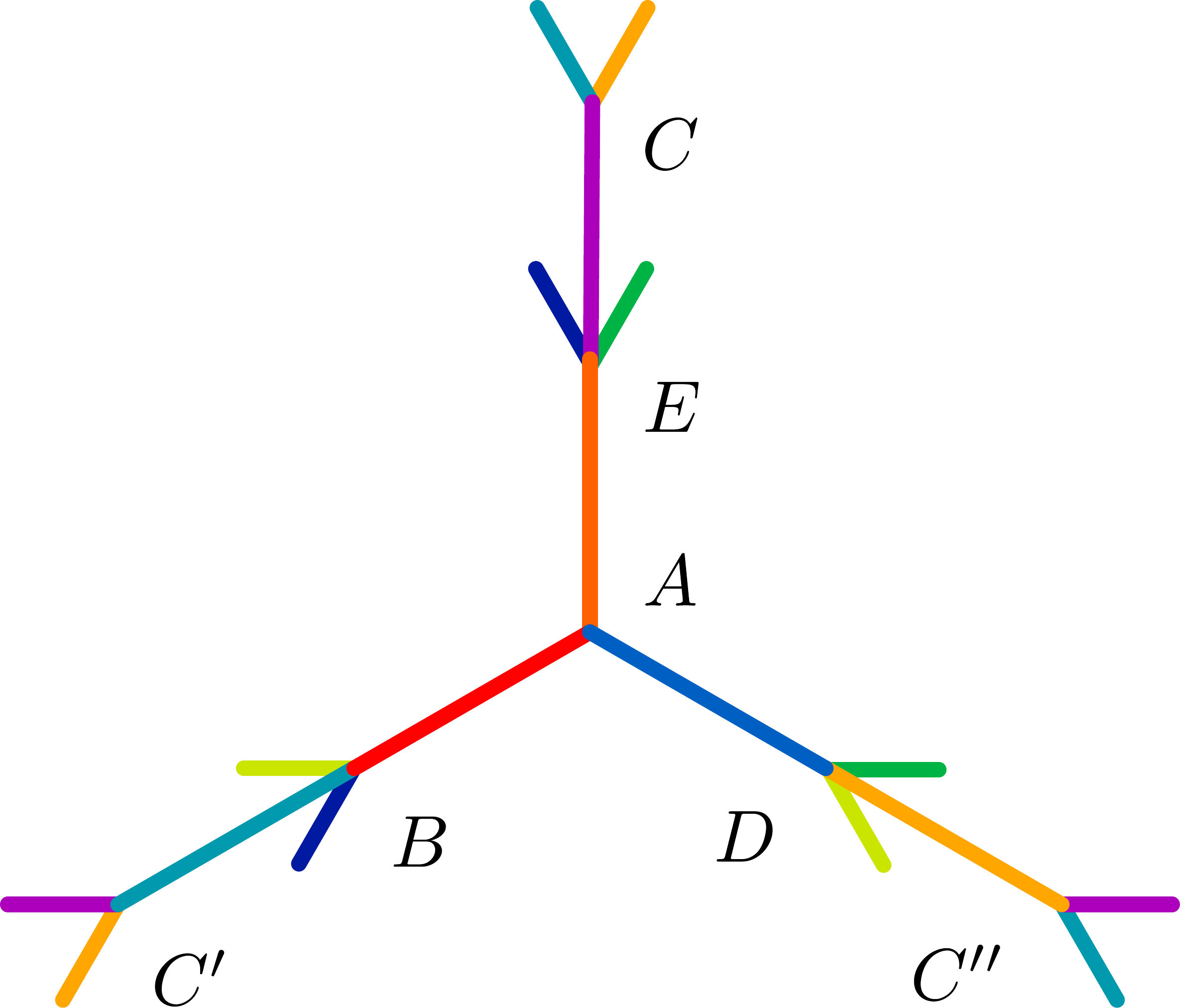}
    \caption{\textit{Left: }A typical graph for a network where the link $AC$ is missing. \textit{Right: }Schematic view of the third order inflation. See text for further details.
    }
    \label{fig:trianglerho2}
\end{figure}
The anticommuting relations imply that
\begin{equation}\label{eq:meantri}
    \mean{X_AX_CP_E}_\xi^2 + \mean{Y_AY_{C'}P_B}_\xi^2 + \mean{Z_AZ_{C''}P_D}_\xi^2 \le 1.
\end{equation}
By comparing the marginals of $\varrho $ and $\xi$, we have
\begin{align}
    &\mean{X_AX_CP_E}_\xi = \mean{X_AX_CP_E}_\varrho,\\
    &\mean{Y_AY_CP_B}_\xi = \mean{Y_AY_CP_B}_\varrho,\\
    &\mean{Z_AZ_CP_D}_\xi = \mean{Z_AZ_CP_D}_\varrho.
\end{align}
By substituting the mean values with state $\xi$ by the ones with $\varrho$ in Eq.~\eqref{eq:meantri}, we complete our proof.
\end{proof}

\muell{
\begin{figure} 
    \centering
    \includegraphics[scale=.2]{ColorTry4.pdf}
    \caption{Schematic view of the third order inflation $\xi$ of the network in 
    Supplementary Fig.~\ref{fig:trianglerho2}. See text for further details.}
    \label{fig:trianglelambda}
\end{figure}
}

Finally, we give a simple example where this criterion
can detect the functionality of a link, while a simple
concentration on the reduced two-qubit density matrix does
not work. Consider the state
\begin{equation}
    \varrho = \frac{1}{2} (\ket{s_1}\bra{s_1} + \ket{s_2}\bra{s_2}),
\end{equation}
where 
\begin{align}
    &\ket{s_1} = (\ket{00}_{AC} + \ket{11}_{AC})\otimes \ket{00}_{BD},\\
    &\ket{s_2} = (\ket{00}_{AC} - \ket{11}_{AC})\otimes \ket{11}_{BD}.
\end{align}
Here, we want to check whether the link $AC$ works or not. Since the reduced state on $AC$ is separable, we cannot use the criteria which acts only on $AC$. It is easy to verify that 
\begin{equation}
    \begin{split}
         &\mean{X_A \id_{B} X_C \id_{D}} = 0, \\ -&\mean{Y_A \id_{B} Y_C Z_D} = \mean{Z_A \id_{B} Z_C \id_{D}} = 1,
    \end{split}
\end{equation}
which violates Eq.~\eqref{eq:meantri}. Hence, our criteria can detect the link 
$AC$ more effectively.

\twocolumngrid


\begin{thebibliography}{99}

\bibitem{Kimble2008} 
H. J. Kimble. The quantum internet. \href{\doibase 10.1038/nature07127}{Nature \textbf{453}, 1023 (2008)}.

\bibitem{Simon2017}%
{C.}~{Simon}. Towards a global quantum network. \href{\doibase 10.1038/s41566-017-0032-0}{Nat. Phot. \textbf{11}, {678} ({2017})}.
  
\bibitem{Wehner2018} S. Wehner, D. Elkouss, and R. Hanson. Quantum internet: A vision for the road ahead. \href{\doibase 10.1126/science.aam9288}{Science \textbf{362}, 9288 (2018)}.

\bibitem{Biamonte2019} 
J. Biamonte, M. Faccin, and M. De Domenico. Complex networks from classical to quantum. \href{https://doi.org/10.1038/s42005-019-0152-6}{Commun. Phys. \textbf{2}, 53 (2019)}.
  
\bibitem{Yin2017} J. Yin, Y. Cao, Y.-H. Li, et al. Satellite-based entanglement distribution over 1200 kilometers. \href{\doibase 10.1126/science.aan3211}{Science \textbf{356}, 1140  (2017)}.

\bibitem{Liao2017} S.-K. Liao, W.-Q. Cai, J. Handsteiner, et al. Satellite-Relayed Intercontinental Quantum Network. \href{\doibase 10.1103/PhysRevLett.120.030501}{Phys. Rev. Lett. \textbf{120}, 030501  (2017)}.

\bibitem{Liao2018} S.-K. Liao, W.-Q. Cai, W.-Y. Liu, et al. Satellite-to-ground quantum key distribution. \href{\doibase 10.1038/nature23655}{Nature \textbf{549}, 43  (2018)}.

\bibitem{Yin2020}
J. Yin, Y.-H. Li, S.-K. Liao, M. Yang, Y. Cao, L. Zhang, J.-G. Ren, W.-Q. Cai, W.-Y. Liu, S.-L. Li, et al. Entanglement-based secure quantum cryptography over 1,120 kilometres. \href{https://doi.org/10.1038/s41586-020-2401-y}{Nature \textbf{582}, 501 (2020)}.

\bibitem{Humphreys2018}
P. C. Humphreys, N. Kalb, J. P. Morits, R. N. Schouten, R. F. Vermeulen, D. J. Twitchen, M. Markham, and R. Hanson. Deterministic delivery of remote entanglement on a quantum network. \href{https://doi.org/10.1038/s41586-018-0200-5}{Nature \textbf{558}, 268 (2018)}.

\bibitem{Stephenson2020} L. J. Stephenson, D. P. Nadlinger, B. C. Nichol, S. An, P. Drmota, T. G. Ballance, K. Thirumalai, J. F. Goodwin, D. M. Lucas, and C. J. Ballance. High-Rate, High-Fidelity Entanglement of Qubits Across an Elementary Quantum Network. \href{\doibase 10.1103/PhysRevLett.124.110501}{Phys. Rev. Lett. \textbf{124}, 110501  (2020)}.

\bibitem{vanLoock} E. Shchukin, F. Schmidt, and P. van Loock. Waiting time in quantum repeaters with probabilistic entanglement swapping. \href{\doibase 10.1103/PhysRevA.100.032322}{Phys. Rev. A \textbf{100}, 032322 (2019)}.

\bibitem{khatri} S. Khatri, C. T. Matyas, A. U. Siddiqui, J. P. Dowling. Practical figures of merit and thresholds for entanglement distribution in quantum networks. \href{\doibase 10.1103/PhysRevResearch.1.023032}{Phys. Rev. Research \textbf{1}, 023032 (2019)}.

\bibitem{Omar21.1} B. C. Coutinho, W. J. Munro, K. Nemoto, Y. Omar. Robustness of Noisy Quantum Networks. \href{https://arxiv.org/abs/2103.03266}{arXiv:2103.03266}.

\bibitem{Omar21.2} L. Bugalho, B. C. Coutinho, Y. Omar. Distributing Multipartite Entanglement over Noisy Quantum Networks. \href{https://arxiv.org/abs/2103.14759}{arXiv:2103.14759}.


\bibitem{Chaves2014}
R. Chaves, L. Luft, T. O. Maciel, D. Gross, D. Janzing, and B. Schölkopf, Proceedings of the 30th Conference on
Uncertainty in Artificial Intelligence (2014), p. 112.

\bibitem{Chaves2015}
R. Chaves, C. Majenz, and D. Gross. Information–theoretic implications of quantum causal structures. \href{\doibase 10.1038/ncomms6766}{Nature Commun. \textbf{6}, 5766 (2015)}.

\bibitem{Wolfe2019}
E. Wolfe, R. W. Spekkens, and T. Fritz. The Inflation Technique for Causal Inference with Latent Variables. \href{\doibase 10.1515/jci-2017-0020}{J. Causal Inference {\bf 7}, 20170020  (2019)}.

\bibitem{Branciard2010}
C. Branciard, N. Gisin, and S. Pironio. Characterizing the Nonlocal Correlations Created via Entanglement Swapping. \href{https://doi.org/10.1103/PhysRevLett.104.170401}{Phys. Rev. Lett. \textbf{104}, 170401 (2010)}.

\bibitem{Branciard2012}
C. Branciard, D. Rosset, N. Gisin, and S. Pironio.
Bilocal versus nonbilocal correlations in entanglement-swapping experiments.
\href{https://doi.org/10.1103/PhysRevA.85.032119}{Phys. Rev. A \textbf{85}, 032119 (2012)}.

\bibitem{Fritz2012}
T. Fritz.
Beyond Bell's theorem: correlation scenarios.
\href{https://doi.org/10.1088/1367-2630/14/10/103001}{New J. Phys. \textbf{14}, 103001 (2012)}.

\bibitem{Rosset2016}
D. Rosset, C. Branciard, T. J. Barnea, G. P\"{u}tz, N. Brunner, and N. Gisin.
Nonlinear Bell Inequalities Tailored for Quantum Networks.
\href{https://doi.org/10.1103/PhysRevLett.116.010403}{Phys. Rev. Lett. \textbf{116}, 010403 (2016)}.

\bibitem{Renou2019}
M.-O. Renou, E. B\"aumer, S. Boreiri, N. Brunner, N. Gisin, and S. Beigi. Genuine Quantum Nonlocality in the Triangle Network. \href{\doibase 10.1103/PhysRevLett.123.140401}{Phys. Rev. Lett. \textbf{123}, 140401 (2019)}.

\bibitem{Gisin2020}
N. Gisin, J. D. Bancal, Y. Cai, et al.. Constraints on nonlocality in networks from no-signaling and independence. \href{https://doi.org/10.1038/s41467-020-16137-4}{Nat. Commun. {\bf 11}, 2378 (2020)}. 


\bibitem{deVicente21} 
P. Contreras-Tejada, C. Palazuelos, J. I. de Vicente. Genuine multipartite entanglement in noisy quantum networks highly depends on the topology. \href{https://arxiv.org/abs/2106.04634}{arXiv:2106.04634 (2021)}.

\bibitem{Kraft2020triangle} T. Kraft, S. Designolle, C. Ritz, N. Brunner, O. G\"uhne, and M. Huber. Quantum entanglement in the triangle network. \href{\doibase 10.1103/PhysRevA.103.L060401}{Phys. Rev. A. \textbf{103}, L060401 (2021)}.

\bibitem{Luo2020} M.-X. Luo. New Genuinely Multipartite Entanglement. \href{\doibase 10.1002/qute.202000123}{Adv. Quantum Technol., {\bf 4}, 2000123 (2021)}.

\bibitem{Navascues2020} M. Navascués, E. Wolfe, D. Rosset, and A. Pozas-Kerstjens. Genuine Network Multipartite Entanglement. \href{\doibase 10.1103/PhysRevLett.125.240505}{Phys. Rev. Lett. \textbf{125}, 240505 (2020)}.

\bibitem{Aberg2020} J. Åberg, R. Nery, C. Duarte, and R. Chaves. Semidefinite Tests for Quantum Network Topologies. \href{\doibase 10.1103/PhysRevLett.125.110505}{Phys. Rev. Lett. \textbf{125}, 110505 (2020)}.


\bibitem{Kraft2020cm} T. Kraft, C. Spee, X.-D. Yu, and O. G\"uhne. Characterizing quantum networks: Insights from coherence theory. \href{\doibase 10.1103/PhysRevA.103.052405}{Phys. Rev. A \textbf{103}, 052405 (2021)}.

\bibitem{Luo2021}
{M.-X. Luo,} Network configuration theory for all networks. \href{https://arxiv.org/abs/2107.05846}{arXiv:2107.05846 (2021)}.


\bibitem{deVicente21a} 
P. Contreras-Tejada, C. Palazuelos, J. I. de Vicente. Genuine Multipartite Nonlocality Is Intrinsic to Quantum Networks. \href{https://doi.org/10.1103/PhysRevLett.126.040501}{Phys. Rev. Lett. {\bf 126}, 040501 (2021)}.

\bibitem{WolfePRX}
E. Wolfe, A. Pozas-Kerstjens, M. Grinberg, D. Rosset, A. Acín, and M. Navascués. Quantum Inflation: A General Approach to Quantum Causal Compatibility. \href{\doibase 10.1103/PhysRevX.11.021043}{Phys. Rev. X \textbf{11}, 021043 (2021)}.




\bibitem{Coleman} S. Coleman, Aspects of Symmetry, \href{\doibase 10.1017/CBO9780511565045}{Cambridge University Press, Massachusetts (1985)}.

\bibitem{Werner1989}
R. F. Werner. Quantum states with Einstein-Podolsky-Rosen correlations admitting a hidden-variable model. \href{https://doi.org/10.1103/PhysRevA.40.4277}{Phys. Rev. A \textbf{40}, 4277 (1989)}.

\bibitem{Vollbrecht2001}
K. G. H. Vollbrecht and R. F. Werner. Entanglement measures under symmetry. \href{\doibase 10.1103/PhysRevA.64.062307}{Phys. Rev. A \textbf{64}, 062307 (2001)}.

\bibitem{Eckert2002}
K. Eckert, J. Schliemann, D. Bru\ss, and M. Lewenstein. Quantum Correlations in Systems of Indistinguishable Particles. \href{\doibase 10.1006/aphy.2002.6268}{Ann. Phys. \textbf{299}, 88 (2002)}.

\bibitem{Caves2002}
C. M. Caves, C. A. Fuchs, and R. Schack. Unknown quantum states: The quantum de Finetti representation. \href{\doibase 10.1063/1.1494475}{J. Math. Phys. \textbf{43}, 4537 (2002)}.

\bibitem{Doherty2002}
A C. Doherty, P. A. Parrilo, and F. M. Spedalieri. Distinguishing Separable and Entangled States. \href{\doibase 10.1103/PhysRevLett.88.187904}{Phys. Rev. Lett. \textbf{88}, 187904 (2002)}.

\bibitem{Toth2009}
G. T\'oth and O. G\"uhne. Entanglement and Permutational Symmetry. \href{\doibase 10.1103/PhysRevLett.102.170503}{Phys. Rev. Lett. \textbf{102}, 170503 (2009)}.

\bibitem{Eltschka2012}
C. Eltschka and J. Siewert. Entanglement of Three-Qubit Greenberger-Horne-Zeilinger–Symmetric States. \href{\doibase 10.1103/PhysRevLett.108.020502}{Phys. Rev. Lett. \textbf{108}, 020502 (2012)}.


\bibitem{Seevinck2009}
O. G\"uhne and M. Seevinck. Separability criteria for genuine multiparticle entanglement. \href{\doibase 10.1088/1367-2630/12/5/053002}{New J. Phys. \textbf{12}, 053002 (2010)}.

\bibitem{Wehner2008}
S. Wehner and A. Winter. Higher entropic uncertainty relations for anti-commuting observables. \href{https://doi.org/10.1063/1.2943685}{J. Math. Phys. \textbf{49}, 062105 (2008)}.

\bibitem{Niekamp2012}
S. Niekamp, M. Kleinmann, and O. G\"uhne. Entropic uncertainty relations and the stabilizer formalism. \href{https://doi.org/10.1063/1.3678200}{J. Math. Phys. \textbf{53}, 012202 (2012)}.

\bibitem{Toth2005}
G. T\'oth and O. G\"uhne. Entanglement detection in the stabilizer formalism.
\href{https://doi.org/10.1103/PhysRevA.72.022340}{Phys. Rev. A \textbf{72}, 022340 (2005)}.

\bibitem{Gottesman}
D. Gotteman. Stabilizer Codes and Quantum Error Correction.
\href{https://arxiv.org/abs/quant-ph/9705052}{PhD Thesis, Caltech (1997)}.

\bibitem{Hein2006}
M. Hein, W. Dür, J. Eisert, R. Raussendorf, M. Van den Nest, and H.-J. Briegel, 
{in {\it Quantum Computers, Algorithms and Chaos}, Proceedings of the International 
School of Physics “Enrico Fermi,” Vol. 162, Varenna, 2005, edited by G. Casati, D. L. Shepelyansky, P. Zoller, and G. Benenti (IOS Press, Amsterdam, 2006). See also \href{https://arxiv.org/abs/quant-ph/0602096}{arXiv:quant-ph/0602096}.}

\bibitem{Audenaert2005}
{K. Audenaert and M. B Plenio. Entanglement on mixed stabilizer states: normal forms and reduction procedures.}
\href{https://doi.org/10.1088/1367-2630/7/1/170}{New J. Phys. \textbf{7}, 170 (2005)}.

\bibitem{Kraus2003}
{{B. Kraus,} PhD Thesis, University of Innsbruck (2003).}

\bibitem{Tran2018}
M. C. Tran, R. Ramanathan, M. McKague, D. Kaszlikowski, and T. Paterek. Bell monogamy relations in arbitrary qubit networks.
\href{https://doi.org/10.1103/PhysRevA.98.052325}{Phys. Rev. A {\bf 98}, 052325 (2018)}.

\bibitem{Kurzynski2011}
P. Kurzyński, T. Paterek, R. Ramanathan, W. Laskowski, and D. Kaszlikowski. Correlation Complementarity Yields Bell Monogamy Relations.
\href{https://doi.org/10.1103/PhysRevLett.106.180402}{Phys. Rev. Lett. {\bf 106}, 180402 (2011)}.

\bibitem{Raussendorf2001}
R. Raussendorf and H. J. Briegel. A One-Way Quantum Computer. \href{https://doi.org/10.1103/PhysRevLett.86.5188}{Phys. Rev. Lett. \textbf{86}, 5188 (2001)}.

\bibitem{Kitaev2003}
A. Kitaev. Fault-tolerant quantum computation by anyons.
\href{https://doi.org/10.1016/S0003-4916(02)00018-0}{Annals Phys. \textbf{303}, 2 (2003)}.

\bibitem{Yao2012}
X.-C. Yao, T.-X. Wang, H.-Z. Chen, et. al. Experimental demonstration of topological error correction.
\href{https://doi.org/10.1038/nature10770}{Nature \textbf{482}, 489–494 (2012)}.

\bibitem{Nest2004}
M. Van den Nest, J. Dehaene, and B. De Moor. Graphical description of the action of local Clifford transformations on graph states.
\href{https://doi.org/10.1103/PhysRevA.69.022316}{Phys. Rev. A {\bf 69}, 022316 (2004)}.

\bibitem{Hein2004}
M. Hein, J. Eisert, and H. J. Briegel. Multiparty entanglement in graph states.
\href{https://doi.org/10.1103/PhysRevA.69.062311}{Phys. Rev. A \textbf{69}, 062311 (2004)}.

\bibitem{Adan2009}
{A. Cabello, A.J. Lopez-Tarrida, P. Moreno, and J.R. Portillo. Compact set of invariants characterizing graph states of up to eight qubits.}
\href{https://doi.org/10.1103/PhysRevA.80.012102}{Phys. Rev. A {\bf 80}, 012102 (2009)}.

\bibitem{Cabello2011}
A. Cabello, L. E. Danielsen, A. J. López-Tarrida, and J. R. Portillo. Optimal preparation of graph states.
\href{https://doi.org/10.1103/PhysRevA.83.042314}{Phys. Rev. A \textbf{83}, 042314 (2011)}.

\bibitem{Danielsen2011}
{L.E. Danielsen,} 
\href{http://www.ii.uib.no/~larsed/entanglement/}{Database of Entanglement in Graph States (2011)}, see also {\tt http://www.ii.uib.no/\~{}larsed/entanglement/}.

\bibitem{Epping2016}
M. Epping, H. Kampermann, and D. Bruß. Large-scale quantum networks based on graphs.
\href{\doibase 10.1088/1367-2630/18/5/053036}{New J. Phys. \textbf{18}, 053036 (2016)}.

\bibitem{Epping2017}
M. Epping, H. Kampermann, C. Macchiavello, and D. Bruß. Multi-partite entanglement can speed up quantum key distribution in networks.
\href{\doibase 10.1088/1367-2630/aa8487}{New J. Phys. \textbf{19}, 093012 (2017)}.

\bibitem{Spee2021}
C. Spee and T. Kraft. Transformations in quantum networks via local operations assisted by finitely many rounds of classical communication. \href{https://arxiv.org/abs/2105.01090}{arXiv:2105.01090 (2021)}.


\end{thebibliography}
\end{document}